\documentclass[a4paper]{amsart}


\usepackage[english]{babel} 
\usepackage{hyperref} 
\usepackage{graphicx,color,cancel}  
\usepackage[dvipsnames]{xcolor} 

\usepackage{amsmath,amssymb,amsfonts,amsthm} 
\usepackage{mathtools,comment,todonotes} 
\usepackage{tikz-cd}  
\usepackage[all,cmtip]{xy} 
\usepackage[bbgreekl]{mathbbol} 


\mathtoolsset{showonlyrefs}


\theoremstyle{definition} 
	\newtheorem{definition}{Definition}
	\newtheorem{remark}[definition]{{Remark}}

\theoremstyle{plain} 
	\newtheorem{theorem}[definition]{Theorem}



\usepackage[bibstyle=alphabetic,citestyle=alphabetic,backend=biber,giveninits=true, maxbibnames=5, sorting=nyt, sortcites=true, url=false, isbn=false]{biblatex}
\renewbibmacro{in:}{} 
\usepackage{csquotes} 
\bibliography{bibliography}





\newcommand{\intl}{\int\limits}

\renewcommand{\th}{\widetilde{h}}

\title{BV analysis of Polyakov and Nambu--Goto theories with boundary}
\author{S. Martinoli}
\address{Institute for Theoretical Physics, ETH Zurich, Wolfgang Pauli strasse 27, 8092, Z\"urich, Switzerland}

\curraddr{CAMGSD, Departamento de Matem\'atica, Instituto Superior Tecnico, Universidade de Lisboa, 1049-001 Lisboa, Portugal}

\email{sebastiano.martinoli@tecnico.ulisboa.pt}

\author{M. Schiavina}
\address{Institute for Theoretical Physics, ETH Zurich, Wolfgang Pauli strasse 27, 8092, Z\"urich, Switzerland and Department of Mathematics, ETH Zurich, R\"amistrasse 101, 8092, Z\"urich, Switzerland }
\email{micschia@ethz.ch}
\date{}

\thanks{This research was (partly) supported by the NCCR SwissMAP, funded by the Swiss National Science Foundation. }

\begin{document}

\begin{abstract}
    The Batalin--Vilkovisky data for Polyakov string theory on a manifold with (non-null) boundary is shown to induce compatible Batalin--Fradkin--Vilkovisky data, thus allowing BV-quantisation on manifolds with boundary. On the other hand, the analogous formulation of Nambu--Goto string theory fails to satisfy the needed regularity requirements. As a byproduct, a concise description is given of the reduced phase spaces of both models and their relation, for any target $d$-dimensional Lorentzian manifold.
\end{abstract}

\maketitle
\tableofcontents

\section{Introduction}
Classical mechanics can be seen as a simple and instructive example of a Lagrangian field theory describing a particle moving in some reference manifold. The theory can be formulated in such a way that the variational problem becomes reparametrisation invariant, and critical configurations are unparametrised geodesics in the target manifold. This version of classical mechanics is often called Jacobi theory and it embodies Maupertuis' principle \cite{Jacobi}.

As it is known, it is possible to recover classical mechanics by means of a more sophisticated coupling of a scalar with a one-dimensional gravitational theory (see \cite{CS2016a} for a recent account pertinent to this paper). This requires the introduction of a dynamical metric on the source (the ``worldline''), which will determine the behaviour of the particle's trajectory in the target. Both Jacobi theory and 1d gravity with scalar matter are reparametrisation invariant, and in physics' parlance we say that they are \emph{classically equivalent}, i.e.\ the two theories describe the same moduli space of solutions modulo (local) symmetry.

This picture allows a straightforward generalisation to extended objects moving in a reference manifold. The field theory one obtains by directly generalising minimal curves in a manifold is called Nambu--Goto string theory \cite{Nambu,Goto}: a Sigma model where the source manifold (the ``worldsheet'') is a two-dimensional differentiable manifold, and the target is some N-dimensional Lorentzian manifold.  It yields a variational problem for minimally-embedded 2d Lorentzian submanifolds, which can be seen as the evolution of 1d extended objects --- called strings. The two-dimensional version of a gravitational field coupling with scalars is instead known as Polyakov theory \cite{BrinkdiVecchia, POLYAKOV1981207}, which is also invariant under source diffeomorphisms, but exhibits an additional symmetry, corresponding to the rescalings of the dynamical metric. 

Similarly to the 1d scenario, Polyakov theory and Nambu--Goto theories are classically equivalent: one can solve the equations of motion for the dynamical metric and, upon restricting the theory to that partial critical locus, one recovers Nambu--Goto theory. It is important to observe, at this stage, that this equivalence only holds when one restricts all possible maps to embeddings into the target manifold\footnote{See the conditions required in \cite{CS2017} in 1d, and the comment on singular configurations for Nambu--Goto theory in \cite{BRZ}.}.

In  order to quantise a field theory that admits local symmetries, one can conveniently phrase the model within a cohomological setting, with the intent of describing its moduli space of solutions by means of the cohomology of an appropriate complex. This is the philosophy of the Batalin--Vilkovisky (BV) formalism \cite{BV1,BV2}.

A number of recent applications, however, suggest that this picture might be incomplete, and that a satisfactory description of a quantum field theory requires a detailed treatment of certain cohomological data induced on codimension $1$ (and in principle higher) hypersurfaces. The main scenario here is given by boundaries: although, instead of fixing a particular field configuration at the boundary, we consider the induced data as structural information, essential for determining both classical and quantum behaviour of the model.

It is clear that boundaries host crucial physical information for a field theory, since that is where the Phase Space of the theory naturally resides\footnote{Here we do not distinguish between Cauchy surfaces and time-like boundaries, because that requires the specification of a background metric. The assignment of boundary data that we are after, instead, is functorial and transcends the request that initial values can be actually extended to solutions in the bulk.}. Following the geometric approach of Kijowski and Tulczyjew \cite{KT1979}, to a classical field theory on a manifold with boundary $(M,\partial M)$ we associate a symplectic manifold of classical boundary fields $(F^\partial,\omega^\partial)$ and we look at a coisotropic submanifold $C\subset F^\partial$ describing boundary configurations that can be extended to a solution of the bulk equations of motion for a small enough cylinder $\partial M \times [0,\epsilon)$. Usually, the submanifold $C$ is the vanishing locus of a set of \emph{first class} constraints, i.e.\ functions $\{\phi_i\}$ in involution. For example, in Yang--Mills theory this is given by (generalised) Gauss' Law, while for General Relativity these are called ``Hamiltonian and momentum constraints'' \cite{deWitt,Fischer-MarsdenGR} (see also \cite{CS2016b,CS2019}). When this is the case, the coisotropic submanifold $C=\mathrm{Zero}\{\phi_i\}$ describes the \emph{Reduced Phase Space} of the system, which is defined as the (symplectic) reduction\footnote{This is $C$ reduced by its characteristic foliation. Observe that $C$ need not necessarily be the zero of an equivariant moment map.} $\underline{C}$.

If the bulk theory is formulated in the BV formalism, one can apply a similar procedure to induce, from the bulk data, a cochain complex associated to the boundary submanifold. When this induction procedure is unobstructed, the end result is a cohomological resolution of $\underline{C}$, i.e.\ a complex whose cohomology in degree $0$ is the space of functions on the reduced phase space \cite{CMR2012}. We call this the BFV data, after Batalin, Fradkin and Vilkovisky \cite{BV3,Schaetz:2008,Stasheff1997}. The collection of the BV and BFV data, together with the chain map linking the two, is called a BV-BFV pair and it is the starting point for cohomological quantisation of field theories with local symmetries on manifolds with boundary, as proposed in \cite{CMR2}. When a BV-theory admits a BV-BFV pair, it is called $1$-extendable.

It has been recently shown that there are several important examples of classically equivalent theories where only one of the two equivalent models is $1$-extendable to a BV-BFV theory. For example, while 1d gravity with matter is $1$-extendable, Jacobi theory is not \cite{CS2016a}. Remarkably, this is the case also for certain formulations of gravity in dimension $4$ and higher \cite{ScTH,CS2016b,CS2017}. The obstruction for a BV theory to be $1$-extendable is the regularity of the kernel of a natural closed two-form on the space of restrictions of fields to the boundary.

In this article we compute the reduced phase spaces of both Polyakov and Nambu--Goto theories and show how they are symplectomorphic to each other, whenever the boundary of the worldsheet is non-null. This result holds for any target Lorentzian manifold, and thus provides a general description of the phase space of string theory.

Then, we move on to show that Polyakov theory is $1$-extendable when phrased in the BV formalism and explicitly derive its BV-BFV structure. On the other hand, we show that the same  natural procedure fails in the case of the Nambu--Goto theory, which is thus not $1$-extendable. This result strengthens the argument in favour of a refined notion of equivalence of field theories on manifolds with boundaries, in view of quantisation. 

On a closed manifold without boundary, one can phrase equivalence of theories in terms of quasi-isomorphisms of BV complexes. It is often argued that classically equivalent theories such as Polyakov and Nambu--Goto (or their analogue 1d gravity and Jacobi theory studied in \cite{CS2017,CaCaSc}), which only differ by what is often called \emph{auxiliary fields content},\footnote{Here the metric is seen as an auxiliary field.} have quasi-isomorphic BV complexes \cite{Henn}. This comparison is extended ``to the boundary'' following \cite{BBH}, in the sense that one looks at the BV complex on local forms, where the BV differential is enriched by the de Rham differential\footnote{To be more precise, this is the Horizontal differential on local forms.}, and the two theories have quasi-isomorphic BV de Rham complexes. The crucial observation, however, is that the ``extension to the boundary'' in the sense of \cite{BBH} might fail to grasp the regularity requirements of a BV-BFV pair\footnote{For example, we require BFV data to be symplectic, while the naive boundary data obtained by restriction is a priori only pre-symplectic.}.

One can phrase the problem in the following sense. BV-deRham equivalence is a statement about the existence of a quasi-isomorphism preserving the cohomology classes of the relevant BV data (see \cite[Definitions 2.3.1 and 2.6.3]{CaCaSc}, where this relation is termed ``lax equivalence''). However, said quasi isomorphism need not preserve the regularity condition required for a strict BV-BFV pair to exist. As a consequence, a particular representative in a BV equivalence class might not admit an extension to the boundary in the sense of \cite{CMR2012}.

For example, in the 1d scenario, it was shown in \cite{CaCaSc} that there exists an explicit chain homotopy that yields an equivalence of the BV complexes of Jacobi theory and 1d gravity with matter. This BV-chain homotopy extends ``to the boundary'' as BV-de Rham equivalence in the sense of \cite{BBH} (or lax equivalence following \cite{CaCaSc}), meaning that it preserves the BV deRham class of the defining BV data. However, it sends an extendable BV theory (1d gravity) to a non extendable one (Jacobi theory), meaning that BV chain homotopies do not to preserve the regularity requirements needed to have a BV--BFV pair.\footnote{Compare this with the notion of ``strictification'' of a BV theory with boundary in \cite[Definition 12]{MSW2019}.} Hence, this provides a ``best case scenario'' example of two theories that are classically equivalent with quasi-isomorphic BV (-de Rham) complexes, but such that their BV-BFV behaviour differs significantly. The way we interpret this fact, following the observations in \cite{CaCaSc}, is that among BV-equivalent models one needs to find a representative that is 1-extendable, i.e.\ such that it will induce a BV-BFV pair.

In this paper we show that the same discrepancy in BV-BFV extension arises when comparing Polyakov and Nambu--Goto theories. Even assuming that the arguments of \cite{BBH} can be used,  or some other argument is found, to show that the respective BV complexes are quasi-isomorphic\footnote{We observe that according to \cite{BRZ} the (classical and quantum) BV cohomologies of Polyakov and Nambu--Goto theory have been shown to differ already without considering boundaries. This result is in apparent contradiction with the general arguments of \cite{Henn,BBH} and, to the best of our knowledge, this issue has not been resolved.}, when looked through the lens of the BV-BFV formalism, the theories differ.

In order to clarify the bulk-to-boundary behaviour of the two models, we present a detailed analysis of the reduced phase spaces for Polyakov and Nambu--Goto theories, and prove that they are symplectomorphic. This shows that one can consider a singular \emph{abstract} reduced phase space for 2d string theory, which can be represented by either Polyakov or Nambu--Goto theories. However, similarly to the 1d case, a discrepancy emerges when attempting to construct a BV-BFV pair, meaning that different choices of a theory representing the moduli space might have different properties. 

We interpret this statement by saying that the abstract theory of bosonic strings has one realisation --- Polyakov theory --- which admits a strict BV-BFV description and lends itself to quantisation with boundary. On the contrary, Nambu--Goto theory is not a good presentation of string theory for this purpose.

Let us stress that the bulk-boundary induction procedure we employ here is natural, and that there is value in identifying those BV theories that are naturally $1$-extendable. For theories that do not admit $1$-extension (known so far are Palatini--Cartan gravity in $d\geq4$ \cite{ScTH,CS2017}, Plebanski gravity \cite{ScTH} and Jacobi Theory \cite{CS2016a}) the only known workarounds to the obstruction to extendability involve restricting the available configurations. In the best case scenario, this means choosing another representative of the theory in the same BV-equivalence class, with better extendability properties.

It is not possible to exclude that one may construct a BV-BFV pair by means other than the procedure described here. For example, imposing certain boundary conditions might improve the extendability of a theory, although one is not guaranteed to get the right BFV theory, as was shown in \cite[Section 5]{CS2019}, for Palatini gravity. Another well-proven method for generating BV-BFV pairs employs the AKSZ construction,\footnote{After Alexandrov, Kontsevich, Schwarz and Zaboronski \cite{AKSZ} and the extension by Grigoriev and Damgaard \cite{GrigorievDamgaard}.} however this is bound to fail for Nambu--Goto theory, given that its BFV data vanishes\footnote{To understand this statement we refer to the following section ``literature overview'', and Theorem \ref{thm:NGRPS}. The problem is that Nambu--Goto theory does not truly admit a set of constraints.}. Asymptotic falloff conditions are another valid guess (see \cite{RejznerSchiavina21}), and ultimately one could even change the BV quantisation prescription.\footnote{One promising attempt would be that of changing the canonical BV symplectic form to include boundary terms, although this has not been thoroughly studied yet.} 
However, for a single given theory that is not $1$-extendable with the boundary-induction procedure we employ here, the obstruction to $1$-extension we discuss here is sufficient to void its eligibility as a candidate for quantisation in the presence of boundaries, without taking further precautions into account.

At this stage, it is not clear whether a ``boundary-compatible presentation'' of the moduli spaces of the theory can be found within the BV-equivalence class of a given model, but since our focus here is on two theories that are known to be classically-equivalent, the logic is reversed: we claim that Polyakov theory is indeed a better presentation of the abstract theory of bosonic strings, due to its BV-BFV behaviour.

\subsection*{Literature overview}
The purpose of this paper is twofold. On the one hand we analyse both Polyakov and Nambu--Goto strings within the symplectic approach of Kijowski and Tulczyjew \cite{KT1979}, to the effect of describing and relating their reduced phase spaces. This is done in Theorems \ref{THM:PolyakovRPS} and \ref{thm:NGRPS}. The analysis of constraints of Polyakov theory  we perform provides a clean symplectic description of its reduced phase space, which had been analysed, e.g., in \cite[Ch 12.2]{BrinkLars1988Post} and \cite{PhysRevD.34.2430}. Moreover, we show that the usual practice of looking at Nambu--Goto and Polyakov theories as the same constrained Hamiltonian system (see for example \cite[Eq. 4.4]{FujiwaraAnomalous} and \cite[Eq. 2]{FujiwaraNG-BFV}) is justified by the partial (pre-)symplectic reduction presented in Theorem \ref{thm:NGRPS}, which explicitly relates the boundary structure of the two models. 

Since the two theories have equivalent reduced phase spaces (for every target Lorentzian manifold), they are interchangeable. However, a better point of view is perhaps that there is an \emph{abstract} reduced phase space for 2d bosonic string theory, which can be \emph{represented} either by Polyakov or Nambu--Goto theories in the bulk. The choice of one theory over another might yield differences, as highlighted in the second part of this work, where we show that while Polyakov theory admits a BV-BFV description, Nambu--Goto does not (Theorems \ref{thm:BVBFVPolyakov} and \ref{thm:NGNOGO}, respectively). This is also supported by the results of \cite{BRZ}, which pointed at a discrepancy in the observables admitted by the two models\footnote{More precisely, their BV cohomologies have been shown to differ.}.

To compare our results with previous attempts to describe the BFV structure for Nambu--Goto theory, we point once again at Theorem \ref{thm:NGRPS}. Indeed, the BFV data presented in \cite[Eq. 7]{FujiwaraNG-BFV} resolves (the zero locus defined by) a set of constraints $\{\phi_i\}\in C^\infty(T^*C^\infty(\partial M,N))$, defined in the cotangent bundle to maps from the boundary of the worldsheet $M$ to the target $N$. It is in fact equivalent to Equation \eqref{e:PolBoundaryAction} below, after a redefinition of fields. While constraints are natural in Polyakov theory, they arise only after a partial reduction in Nambu--Goto theory\footnote{Symplectic reduction of Polyakov's constrained coisotropic submanifold with respect to its characteristic foliation coincides with the residual presymplectic kernel reduction for Nambu--Goto theory.}, as we explain in Theorem \ref{thm:NGRPS}.

Our results point out that such BFV data, associated to the reduced phase space of a 2d bosonic string, can be induced from (and is compatible with) the bulk BV data associated to Polyakov theory, but not from the bulk BV data associated to Nambu--Goto theory (Theorem \ref{thm:NGNOGO}).

\subsection*{Summary of results}
Polyakov theory is a field theory of maps $X\in C^\infty(M,N)$ and of densitised Lorentzian metrics $\th\in \mathcal{DM}(M,\partial M)$ (see Definitions \ref{def:Polth} and \ref{def:DPR}). It can be formulated as a Batalin--Vilkovisky theory (Definition \ref{def:PolBVth}) on the $(-1)$-symplectic manifold\footnote{We denote fields by $(X,X^\dag)\in T^*[-1]C^\infty(M,N)$, $(\zeta,\zeta^\dag)\in T^*[-1]\mathfrak{X}[1](M)$ and $(\th,\th^\dag)\in T^*[-1]\mathcal{DPR}(M,\partial M)$. See Definitions \ref{def:cotangent} and \ref{def:T^*DPR} for the definition of the cotangent bundles used here.}
\begin{equation}
\left(\mathcal{F}_P =  T^*[-1]\left(\mathcal{DPR}(M,\partial M) \times C^\infty(M,N)\times \mathfrak{X}[1](M)\right), \Omega_P\right).
\end{equation}
with action functional:
\begin{equation}
S_P= S^{\text{cl}}_P + \int_M \langle X^\dag,L_\zeta X\rangle + \frac12 \langle\zeta^\dag,[\zeta,\zeta]\rangle + \langle{\th}^\dag, L_\zeta {\th}\rangle,
\end{equation}

We prove that this BV theory admits an extension to a BV-BFV theory (Definition \ref{def:BVBFV}, Theorem \ref{thm:BVBFVPolyakov}), with BFV data:
\begin{equation}
    \mathcal{F}_P^{\partial} = T^*\left(C^\infty(\partial M, N)\times \mathfrak{X}[1](\partial M)\times C^\infty[1](\partial M)\right)
\end{equation}
with graded $(0)$-symplectic structure\footnote{We denote by $(X,J)\in T^*C^\infty(\partial M, N)$ a vector bundle morphism $T\partial M \to T^*M$ covering a smooth map $X$.}  
\begin{equation}
    \Omega^\partial = \delta \alpha^\partial =   \delta \int_{\partial M} J_\mu \delta X^\mu + \sigma^{\dag}_n \delta \sigma^n + \iota_{\delta \sigma^\partial}\sigma^{\dag}_\partial,
\end{equation}
(we denote $\sigma^n \in C^\infty[1](\partial M)$, $\sigma^\partial \in \mathfrak{X}[1](\partial M)$, while fields with a dagger are fibre coordinates) together with the BFV action ($\partial_t$ denotes the tangential derivative on $\partial M$):
\begin{equation}
    S^\partial_P = \int_{\partial M} -  (L_{\sigma^\partial} X)^\mu J_\mu - \frac{1}{2} \sigma^n \bigg[ J_\mu J^\mu +  \partial_t X^\mu \partial_t X_\mu \bigg] + \sigma^{\dag}_n L_{\sigma^\partial} \sigma^n + \frac12 \iota_{[\sigma^\partial,\sigma^\partial]}\sigma^{\dag}_\partial.
\end{equation}
The BFV data thus found is a resolution of the reduced phase space for Polyakov theory, described as the coisotropic submanifold $\Phi^{\text{red}}_P\subset T^*\left(C^\infty(\partial M, N)\right)$ seen as the vanishing locus of 
\begin{equation}
H_\phi \coloneqq\intl_{\partial M} \phi\left(\partial_t X_\mu \partial_t X^\mu + J_\mu J^\mu\right) 
\qquad L_\psi \coloneqq 2\intl_{\partial M} \psi\partial_t X^\mu J_\mu
\end{equation}
for $\phi,\psi\in C^\infty(\partial M)$, which satisfy:
\begin{equation}
\{H_\phi, H_{\phi '}\}  = L_{[\phi, \phi ']} \qquad 
      \{L_{\psi}, L_{\psi '}\}  = L_{[\psi, \psi ']} \qquad 
      \{H_{\phi}, L_{\psi}\} = H_{[\phi, \psi]}
\end{equation}
where $[\phi, \psi]:= (\partial_t \phi) \psi - \phi (\partial_t \psi)$.

In the context of Nambu--Goto theory, we show that the natural BV theory associated to the model, described on the BV space of fields (Definitions \ref{def:NGth} and \ref{def:BVNGth})
\begin{equation}
    \mathcal{F}_{NG} := T^*[-1]\left( C^\infty(M,N) \times \mathfrak{X}[1](M)\right) \ni ((X,X^\dag),(\zeta, \zeta^\dag))
\end{equation}
endowed with the BV-Nambu--Goto action:
\begin{gather}
S_{NG}= S^{\text{cl}}_{NG} + \int_M \langle X^\dag,L_\zeta X \rangle + \langle\zeta^\dag,\frac12[\zeta,\zeta] \rangle ,
\end{gather}
does not induce a compatible BV-BFV structure when the worldsheet admits a boundary $\partial M$ (Theorem \ref{thm:NGNOGO}). 

On the other hand, we show that the reduced phase spaces of the two theories coincide by means of the communting diagram (Theorem \ref{thm:NGRPS}):
\begin{equation}\label{e:PtoNG}
    \xymatrix{
    F_{NG} \ar[d]^{\check{\pi}_{NG}} \ar[rr]^{\phi_{NG}} & & F_P\ar[d]_{\check{\pi}_{P}}\\
    \check{F}_{NG} \ar[dd]^{{\pi}^\partial_{NG}} \ar[dr]^{{\pi}^\partial_{\mathrm{partial}}} & & \check{F}_{P} \ar[d]^{{\pi}^\partial_P} \\
    & C_{P} \ar[d]^{\pi^{\text{red}}_P} \ar[r]^{\iota_C} & F^\partial_P\\
    F^\partial_{NG}\simeq \Phi^{\text{red}}_{NG} \ar[r]^-{\varphi} & \Phi^{\text{red}}_{P} &
    }
\end{equation}
where we denoted by $C_P\subset F^\partial_P$ the submanifold of constraints of Polyakov theory defined above, with $\iota_C\colon C_P \to F^\partial_P$ the inclusion map, and the maps 
$$
{\pi}^\partial_{\mathrm{partial}}\colon \check{F}_{NG}\to C_P; \qquad \phi_{NG}\colon F_{NG}\to F_{P}
$$ 
are, respectively, a partial pre-symplectic reduction of the pre-symplectic manifold $\check{F}_{NG}$, and $\phi_{NG}$ is the classical equivalence of the two theories (cf. Remark \ref{rem:classeq}), referring to the spaces $\check{F}_{NG}, F_{NG}$ and $F_P$, defined respectively in Theorem \ref{thm:NGRPS}, Definition \ref{def:NGth} and Definition \ref{def:Polth}.

\section{Background}
In this background chapter we introduce  some basic concepts needed throughout the paper. In Section \ref{Sec:RPS} we will describe how to construct the Reduced Phase Space for a field theory using a geometric construction due to Kijowski and Tulczyjew \cite{KT1979}. In Section \ref{Sec:BVintro}, we give a brief overview of the BV and BFV formalisms, as well as how they are related to each other on a manifold with boundary. We refer to \cite{CMR2012} for a more detailed discussion about these topics. In the Section \ref{sec:Strings}, we will outline the basics about the Nambu-Goto and the Polyakov string theories, and fix our conventions.

\subsection{A geometric approach to the reduced phase space}\label{Sec:RPS}

A field theory on a manifold $M$ is specified by a space of fields $F$ --- modeled around smooth sections of a fibre bundle\footnote{Later in this paper $F$ will be a mapping space, seen as the space of sections of some trivial fibre bundle on $M$. Since we will work with metrics, we will also need to allow (open) nondegeneracy conditions on fields.} $E\to M$ --- as well as an action functional $S: F \rightarrow \mathbb{R}$, a local functional of the form 
$$
S=\intl_M L[\phi,\partial_I\phi],
$$
with $L$ a density valued functional of fields and a finite number of derivatives (jets). The data of local symmetries for a field theory is specified by an involutive distribution $D\subset TF$. To construct the reduced phase space of the system, we use a method developed originally by Kijowski and Tulczyjew \cite{KT1979}. 

Assume that the manifold $M$ has a non-empty boundary $\partial M$. The starting point for the construction is the variation of the action functional, which splits into a bulk one-form $\mathsf{el}$, the vanishing locus of which is the Euler--Lagrange critical locus $EL$ of the theory, and a boundary term:
\begin{equation}
    \delta S = \mathsf{el} + \check{\pi}^*\check\alpha.
\end{equation}
If we denote by $\check{F}$ the space of \emph{pre-boundary} fields, i.e.\ the space of restrictions of fields and normal derivatives to the boundary, with the natural surjective submersion $\check{\pi}\colon F \to \check{F}$ given by restriction, we can interpret $\check{\alpha}$ as a one-form on $\check{F}$. Given $\check{\alpha}$ we construct the two-form $\check{\omega}=\delta \check{\alpha}$.

The closed two-form  $\check{\omega}$ is often degenerate, so $(\check{F},\check{\omega})$ is at best pre-symplectic. When this is the case, that is if the kernel of $\check{\omega}^\sharp:T\check{F}\to T^*\check{F}$ is regular\footnote{Regular means that $\mathrm{ker}(\check{\omega})$ is a subbundle of $T\check{F}$. A practical way to check this is whether it has the same $C^\infty(\check{F})$-dimension over all of $\check{F}$.}, we can perform pre-symplectic reduction over the space of boundary fields: 
$$
{\pi}^{\text{red}}: \check{F} \rightarrow F^\partial = \check{F}/\mathrm{ker}(\check{\omega}^\sharp),
$$ 
and, precomposing, we get $\pi:= \pi^{\text{red}}\circ\check{\pi}\colon F\to F^\partial$, with symplectic structure $\omega^\partial = \check{\underline{\omega}}$. 
\begin{definition}
We will call $(F^\partial:=\check{F}/\mathrm{ker}(\check{\omega}^\sharp),\omega^\partial)$ the \emph{geometric phase space} of the classical field theory $(F,S,D)$.
\end{definition}

Not all points in $F^\partial$ can be extended to a solution of the equations of motion in the bulk. Typically, the set of such boundary values is specified by the common zero locus $C$ of a set of functions $\{\phi_i\}_{i=1\dots k}$. We can think of $C$ as a generalisation of Cauchy data for the field theory, in that we require points on $C$ to extend (possibly non-uniquely) to a solution of the equations of motion in a short enough cylinder bounded by $\partial M$. If $\partial M$ is a Cauchy surface, this translates into the usual Cauchy problem for PDE's, more generally $C$ encodes necessary conditions for existence and uniqueness. However, in this approach we are not specifically interested in the analytic nuances surrounding this question.

Typically, a na\"ive choice for the $\phi_i$-s is induced by restricting the equations of motion of the theory to the boundary. This produces a set of functions $\check{\phi}_i\in C^\infty (\check{F})$, which we expect to be basic with respect to the reduction $\pi^{\text{red}}$. This means that there exist functions $\phi_i\in F^\partial$, such that $\check{\phi_i}=\pi^{\text{red}*}\phi_i$.

In what follows, we will assume that the set $\{\phi_i\}_{i=1\dots k}$ is in involution, i.e.\ all Poisson brackets $\{ \phi_i,$ $\phi_j \}$ between constraints vanish when restricted to $C$. We will then say that $C$ is coisotropic\footnote{In order for $C$ to properly define a submanifold in an infinite-dimensional setting, certain additional regularity conditions are required to hold. We will not be concerned with this issue, as we will be more interested in the algebraic properties of $C$.}. In Dirac's terminology \cite{Dirac1958}, the constraint set $\{\phi_i\}_{i=1\dots k}$ is \emph{first class}. This is relevant because, in order for the field theory to be well-defined, one requires $\pi(EL)\subset F^\partial$ to be a Lagrangian submanifold, when $EL$ is associated to a small-enough cylinder bounding $\partial M$. A direct consequence of this is that $C\supset \pi(EL)$ must be coisotropic.

The restriction of the symplectic 2-form $\omega^\partial$ to $C$ is degenerate, and since $C$ is coisotropic, its kernel is given by the span of the Hamiltonian vector fields $X_i$ of the functions $\phi_i$, this is also called the characteristic distribution of $C$. 

\begin{definition}
We define the \emph{Reduced Phase Space} (RPS) of the field theory to be the reduction of $C$ by its characteristic distribution: $\Phi^{\text{red}}:=\underline{C}$.
\end{definition} 

The reduced phase space is generally a singular space\footnote{Often $C$ itself is singular, see e.g. \cite{ArmsMarsdenMoncrief:1982}, for the case of General Relativity.}, we will describe it alternatively by means of the constraint set $\{\phi_i\}_{i=1\dots k}$. As we will see in the next section, the BFV construction for a field theory provides a resolution of the Reduced Phase Space, i.e.\ a complex whose cohomology in degree zero describes (or rather replaces) the space of functions over the reduced phase space $\Phi^{\text{red}}$.

\subsection{Batalin--Vilkovisky and Batalin--Fradkin--Vilkovisky formalisms}\label{Sec:BVintro}
In this section we present a brief overview of the Batalin--Vilkovisky and the Batalin--Fradkin--Vilkovisky formalisms \cite{BV1,BV2,BV3}. 

\begin{definition}\label{def:BV}
A relaxed BV-theory on a manifold $M$ is the data
$$(\mathcal{F}_M,S_M,Q_M,\Omega_M)$$ 
with $(\mathcal{F}_M,\ {}\Omega_M)$ a Z-graded $(-1)$-symplectic manifold, $S_M$ a degree $0$ function, and $Q_M$ a degree $1$ cohomological vector field, i.e.\ such that $[Q_M, Q_M]=0$. If, in addition, we have
\begin{align}
\iota_{Q_M} \Omega_M &= \delta S_M,
\end{align}
i.e.\ $ S_M$ is the Hamiltonian function of $Q_M$, the data $(\mathcal{F}_M,S_M,Q_M,\Omega_M)$ defines a BV theory.
\end{definition}

\begin{remark}
Notice that, in a BV theory, the compatibility requirements above can be rewritten in a more familiar way as the statement that $S_M$ satisfies the Classical Master Equation (CME)
\begin{gather}\label{03/05A1}
\{S_M, S_M\}=0
\end{gather}
where the Poisson brackets are derived from the graded symplectic structure. In some circumstances it is useful to identify $Q_M=\{S_M,\cdot\}$, although, since the Hamiltonian condition will be spoiled in the presence of a boundary, we prefer to think of the two pieces of data as independent, and consider \emph{relaxed} BV theories.
\end{remark}

We provide here a general definition of a cotangent bundle for space of fields we will use to define BV fields throughout.

\begin{definition}[Cotangent bundles]\label{def:cotangent}
Let $E\to M$ be a (possibly graded) vector bundle, $\mathcal{E}[p]$ its space of ($p$-shifted) smooth sections\footnote{We assume to be working in some convenient setting, such as Fr\'echet manifolds, or more generally diffeological spaces.}, i.e.\ the space of sections of $E[p]$. We define by $T^*[k]\mathcal{E}[p]$ the vector bundle whose fibres are given by ($k-p$)-shifted sections\footnote{Here $\mathrm{Dens}(M)$ denotes the density \emph{bundle} of $M$.} of $E^*\otimes \mathrm{Dens}(M) \to M$.

Let $C^\infty(M,N)$ be the space of smooth maps between smooth manifolds $M$ and $N$. The tangent bundle $T C^\infty(M,N)$ is given by pairs $(X,V)$ of a smooth map $X\in C^\infty(M,N)$ and a section $V\in \Gamma(X^*TN)$. We define $T^*[k]C^\infty(M,N)$ to be the vector bundle over $C^\infty(M,N)$ whose fibres at $X$ consist of sections $X^\dag \in \Gamma(X^*T^*[k]N\otimes \mathrm{Dens}(M))$.
\end{definition}

Typically, one is given the data of a field theory as in Section \ref{Sec:RPS}, and wants to extend it to a BV theory. A classical field theory is specified by the data $({F}_M, S^{\text{cl}}_M)$ and by a distribution $D_M\subset \text{T} {F}_M$ encoding the symmetries, i.e.\ vector fields $X\in\Gamma(D_M)$ such that $L_X(S_M^{\text{cl}})=0$. A relevant detail here is that, while the space of (local) symmetries of $S^{\text{cl}}_M$ is fixed (and it is a Lie subalgebra of $TF_M$), the distribution $D_M$ encodes only the symmetries of the theory that are non-trivial, i.e.\ vector fields that do not vanish on the critical locus of $S_M^{\text{cl}}$. Hence, there is a freedom in the choice of $D_M$, which is thus a datum we have to specify, and in principle might only be involutive up to trivial symmetries (in this case one speaks of on-shell symmetries, see \cite{Henn} for more details on this classical issue). In this paper we will mostly employ the classic result:
\begin{theorem}[\cite{BV1,BV2}]\label{thm:minBVext}
Let $D_M$ be the image of a Lie algebra action $\rho:\mathfrak{g} \to \mathfrak{X}(F_M)$, and let $Q_{BRST}$ be the Chevalley-Eilenberg differential\footnote{We consider $C^\infty(F_M)$ as a $\mathfrak{g}$-module.} associated with $\rho$. Consider the space of fields $\mathcal{F}_M= T^*[-1] ( D_M [1] )$, where $\Phi$ is a multiplet of fields in $D_M[1], \Phi^\dag$ denotes the corresponding multiplet of conjugate (anti-)fields. Let us define the functional 
\begin{equation}
    S_M=S_M^{\text{cl}}+\langle\Phi^\dag, Q_{BRST} \Phi\rangle,
\end{equation} 
and extend $Q_{BRST}$ to $Q_M$ so that, up to boundary terms, $\iota_{Q_M}\Omega_M = \delta S_M$. Then, the data $(\mathcal{F}_M,\Omega_M,S_M,Q_M)$ denotes a relaxed BV theory. If $\partial M=\emptyset$, the data defines a BV theory.
\end{theorem}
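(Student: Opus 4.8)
The statement is the classical minimal (Chevalley--Eilenberg) BV extension of a symmetry encoded by a Lie algebra action, so the plan is to verify directly the defining properties of (relaxed) BV data for the explicit quadruple produced by the construction. First I would set up coordinates on $\mathcal{F}_M=T^*[-1](D_M[1])$: the BRST multiplet $\Phi$ comprises the classical fields together with ghosts $c\in\mathfrak{g}[1]$ (identified with $D_M$ through $\rho$), and $\Phi^\dag$ the conjugate antifields. The canonical $(-1)$-symplectic form $\Omega_M=\int_M\langle\delta\Phi^\dag,\delta\Phi\rangle$, with the density pairing of Definition \ref{def:cotangent}, is closed, nondegenerate and of degree $-1$ by construction, so its odd Poisson bracket $\{\cdot,\cdot\}$ is well defined. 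On this space $Q_{BRST}$ is the Chevalley--Eilenberg differential of the $\mathfrak{g}$-module $C^\infty(F_M)$: on the classical fields it is the infinitesimal symmetry with ghost parameter, $Q_{BRST}\Phi=L_{\rho(c)}\Phi$, and on the ghosts the Maurer--Cartan term $Q_{BRST}c=-\tfrac12[c,c]$. A degree count confirms that $S_M$ is of total degree $0$ and $Q_M$ of degree $1$.

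Next I would read off $Q_M$ from the variation of $S_M$. Computing $\delta S_M$ and integrating by parts splits it, as in Section \ref{Sec:RPS}, into a bulk Euler--Lagrange one-form and a boundary term $\check\pi^*\check\alpha$; the bulk part contracted against $\Omega_M$ defines $Q_M$. On the fields this reproduces $Q_M\Phi=Q_{BRST}\Phi$, since the $\Phi^\dag$-derivative of $S_M$ is exactly $Q_{BRST}\Phi$; on the antifields $Q_M\Phi^\dag$ is the $\Phi$-derivative of $S_M$, determined only up to the boundary term. This is precisely the extension ``up to boundary terms'' of the statement, and it fixes $Q_M$ unambiguously as a local vector field because the boundary contribution does not enter the bulk density.

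The core of the argument is the nilpotency $[Q_M,Q_M]=0$. On the fields this is $Q_{BRST}^2=0$: on classical fields it reduces to $\rho([c,c'])=[\rho(c),\rho(c')]$ (that $\rho$ is a genuine Lie algebra action), and on ghosts to the Jacobi identity. I find it cleanest to package the antifield sector through the Classical Master Equation; expanding
\begin{multline}
\{S_M,S_M\}=\{S_M^{\text{cl}},S_M^{\text{cl}}\}+2\{S_M^{\text{cl}},\langle\Phi^\dag,Q_{BRST}\Phi\rangle\}\\
+\{\langle\Phi^\dag,Q_{BRST}\Phi\rangle,\langle\Phi^\dag,Q_{BRST}\Phi\rangle\},
\end{multline}
the first term vanishes because $S_M^{\text{cl}}$ is antifield-independent, the second equals the symmetry variation $L_{\rho(c)}S_M^{\text{cl}}$ and vanishes by the invariance hypothesis, and the third reproduces $Q_{BRST}^2=0$. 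In the closed case, where $\Omega_M$ is nondegenerate and $S_M$ is the genuine Hamiltonian of $Q_M$, this is equivalent to $[Q_M,Q_M]=0$; for $\partial M\neq\emptyset$ the same local computation still gives $[Q_M,Q_M]=0$ as a bulk identity.

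Finally, when $\partial M=\emptyset$ the boundary term $\check\alpha$ drops out, so $\iota_{Q_M}\Omega_M=\delta S_M$ holds on the nose, $S_M$ is the Hamiltonian of $Q_M$, and the data is a BV theory; for $\partial M\neq\emptyset$ only this Hamiltonicity relation is weakened, yielding a relaxed BV theory. I expect the delicate point to be exactly the antifield sector of $[Q_M,Q_M]=0$: one must keep track of the two ways $Q_M$ acts on $\Phi^\dag$ --- directly, and through the $\Phi$-dependence of $\delta S_M/\delta\Phi$ --- and verify that the boundary terms produced by integration by parts do not spoil the bulk identity, so that nilpotency holds strictly rather than merely ``up to boundary''.
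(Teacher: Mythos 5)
The paper offers no proof of this statement to compare against: Theorem \ref{thm:minBVext} is quoted as a classical result of \cite{BV1,BV2}, so your argument can only be assessed on its own terms.

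On those terms it is correct, and it is the standard argument. The three-term expansion of $\{S_M,S_M\}$ is the right device, and you give the right reason for each term to vanish: $\{S^{\text{cl}}_M,S^{\text{cl}}_M\}=0$ because the odd bracket pairs field-derivatives with antifield-derivatives and $S^{\text{cl}}_M$ is antifield-independent; the cross term is the ghost-parametrised symmetry variation of $S^{\text{cl}}_M$, which vanishes by the invariance requirement $L_X S^{\text{cl}}_M=0$ for $X\in\Gamma(D_M)$ built into the definition of the symmetry distribution (the paragraph preceding the theorem); and the last term reduces to $\langle\Phi^\dag,Q_{BRST}^2\Phi\rangle$, which vanishes precisely because $\rho$ is a Lie algebra action and $\mathfrak{g}$ satisfies Jacobi. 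Your treatment of the boundary case is also the correct one, though stated in compressed form: since $Q_M$ is \emph{defined} by the bulk (Euler--Lagrange) part of $\delta S_M$, its coefficients are local differential expressions identical to those of the closed case, so $[Q_M,Q_M]=0$ is an identity of local expressions that can be verified on the interior (or on any closed manifold) and therefore persists when $\partial M\neq\emptyset$; what degrades is only the Hamiltonian relation, which becomes $\iota_{Q_M}\Omega_M=\delta S_M+\tilde{\pi}^*\check{\alpha}$ as in Equation \eqref{04/05A}, and that weakening is exactly the content of the word ``relaxed''. The only place where you gesture rather than argue is your closing paragraph: to make it airtight, say explicitly that $Q_M\Phi^\dag$ is defined as the Euler--Lagrange part of the $\Phi$-variation of $S_M$, so the boundary term produced by integration by parts never enters the definition of the vector field, after which the ``two ways $Q_M$ acts on $\Phi^\dag$'' bookkeeping is subsumed in the local identity just described. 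That is an expository refinement, not a gap.
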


In this article, we are interested in the case where $M$ has a boundary $\partial M$. In this case, the relation $\iota_{Q_M} \Omega_M = \delta S_M$ does not hold anymore, i.e.\ the BV theory on the manifold without boundary generalises to the case with boundary as a relaxed BV theory, and the following holds instead:
\begin{gather}\label{04/05A}
\iota_{Q_M} \Omega_M = \delta S_M + \tilde{\pi}^*\check{\alpha}
\end{gather}\\
where $\tilde{\pi}:\mathcal{F}_M \rightarrow \check{\mathcal{F}}_{\partial M}$ is a surjective submersion from the space of fields on $M$ to the phase space of pre-boundary fields, once again defined as the space of restrictions of fields and normal jets to the boundary. The boundary term $\check{\alpha}$ is interpreted as a local 1-form on $\check{\mathcal{F}}_{\partial M}$. 

To the boundary of the manifold we can associate the following structure.
\begin{definition}\label{def:BFV}
A BFV-theory on a closed manifold $N$ is the data $(\mathcal{F}^\partial_N, S^\partial_N, Q^\partial_N, \Omega^\partial_N)$ with $(\mathcal{F}^\partial_N, \Omega^\partial_N)$ a $\mathbb{Z}$-graded $0$-symplectic manifold, and $S_N$ and $Q_N$ respectively a degree $1$ function and a degree $1$ vector field on $\mathcal{F}^\partial_N$ such that: 
\begin{align}
[Q_N, Q_N] &= 0 \\
\iota_{Q_N} \Omega_N &= \delta S_N
\end{align}
i.e.\ $Q_N$ is a cohomological vector field, and $S_N$ its Hamiltonian function. This implies that $S_N$ satisfies the CME. If $\Omega_N=\delta \alpha_N$, we will say that the BFV theory is \emph{exact}.
\end{definition} 

The BFV data is related to the relaxed BV theory on the bulk (for more details on the procedure, check \cite{CMR2012}). Indeed, through eq. $\eqref{04/05A}$, the relaxed BV theory induces some data on the boundary, which in good situations will yield a BFV theory. In that case we have:
\begin{definition}\label{def:BVBFV}
An exact BV-BFV theory on a manifold with boundary $(M,\partial M)$ is the data 
$$
(\mathcal{F}_M, S_M, Q_M, \Omega_M, \mathcal{F}_{\partial M}^\partial, S_{\partial M}^\partial, Q_{\partial M}^\partial, \Omega_{\partial M}^\partial, \pi)
$$ such that $(\mathcal{F}_M, S_M, Q_M, \Omega_M)$ is a relaxed BV theory on the manifold with boundary $(M,\partial M)$, $(\mathcal{F}_{\partial M}^\partial, S_{\partial M}^\partial, Q_{\partial M}^\partial, \Omega_{\partial M}^\partial = \delta \alpha_{\partial M}^\partial)$ is an exact BFV theory on the boundary $\partial M$, and  $\pi : \mathcal{F}_M \rightarrow \mathcal{F}_{\partial M}^\partial$ is a surjective $Q$-submersion, such that
\begin{equation}
    \iota_{Q_M} \Omega_M = \delta S_M + \pi^*  \alpha_{\partial M}^\partial. 
\end{equation}

A relaxed BV theory $(\mathcal{F}_M, S_M, Q_M, \Omega_M)$ on a manifold with boundary $(M,\partial M)$ is said to be \emph{$1$-extendable} to an exact BV-BFV theory if one can find an exact BFV theory $(\mathcal{F}_{\partial M}^\partial, S_{\partial M}^\partial, Q_{\partial M}^\partial, \Omega_{\partial M}^\partial=\delta \alpha_{\partial M}^\partial)$ on the boundary $\partial M$ and a map $\pi\colon \mathcal{F}_M \to \mathcal{F}_{\partial M}$ such that one obtains an exact BV-BFV theory on $(M,\partial M)$.
\end{definition}

Given a classical field theory on a manifold with boundary, we can always compute its Reduced Phase space $\Phi^{\text{red}}$. Under the mild assumption that $\Phi^{\text{red}}$ be given by a set $C$ of first class constraints\footnote{This condition can be generalised to the more general request of having a coisotropic submanifold \cite{Schaetz:2008}.}, we can apply the BFV construction to it and obtain a cochain complex whose cohomology in degree zero is a replacement for the space of smooth functions on the reduced phase space $\underline{C}$. The output of the BFV construction \cite{BV3}, which was explained in detail in \cite{Stasheff1997} and then later in \cite{Schaetz:2008}, is a BFV theory as in Definition \ref{def:BFV}.

On the other hand, given the same field theory, we can apply Theorem \ref{thm:minBVext} and compute its BV complex. When the two independently constructed complexes are compatible, they yield a BV-BFV theory, which is the starting point of Perturbative quantisation with boundary, as proposed by Cattaneo, Mnev and Reshetikhin in \cite{CMR2}.

\subsection{String action functionals}\label{sec:Strings}

In String Theory the generalisation of the motion of point particles is done by considering extended one-dimensional objects. Their classical dynamics is dictated by the minimization of an action functional, which can be obtained by generalizing that of relativistic free particles:
\begin{gather}\label{NewPolA1}
S_{1d} := \int_I \sqrt{\mathsf{g}}dt
\end{gather}
where $I \subset \mathbb{R}$ is an interval parametrized by ${x}\in I$,  $g_{\alpha \beta} := \partial_\alpha X^\mu \partial_\beta X^\nu G_{\mu \nu} (X)$ is metric induced (in components) on the line, with $\mathsf{g}:=|\mathrm{det} ( g_{\alpha \beta} )|$ its determinant  and $X : I \rightarrow N$ is the trajectory of the particle\footnote{$X^\mu$ denotes the composition of a time-like curve $X : I \rightarrow N$ and a local chart $\phi^\mu: U\subset N \rightarrow \mathbb{R}^{d+1}$, for a smooth manifold $N$.}.

This action functional is the pseudo-length of the line in $N$ with metric $G$ (that in local coordinates is $G_{\mu \nu} (X)$), and by minimizing it we find the equations governing the motion of a particle in $N$. Since $I$ is one-dimensional, the induced metric $g = g_{t t} dt dt$ has one component $g_{t t} = \partial_t X^\mu \partial_t X^\nu G_{\mu \nu} (X) =: \dot{X}^2$, and the action of the relativistic particle can be written in the more commonly known form (up to factors):
\begin{gather}
S_{1d} := \int_I  | \dot{X}|dt
\end{gather}

Following the same philosophy, we write the action of the string as the surface area spanned by a 1-dimensional ``string'' moving in a background pseudo-Riemannian geometry: the two dimensional generalization of the path length.  
\begin{definition}\label{def:NGth}
We call Nambu--Goto theory the assignment, to a 2-dimensional source manifold $M$ (possibly with boundary) that admits a Lorentzian structure\footnote{A compact manifold admits a Lorentzian structure if and only if it has vanishing Euler characteristic. In two dimensions this is, for example, the case of a torus.} and a $d+1$ Lorentzian manifold $(N,G)$, of the space of classical fields
\begin{equation}
    F_{NG} = C^\infty (M,N) \ni X
\end{equation}
and the Nambu--Goto action:
\begin{equation}\label{NewPolA2}
S^{\text{cl}}_{NG} := \int_{M}  \sqrt{\mathsf{g}}d^2 {x} = \int_{M} d^2 {x} \sqrt{ |\mbox{det}(\partial_\alpha X^\mu \partial_\beta X^\nu G_{\mu \nu} (X))|}
\end{equation}
In the context of strings, $M$ is often referred to as ``world sheet". 
\end{definition}

To describe a string, we can alternatively use the following model \cite{BrinkdiVecchia,POLYAKOV1981207}: 
\begin{definition}\label{def:Polth}
We call \emph{non-null Polyakov theory} the assignment, to a 2-dimensional manifold with boundary $M$ that admits a Lorentzian structure and a $d+1$ Lorentzian manifold $(N,G)$, of the data 
\begin{equation}
    F_{P}=C^\infty (M,N)\times \mathcal{PR}(M,\partial M) \ni \{X,h\},
\end{equation}
where $\mathcal{PR}(M,\partial M)$ denotes the open set of Lorentzian metrics on $(M,\partial M)$ whose restriction to $\partial M$ is nondegenerate, together with the Polyakov action functional:
\begin{align}
S^{\text{cl}}_{P} & :=\int_{M} \langle d X, \star_h d X\rangle = \int_{M} d X^\mu *_h d X^\nu G_{\mu \nu} (X) \\\label{NewPolA3} 
&= \int_{M} \sqrt{\mathsf{h}}\, h^{\alpha \beta} \partial_\alpha X^\mu \partial_\beta X^\nu G_{\mu \nu} (X) d^2{x}
\end{align} 
where $\star_h$ is the Hodge dual defined by $h, \sqrt{\mathsf{h}}$ denotes the square root of the (absolute value of the) determinant of $h$ and $\langle\cdot,\cdot\rangle$ the inner product defined by $G$.
\end{definition}

Notice that non-null Polyakov theory enjoys a \emph{conformal symmetry}, i.e.\ we can rescale $h\to \lambda h$ with a positive function $\lambda$. This alternatively means that we can reduce the number of degrees of freedom from the start, without loss of generality. We consider new variables given by equivalence classes of metrics ${\th}:= [h]$ under rescaling, which can be parametrised as ${\th}:=\frac{1}{\sqrt{\mathsf{h}}} h$, with $\mathsf{h}\coloneqq \mathrm{det}(h)$ and inverse ${\th}^{-1}={\sqrt{\mathsf{h}}} h^{-1}$. 

\begin{definition}\label{def:DPR}
We define the space of \emph{densitised} Lorentzian metrics $\mathcal{DPR}(M,\partial M)$, given by equivalence classes of metrics, parametrised by ${\th}=\frac{1}{\sqrt{\mathsf{h}}}h\in $ for $h\in \mathcal{PR}(M,\partial M)$ a Lorentzian metric with nondegenerate restriction to $\partial M$.
\end{definition}

Indeed, Polyakov action depends explicitly only on combinations $\sqrt{\mathsf{h}} h^{\alpha \beta}$ (see eq.\eqref{NewPolA3}): this quantity has only two degrees of freedom instead of three (in fact ${\th}^{\alpha \beta}$ has unit determinant). As a consequence, $S^{\text{cl}}_P$ descends to the space of equivalence classes of metrics defined by rescalings.

In a local chart, the elementary field ${\th}_{\alpha \beta}$ (and its inverse ${\th}^{\alpha \beta}$) has matrix representation:
\begin{equation}\label{A6}
{\th}_{\alpha \beta} :=
\begin{pmatrix}
{\th}_{nn} &  {\th}_{nt}\\
{\th}_{nt} & {\th}_{tt}

\end{pmatrix}
= \varsigma
\begin{pmatrix}
{\th}^{tt} & - {\th}^{nt}\\
-{\th}^{nt} & {\th}^{nn}

\end{pmatrix} \qquad \varsigma:= \mathrm{det}(\tilde{h}) \equiv \text{sign}(\mathsf{h})
\end{equation} 
and $n$ and $t$ are the indexes respectively of the normal and tangent directions to the boundary\footnote{This means, in a local chart adapted to a tubular neighborhood of the embedded boundary submanifold.}. In what follows we will be mainly interested in the case $\varsigma=-1$, but we will keep track of $\varsigma$ throughout most of the calculations, and it will be specified when we will do otherwise. The main formulas we will need to tackle the variational problem for densitised metrics are:
\begin{subequations}\begin{align}
{\th}_{\alpha \beta} := \frac{1}{\sqrt{\mathsf{h}}} h_{\alpha \beta}\\
|\mbox{det} ({\th}_{\alpha \beta})| = 1\\
\delta {\th}_{\alpha \beta} = \frac{1}{\sqrt{\mathsf{h}}} P^{\perp \rho \sigma}_{\alpha \beta}  \delta h_{\rho \sigma}\label{14/04/2020A1}
\end{align}\end{subequations}
where $P^\perp = \mathrm{id} - \frac12 h \mathrm{Tr}_h$ is a operator on $\Gamma(S^2(TM))$ that in a local chart reads
\begin{equation}\label{e:projector}
    P^{\perp \rho \sigma}_{\alpha \beta} = \delta_\alpha^\rho \delta_\beta^\sigma - \frac{1}{2} h_{\alpha \beta}h^{\rho \sigma}
\end{equation}
and that, pointwise, represents the projector to the subspace of traceless symmetric tensors.

\begin{remark}
Equation \eqref{14/04/2020A1} relates the variation of the constrained variables ${\th}_{\alpha \beta}$ to the variation of the free variables $h_{\alpha \beta}$. 
Observe that $\delta {\th}_{\alpha \beta} K^{\alpha \beta}= 0$ does not imply $K^{\alpha \beta}=0$, for $K^{\alpha \beta}$ a generic symmetric contravariant two-tensor. 
The condition $\delta {\th}_{\alpha \beta} K^{\alpha \beta}= 0$ imposes only two independent relations on $K^{\alpha \beta}$. To tackle the issue, we can use eq \eqref{14/04/2020A1}:
\begin{gather}K^{\alpha \beta} \frac{1}{\sqrt{\mathsf{h}}} P^{\perp \rho \sigma}_{\alpha \beta}  \delta h_{\rho \sigma} = 0 
\end{gather}
which implies:
\begin{equation}
    K^{\alpha \beta} P^{\perp \ {} (\rho \sigma)}_{\alpha \beta}  = 0,
\end{equation}
and the symmetrization of the projector, denoted by round brackets around the indices, is due to the symmetry of $h_{\alpha \beta}$. With these considerations the action functional for non-null Polyakov theory reads:
\begin{equation}\label{A7}
{S}^{\text{cl}}_P = \int_{M} {\th}^{\alpha \beta} \partial_\alpha X^\mu \partial_\beta X^\nu G_{\mu \nu} (X) d^2{x}.
\end{equation}
\end{remark}

\begin{definition}\label{def:T^*DPR}
We define the pre-cotangent bundle $T^\vee\mathcal{DPR}(M,\partial M)$ to be the vector bundle over $\mathcal{DPR}(M,\partial M)$ whose fibre over $\th$ is space of sections $K$ of $S^2(TM)$ subject to the condition $P^\perp_{\th}(K)=0$. Denoting the associated vector bundle by $S^2_{P^\perp_{\th}}(TM)\to M$, we define the cotangent bundle $T^*\mathcal{DPR}(M,\partial M)$ as the vector bundle whose fibres at $\th$ are sections of $S^2_{P^\perp_{\th}}(TM)\otimes \mathrm{Dens}(M) \to M$.
\end{definition}

\begin{remark}
In principle, one could use the standard, nonreduced, version of Polyakov theory, and account for conformal invariance within the BV formalism. We found this path more convenient to deal with. The adoption of these reduced fields appears in \cite[Eq 17]{BAULIEU198655} and is discussed, even if it is ultimately not employed, in \cite[Eq 3.8]{LEE1988191}.
\end{remark}

\begin{remark}
Throughout the paper we will assume that our metric fields are defined in such a way that they restrict to nondegenerate metrics on the boundary. In Appendix \ref{AppendixLightlike} we show that this is equivalent to the condition $h^{nn}\not = 0$. Typically, this is stated as the condition that the boundary is \emph{non-null}, although clearly it is an open condition in the space of fields, since the metric is dynamical and not a background.
\end{remark}

\begin{remark}[{Classical Equivalence}]\label{rem:classeq}
A known (and simple) fact about Polyakov and Nambu--Goto string theories is that they are classically equivalent. Physically, this means that they describe the same set of classical configurations: the solutions of the respective Euler--Lagrange equations of motion (modulo symmetries). Geometrically, we can phrase this as follows. Consider the equation of motion $\frac{\delta S^{\text{cl}}_P}{\delta h} = 0$, given by the tensorial equation\footnote{This is the vanishing of the stress-energy tensor.}
\begin{equation}\label{e:polngequivalence}
        dX^\mu \odot d X_\mu = \frac12 h \mathrm{Tr}_h[dX^\mu \odot d X_\mu],
\end{equation}
where $\odot$ defines the symmetries tensor product, and denote by $N\subset F_P$ the subset of Polyakov fields that satisfy it. Then there is a diffeomorphism $\phi_{NG}\colon F_{NG} \to N\subset F_P$, such that
$$
\phi_{NG}^* S^{\text{cl}}_P = S^{\text{cl}}_{NG},
$$
since we can realise that the square root of the determinant of the two sides of \eqref{e:polngequivalence} are Nambu-Goto's and Polyakov's Lagrangians.
This induces an equivalence of field theories, since clearly the zero locus of $S^{\text{cl}}_{NG}$ coincides with that of $S^{\text{cl}}_P$. We call this relationship between theories ``classical equivalence''.  
In general, though, classical equivalence is not enough to show that the cohomology of the BV complexes associated to two (classically equivalent) models are isomorphic, as it only statens that the cohomologies in degree $0$ coincide. However, the equivalence between Polyakov and Nambu--Goto theories fits within a general approach to the comparison of field theories, which stems on the elimination of (generalised) \emph{auxiliary fields}. In this case, the auxiliary field is the metric $h$, featuring in Polyakov theory. When two theories differ only in auxiliary field content, their BV cohomologies are related by the classic results of Barnich, Brandt and Henneaux \cite{BBH,Henn}. In this work we will show that regardless of the ``nice'' behaviour of the classical equivalence map $\phi_{NG}$, the BV-BFV behaviour of the two models differs. In particular we will show that the BV data for Nambu--Goto string does not induce a BV-BFV theory on a manifold with boundary. This result seems to be compatible with the work of Bahns, Rejzner and Zahn \cite{BRZ}, who found another discrepancy between Polyakov's and Nambu--Goto's BV cohomologies, despite the general arguments of \cite{BBH,Henn}.
\end{remark}

\section{Polyakov Theory --- Reduced Phase Space}\label{Sec:RPSPOL}
In this section we will analyze the reduced phase space (RPS) of the Polyakov action. We begin by recalling that the the natural restriction of fields to the boundary yields a pre-symplectic manifold. Its reduction yields the space of boundary fields $F^\partial_P$, within which we identify the coisotropic submanifold defining the RPS as follows.

\begin{theorem}\label{THM:PolyakovRPS}
The geometric phase space for Polyakov theory is the cotangent bundle
\begin{equation}
    F^\partial_P = T^* C^\infty(\partial M, N) \ni \{J, X\},
\end{equation}
with canonical symplectic form
\begin{equation}\label{eq:Polbdryform}
    \omega^\partial_P = \intl_{\partial M} \delta X^\mu \delta J_\mu.
\end{equation}
The Reduced Phase Space for Polyakov theory is represented by the common zero locus of the functions 
\begin{subequations}\label{eq:PolConstraints}\begin{align}
H_\phi & \coloneqq\intl_{\partial M} \phi\left(\partial_t X_\mu \partial_t X^\mu  - \varsigma J_\mu J^\mu\right)\\
L_\psi & \coloneqq  2 \intl_{\partial M} \psi \partial_t X^\mu J_\mu,
\end{align}\end{subequations}
where $\partial_t$ denotes derivatives tangential to $\partial M$ and $\varsigma=\mathrm{sign}(\mathsf{h})$, which satisfy:
\begin{subequations}\begin{align}
\{H_\phi, H_{\phi '}\} & = L_{[\phi, \phi ']}\label{21/04A5} \\
\{L_{\psi}, L_{\psi '}\} & = L_{[\psi, \psi ']}\label{21/04A6} \\
\{H_{\phi}, L_{\psi}\} & = H_{[\phi, \psi]}\label{21/04A7}
\end{align}\end{subequations}
where $[\phi, \psi]:= (\partial_t \phi) \psi - \phi (\partial_t \psi)$.
\end{theorem}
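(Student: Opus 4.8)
The plan is to follow the Kijowski--Tulczyjew procedure of Section \ref{Sec:RPS} applied to the densitised action \eqref{A7}. First I would compute $\delta S^{\text{cl}}_P$ and isolate the boundary term. Varying with respect to $X$ and integrating by parts produces a bulk Euler--Lagrange term together with $\check\alpha = \intl_{\partial M} J_\mu \delta X^\mu$, where $J_\mu \propto \th^{n\beta}\partial_\beta X_\mu$ is the normal momentum conjugate to $X|_{\partial M}$ (I would keep the precise normalisation open, fixing it so that the constraints take the stated form). Crucially, the variation with respect to $\th$ produces \emph{no} boundary contribution, since no derivatives of the metric appear in \eqref{A7}; it yields only the bulk stress-energy equation $P^\perp(dX^\mu\odot dX_\mu)=0$, i.e.\ \eqref{e:polngequivalence}. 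Thus the pre-boundary one-form is carried entirely by the pair $(X,J)$.

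Next I would analyse the pre-symplectic space $(\check F,\check\omega)$. The pre-boundary fields are $X|_{\partial M}$, the normal jet $\partial_n X$, and the boundary restriction of the metric components $\th^{nn},\th^{nt},\th^{tt}$. Since $\check\omega=\delta\check\alpha=\intl_{\partial M}\delta J_\mu\,\delta X^\mu$ does not involve $\th^{tt}$ and depends on $(\th^{nn},\th^{nt},\partial_n X)$ only through the combination $J$, the kernel of $\check\omega^\sharp$ is spanned by $\partial/\partial\th^{tt}$ together with those variations of $(\th^{nn},\th^{nt},\partial_n X)$ that leave $J$ and $X$ fixed. The key technical point is to verify that this kernel is \emph{regular} (constant rank), using the nondegeneracy $\th^{nn}\neq 0$ of the boundary restriction; this is precisely the hypothesis that will fail for Nambu--Goto. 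Granting regularity, the reduction $\pi^{\text{red}}$ eliminates the metric and the normal jet, leaving $(X,J)$ as coordinates and identifying $F^\partial_P\cong T^*C^\infty(\partial M,N)$ with $\omega^\partial_P=\intl_{\partial M}\delta X^\mu\,\delta J_\mu$ as in \eqref{eq:Polbdryform}.

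I would then produce the constraints by restricting the bulk equations of motion to the boundary. On shell the stress-energy equation forces $g_{\alpha\beta}=\partial_\alpha X\cdot\partial_\beta X$ to be proportional to $h_{\alpha\beta}$, so contracting with $\th^{n\alpha}$ and using $\th^{n\gamma}h_{\gamma t}=0$, $\th^{n\gamma}h_{\gamma n}=\sqrt{\mathsf h}$ shows that its two independent boundary components are basic with respect to $\pi^{\text{red}}$: after eliminating $\partial_n X$ through the definition of $J$ and the unit-determinant relation $\th^{nn}\th^{tt}-(\th^{nt})^2=\varsigma$, they can be rewritten purely in terms of $(X,J)$. This yields exactly $\partial_t X^\mu J_\mu=0$ (momentum) and $\partial_t X_\mu\partial_t X^\mu-\varsigma J_\mu J^\mu=0$ (Hamiltonian), which after smearing by $\psi,\phi\in C^\infty(\partial M)$ are the functions $L_\psi$ and $H_\phi$ of \eqref{eq:PolConstraints}. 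I expect the bookkeeping of normalisation factors and the clean elimination of the auxiliary normal data to be the most delicate step.

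Finally, the relations \eqref{21/04A5}--\eqref{21/04A7} follow from the canonical brackets $\{X^\mu(t),J_\nu(t')\}=\delta^\mu_\nu\delta(t-t')$ read off from $\omega^\partial_P$. Since $L_\psi$ generates the Lie derivative along $\psi\partial_t$ on both $X$ and $J$, the relations $\{L_\psi,L_{\psi'}\}=L_{[\psi,\psi']}$ and $\{H_\phi,L_\psi\}=H_{[\phi,\psi]}$ are immediate from equivariance of the $\mathrm{Vect}(\partial M)$-action. The remaining bracket $\{H_\phi,H_{\phi'}\}=L_{[\phi,\phi']}$, in which two Hamiltonian constraints close onto a momentum constraint, is the one genuine computation, and the only place where the $X$-dependence of the target metric $G_{\mu\nu}(X)$ could in principle obstruct closure; the cleanest route is to diagonalise into the chiral combinations $T_\pm\propto(\partial_t X\pm\sqrt{-\varsigma}\,J)^2$, check that the $G$-dependent terms cancel so that their smearings generate two commuting copies of the Witt algebra, and then recombine to recover \eqref{21/04A5}--\eqref{21/04A7}.
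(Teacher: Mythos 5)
Your proposal is correct and rests on the same Kijowski--Tulczyjew skeleton as the paper's proof (boundary one-form carried by $(X,J)$, regularity of $\mathrm{ker}\,\check\omega$ via ${\th}^{nn}\neq 0$, constraints from restricting \eqref{A17-2} to the boundary, closure of the smeared constraints), but you execute the two technical steps by genuinely different arguments. For the reduction, you observe that $\check\alpha$ is the pullback of the tautological one-form along the submersion $({\th},X,\partial_n X)\mapsto (X,J)$, so that $\mathrm{ker}\,\check\omega^\sharp$ is exactly the vertical distribution of that map and regularity is automatic; the paper instead computes the kernel component-by-component (Equations \eqref{A12}--\eqref{A14}) and then integrates the kernel flow explicitly, gauge-fixing ${\th}$ to a reference metric. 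Your route is shorter and more conceptual, while the paper's explicit flow produces a coordinate formula for $\pi^\partial$ that is reused later (the matching with Nambu--Goto in Theorem \ref{thm:NGRPS} and the BV-BFV analysis both mimic that transformation). For the algebra, you derive $\{L_\psi,L_{\psi'}\}$ and $\{H_\phi,L_\psi\}$ from equivariance of the cotangent-lifted $\mathrm{Vect}(\partial M)$-action and reduce $\{H_\phi,H_{\phi'}\}$ to the chiral split $T_\pm\propto(\partial_t X\pm\sqrt{-\varsigma}\,J)^2$; the paper computes all three brackets directly from the Hamiltonian vector fields, including the $\partial G_{\mu\nu}/\partial X$ terms, which is longer but verifies the curved-target cancellations you defer. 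Three small points to watch if you flesh this out: (i) the contractions ${\th}^{n\alpha}f_{\alpha\beta}$ are not themselves basic --- ${\th}^{n\alpha}f_{\alpha n}$ carries a $({\th}^{nn})^{-1}$ prefactor and a ${\th}^{nt}$-multiple of the momentum constraint --- so, as in Appendix \ref{A:contraintmanipulation}, you must pass to ${\th}$-dependent recombinations that are exactly basic and cut out the same vanishing locus; (ii) the structure constants in \eqref{21/04A5}--\eqref{21/04A7} match the factor $2$ in $L_\psi$ only with the paper's bracket normalisation $\{f,g\}=\tfrac12\iota_{\mathbb F}\iota_{\mathbb G}\omega_P^\partial$, so fix your conventions consistently; (iii) for $\varsigma=+1$ the chiral combinations are complex, so that case needs a separate (if routine) treatment, consistent with the paper restricting its Part 4 to $\varsigma=-1$.
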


\subsection{Proof of Theorem \ref{THM:PolyakovRPS}}
We will split this proof in four parts. First we will show that the variation of Polyakov's action functional induces a pre-symplectic form on the boundary. Then we will compute the pre-symplectic reduction, and finally we will analyse the structure of canonical constraints. 

\begin{proof}[Proof. Part 1]
Recall that $d$ is the de Rham differential over the world sheet manifold $M$, and thus acts over the k-forms over $M$ ($ d : \Omega^k (M) \rightarrow \Omega^{k+1} (M)$), while $\delta$ is interpreted as the vertical differential on local forms on ${F}_{P}$. 

Let us compute the variation of Polyakov's action functional and split it into a boundary term, and the Euler--Lagrange term:
\begin{equation} \label{A2}
\begin{aligned}
\delta S^{\text{cl}}_P = \int_{M} \delta( d X^\mu& *_{{\th}} d X^\nu G_{\mu \nu} (X) )\\
=\int_{M} \bigg[d( 2 \delta X^\mu *_{{\th}} d X^\nu G_{\mu \nu} (X) ) & - 2 \delta X^\mu d( *_{{\th}} d X^\nu G_{\mu \nu} (X) )   \\
+ d X^\mu \delta(*_{{\th}}) d X^\nu G_{\mu \nu} (X) +& d X^\mu *_{{\th}} d X^\nu \delta( G_{\mu \nu} (X) ) \bigg]
\end{aligned}
\end{equation} 
where we used: 
\begin{equation} \label{A3}
\delta d= d \delta, \qquad \delta(*_{{\th}}) d X^\nu :=\delta( \sqrt {\mathsf h} h^{\alpha \beta}) \varepsilon_{\beta \gamma} \partial_\alpha X^\nu d {x}^\gamma,
\end{equation}
with $\varepsilon_{\alpha\beta}$ denoting the antysimmetric Levi--Civita symbol.
We can split \eqref{A2} into a term $EL$, yielding the Euler--Lagrange equations of motion:
\begin{equation}\label{A4}
EL:=\int_{M} \bigg[- 2 \delta X^\mu d[ *_{{\th}} d X^\nu G_{\mu \nu}] + 
 d X^\mu \delta(*_{{\th}}) d X^\nu G_{\mu \nu} + d X^\mu *_{{\th}} d X^\nu \delta G_{\mu \nu}  \bigg]
\end{equation}
and a boundary term, which we can interpret as a one-form $\check{\alpha}$:
\begin{equation}\label{A8}
\check{\alpha}= \int_{\partial M} 2 \delta X^\mu *_{{\th}} d X^\nu G_{\mu \nu} (X) = \int_{\partial M} \delta X^\mu \bigg[ {\th}^{n  \alpha} \partial_{\alpha} X^\nu G_{\mu \nu} (X) \bigg] d {x}^{t}
\end{equation}
over the space of pre-boundary fields $\check{{F}}_{P}$, defined as the restriction of fields and normal jets (i.e.\ $\partial_n X^\mu$) to the boundary. There is a natural surjective submersion $\check{\pi}\colon F_P \to \check{F}_P$, so that
$$
\delta S_P = \check{\pi}^*\check{\alpha} + EL.
$$

Observe that $\partial_n X^\mu$ is henceforth considered to be an independent field. Let us compute the pre-boundary two-form $\check{\omega}:=\delta \check{\alpha}$
\begin{equation}\label{A9}
\begin{aligned}
\check\omega= \int_{\partial M} &\delta X^\mu \bigg[ - {\th}^{n \rho} {\th}^{\sigma \beta} (\delta {\th}_{\rho \sigma}) \partial_\beta X^\nu G_{\mu \nu} (X) + \\
&{\th}^{n \beta} (\delta \partial_\beta X^\nu) G_{\mu \nu} (X) + {\th}^{n \beta} \partial_\beta X^\nu \frac{\partial G_{\mu \nu} (X)}{\partial X^\rho}  \delta X^\rho \bigg].
\end{aligned}
\end{equation}
We would like to compute the reduction $\check{F}_{P}/\mathrm{ker}(\check{\omega})$ by the kernel of the pre-boundary two-form. This is allowed when $\check{\omega}$ is pre-symplectic, that is to say when its kernel is a regular subbundle. To compute the kernel, we consider the general vector:
\begin{equation}\label{A10}
\mathbb{X}:= \intl_{\partial M}(\mathbb{X}_X)^\mu  \frac{\delta}{\delta X^\mu} + (\mathbb{X}_{\partial_n X})^\mu  \frac{\delta}{\delta \partial_n X^\mu} + (\mathbb{X}_{{\th}})_{{\alpha \beta}} \frac{\delta}{\delta {\th}_{\alpha \beta}}
\end{equation}
and look at the equation $\iota_{\mathbb{X}} \check\omega=0$:
\begin{multline}\label{A11}
\iota_X \check\omega = \int_{\partial M}  \bigg[ - (\mathbb{X}_X)^\mu  \bigg( - {\th}^{n \rho} {\th}^{\sigma \beta} (\delta {\th}_{\rho \sigma}) \partial_\beta X^\nu G_{\mu \nu} (X) + \\
{\th}^{n \beta} (\delta \partial_\beta X^\nu) G_{\mu \nu} (X) + {\th}^{n \beta} \partial_\beta X^\nu \frac{\partial G_{\mu \nu} (X)}{\partial X^\rho}  \delta X^\rho \bigg)+\\
\delta X^\mu \bigg( - {\th}^{n \rho} {\th}^{\sigma \beta} (\mathbb{X}_{{\th}})_{{\rho \sigma}} \partial_\beta X^\nu G_{\mu \nu} (X) + {\th}^{n n} (\mathbb{X}_{\partial_n X})^\nu   G_{\mu \nu} (X) + \\
{\th}^{n t}  \partial_t(\mathbb{X}_X)^\nu  G_{\mu \nu} (X) + {\th}^{n \beta} \partial_\beta X^\nu \frac{\partial G_{\mu \nu} (X)}{\partial X^\rho}  (\mathbb{X}_{X})^{\rho} \bigg)
 \bigg]=0.
\end{multline}
Reordering the terms and integrating by parts the term that contains\footnote{Recall that the index $t$ denotes directions that are tangent to the boundary.} $\delta \partial_t X^\nu = \partial_t \delta X^\nu$, we obtain the conditions:

\begin{subequations}\label{A12}
\begin{align} 
\delta \partial_n X^\nu\colon \qquad &  \bigg((\mathbb{X}_X)^\mu   G_{\mu \nu} (X) {\th}^{n n} \bigg)  = 0\label{A12B0}\\
\delta {\th}_{\rho \sigma}\colon\qquad &  \bigg( (\mathbb{X}_X)^\mu   \partial_\beta X^\nu G_{\mu \nu} (X) {\th}^{n \rho} {\th}^{\sigma \beta} \bigg)  =0 \label{A12B1}\\
\delta X^\mu\colon\qquad & \bigg( - {\th}^{n \rho} {\th}^{\sigma \beta} (\mathbb{X}_{{\th}})_{{\rho \sigma}} \partial_\beta X^\nu G_{\mu \nu} (X) + {\th}^{n n} (\mathbb{X}_{\partial_n X})^\nu   G_{\mu \nu} (X) + \notag\\
& {\th}^{n t}  \partial_t (\mathbb{X}_X)^\nu  G_{\mu \nu} (X) + {\th}^{n \beta} \partial_\beta X^\nu \frac{\partial G_{\mu \nu} (X)}{\partial X^\rho}  (\mathbb{X}_{X})^{\rho} + \notag\\
&\partial_t ({\th}^{n t} G_{\mu \nu} (X) (\mathbb{X}_X)^\nu  ) - {\th}^{n \beta} \partial_\beta X^\nu \frac{\partial G_{\rho \nu} (X)}{\partial X^\mu} (\mathbb{X}_{X})^{\rho} \bigg)   =0 \label{A12B2}
\end{align}
\end{subequations}
From Equation \eqref{A12B0}, we see that if ${\th}^{n n} \ne 0$ then $ (\mathbb{X}_X)^\mu  =0$. Our Definition \ref{def:Polth} of non-null Polyakov theory only allows this possibility. In fact, as shown in detail in Appendix $\ref{AppendixLightlike}$, the condition ${\th}^{n n}=0$ means that we are dealing with a null/light-like boundary, i.e.\ it is equivalent to a degenerate induced metric on the 1-dimensional boundary. We are indeed considering the case where the boundary is nowhere light-like. 

Equation \eqref{A12B1} is a coinsequence of \eqref{A12B0}, while \eqref{A12B2} becomes:
\begin{equation}\label{A13}
(\mathbb{X}_{\partial_n X})^\mu  = \frac{1}{{\th}^{n n}} {\th}^{n \rho} {\th}^{\sigma \beta} (\mathbb{X}_{{\th}})_{{\rho \sigma}} \partial_\beta X^\mu.
\end{equation}
This implies that the kernel $\mathrm{ker}(\check{\omega}^\sharp)$ is a regular subbundle in $T\check{F}_P$, and it is described by the following conditions: 
\begin{gather}\label{A10B1}
 ( (\mathbb{X}_X)^\mu ,  (\mathbb{X}_{\partial_n X})^\mu ,  (\mathbb{X}_{{\th}})_{{\alpha \beta}}) = ( 0,   \frac{1}{{\th}^{n n}} {\th}^{n \rho} {\th}^{\sigma \beta} (\mathbb{X}_{{\th}})_{{\rho \sigma}} \partial_\beta X^\mu, (\mathbb{X}_{{\th}})_{{\alpha \beta}}),
\end{gather}
with $(\mathbb{X}_{{\th}})_{{\alpha \beta}}$ free, so that a generic kernel vector reads:
\begin{equation} \label{A14}
\mathbb{X}= \intl_{\partial M} \frac{1}{{\th}^{n n}} {\th}^{n \rho} {\th}^{\sigma \beta} (\mathbb{X}_{{\th}})_{{\rho \sigma}} \partial_\beta X^\mu \frac{\delta}{\delta \partial_n X^\mu} + (\mathbb{X}_{{\th}})_{{\alpha \beta}} \frac{\delta}{\delta {\th}_{\alpha \beta}}.
\end{equation}
\end{proof}

\begin{remark}
This first part of the proof shows that $\check{F}_P$ --- the space of pre-boundary fields, composed of field restrictions to the boundary and normal jets --- is a pre-symplectic manifold\footnote{We require the kernel of a presymplectic form to be a subbundle.}, with pre-symplectic form given by $\check{\omega}=\delta\check{\alpha}$, with $\check{\alpha}$ obtained by the integration by parts of the variation of the Polyakov action: $\delta S^{\text{cl}}_P= \mathrm{EL} + \check{\pi}^*\check{\alpha}$ with $\check{\pi}\colon F_P \to \check{F}_P$ the restriction map. The two form $\check{\omega}$ is obviously closed, and the first part of the proof serves to show that its kernel defines a regular foliation, i.e.\ a subbundle of $T\check{F}_P$. The next step is then to perform presymplectic reduction and find an explicit coordinate chart for the quotient $F_P^\partial\coloneqq\check{F}_P/\mathrm{ker}(\check{\omega}^\sharp)$.
\end{remark}

\begin{proof}[Proof. Part 2]
We now proceed to perform a reduction over the space of fields that eliminates the degrees of freedom related to the kernel. This produces a map ${\pi}^\partial: \check{{F}}_P \rightarrow {F}_P^\partial$ onto a symplectic manifold $(F^\partial_P,\omega_P^\partial)$ with $ \pi^{\partial *}\omega^\partial = \check{\omega}$. 

In order to find a chart for the reduced space $F^\partial_P$, we consider a transformation of the fields in $\check{F}_P$ along the flow of the vertical vector fields (i.e.\ the vector fields in the kernel of $\check{\omega}$). 
To see this, let us begin by observing that a basis of vector fields in the kernel $\mathrm{ker}(\check{\omega}^\sharp)$ is given by the vector $\mathbb{X}$ of Equation \eqref{A14}:

To compute the flow associated to this vector field we have to solve the set of differential equations:
\begin{subequations}\begin{align}
\partial_\tau{X^\mu} &= (\mathbb{X}_X)^\mu  =0\\
\partial_\tau{{\th}}_{\alpha \beta} &= (\mathbb{X}_{{\th}})_{{\alpha \beta}}\\
\partial_\tau{\partial_n X^\mu} &= \frac{1}{{\th}^{n n}} {\th}^{n \rho} {\th}^{\sigma \beta} (\mathbb{X}_{{\th}})_{{\rho \sigma}} \partial_\beta X^\mu =\frac{1}{{\th}^{n n}} {\th}^{n \rho} {\th}^{\sigma \beta}\partial_\tau{{\th}}_{\rho \sigma} \partial_\beta X^\mu\label{31/05A4}
\end{align}\end{subequations}
where $\tau$ is the parameter of the flow, while $n$ and $t$ are the indexes respectively of the normal and tangent direction to the boundary $\partial M$. From the first equation it follows immediately that $X^\mu$ doesn't change, and from the second it follows that we have almost complete control on the flow of ${\th}_{\alpha \beta}$. To find the transformation of $\partial_n X^\mu$, we can manipulate the third equation to obtain:
\begin{gather}
\partial_\tau{\partial_n X^\mu}  =\frac{1}{{\th}^{n n}} {\th}^{n \rho} {\th}^{\sigma \beta} \partial_\tau{{\th}}_{\rho \sigma} \partial_\beta X^\mu= - \frac{1}{{\th}^{n n}} \partial_\tau{{\th}}^{n \rho} {\th}^{\sigma \beta}{\th}_{\rho \sigma} \partial_\beta  X^\mu=-\frac{1}{{\th}^{n n}} \partial_\tau{{\th}}^{n \beta} \partial_\beta X^\mu \notag \notag\\
\Rightarrow {\th}^{n n } \partial_\tau{\partial_n X^\mu} = - \partial_\tau{{\th}}^{n n} \partial_n X^\mu - \partial_\tau{{\th}}^{n t} \partial_t X^\mu \notag \notag \\
\Rightarrow \partial_\tau ( {\th}^{n n} \partial_n X^\mu ) = - \partial_\tau ( {\th}^{n t} \partial_t X^\mu ) \notag \notag \\
{\th}^{n n} (\tau) \partial_n X^\mu (\tau) = {\th}^{n n} (\tau_0) \partial_n X^\mu (\tau_0) - {\th}^{n t}(\tau) \partial_t X^\mu (\tau) + {\th}^{n t} (\tau_0) \partial_t X^\mu (\tau_0)
\end{gather} 
where we used that $\partial_\tau{X}^{\mu}=0 \implies \partial_\tau{(\partial_t X^\mu)} =0$. The solution (reported in the last line) is independent of the particular choice of $\partial_\tau {{\th}_{\alpha \beta}}$, but depends only on the final value of ${\th}_{\alpha \beta}$. We use the freedom of choice of $(\mathbb{X}_{{\th}})_{{\alpha \beta}}$ to set ${\th}_{\alpha \beta}$ to a reference pseudo-Riemannian metric that we will choose to be Minkowski metric  up to a an overall sign. This follows from the fact that, along the flow of $\mathbb{X}$, the value of $\th^{nn}(\tau)$ must not vanish,\footnote{When it vanishes we have a lightlike boundary point, as detailed in Appendix \ref{AppendixLightlike}.} and its sign is therefore constant. We will then set the final value of the metric $\th_{\alpha\beta}(\tau)$ to be $\th_{\alpha\beta}=-\chi \eta_{\alpha\beta}$, where $\chi=\text{sign}(\th^{nn})$. Observe that this prescription covers both scenarios: when $\partial M$ space-like\footnote{Recall that $\th_{tt}=\varsigma \th^{nn}$, so that $\th_{tt}=1$ corresponds to the case $\partial M$ spacelike.} we have $\chi=-1$ and $-\chi\eta = \eta = \mathrm{diag}(-1,1)$, and the opposite when $\partial M$ is time-like. Finally, this procedure works also when $h$ is Riemannian, by choosing instead $\eta_{\alpha\beta}\leadsto\delta_{\alpha\beta}=\mathrm{diag}(1,1)$.

To achieve this, we can choose a $(\mathbb{X}_{{\th}})_{{\alpha \beta}}$ constant in $\tau$:
\begin{gather}\label{20/04A1}
(\mathbb{X}_{{\th}})_{{\alpha \beta}} (\tau) = ( \chi \eta_{\alpha\beta} - {\th}_{\alpha \beta} (0) )
\end{gather}
In this way, ${\th}_{\alpha \beta} (\tau) = {\th}_{\alpha \beta} (0) + \tau ( \chi \eta_{\alpha\beta} - {\th}_{\alpha \beta} (0) )$. Setting the flow to end at $\tau=1$, we get
\begin{gather}\label{20/04A2}
{\th}_{\alpha \beta} (1) = {\th}_{\alpha \beta} (0) + ( \chi \eta_{\alpha\beta} - {\th}_{\alpha \beta} (0) ) =  \chi \eta_{\alpha\beta}.
\end{gather}
We let ${\th}_{n t} \in {\Bbb R}$, ${\th}_{t t}$  vary freely in $ \Bbb R^{+}$ (resp. $\mathbb{R}^-$) and ${\th}_{n n}$ be the function ${\th}_{n n} = \frac{({\th}_{n t})^2 -1}{{\th}_{t t}}$, so that the equations \eqref{20/04A1} and \eqref{20/04A2} are meant only for the indices $(\alpha \beta) = (t t)$ and $(\alpha \beta) = (n t)$.  This choice is due to the fact that we are already considering the case ${\th}_{t t} = \varsigma {\th}^{n n} \neq 0$. We obtain:
\begin{gather}
X^\mu (1) = X^\mu (0) \notag\\
{\th}_{\alpha \beta}(1) = \chi \eta_{\alpha\beta}\notag\\
J^\mu:=\partial_n X^\mu (1) =  \chi ( {\th}^{n n} (0) \partial_n X^\mu (0) + {\th}^{n t} (0) \partial_t X^\mu (0)) \notag \\
\Rightarrow \ {} J^\mu = \chi  {\th}^{n \alpha} \partial_\alpha X^\mu  \label{A15}
\end{gather} 
where we used that $\th^{nn}(1) = \chi$.

We use the transformations found above to find a local chart in $F^\partial_P$, so that the map $\pi^\partial\colon \check{F}_P\to F^\partial_P$ reads: 
\begin{equation}
{\pi}^\partial: ({\th}_{\alpha \beta}, X^\mu, \partial_n X^\mu) \rightarrow (  X^\mu, J^\mu:= \chi {\th}^{n \alpha} \partial_\alpha X^\mu)
\end{equation}
and since the pre-boundary one-form is:
\begin{equation}
\check{\alpha} = \int_{\partial M} \delta X^\mu {\th}^{n  \alpha} \partial_{\alpha} X^\nu G_{\mu \nu} (X)
\end{equation}
we can easily gather that $\check{\alpha}$ is basic: $\check{\alpha} = \pi^{\partial*} \alpha^\partial$.  We take as elementary the field $J_\mu := J^\nu G_{\nu \mu}={\th}^{n  \alpha} \partial_{\alpha} X^\nu G_{ \nu \mu} (X)$, such that
\begin{equation}
\begin{aligned}
\alpha^\partial  = \intl_{\partial M} J_\mu \delta X^\mu \qquad \omega_P^\partial = \intl_{\partial M} \delta J_\mu\delta X^\mu,
\end{aligned}
\end{equation}
and the variation of the action becomes:
\begin{gather*}
\delta S_P = E L + {\pi}^* \alpha^\partial 
\end{gather*}
where $\pi=\pi^\partial\circ\check{\pi}$. Observe that the reduction maps for the two scenarios $\partial M$ space/time-like are mapped into each other by $J_\mu\to -J_\mu$.
\end{proof}

\begin{remark}
In the second part of the proof we have found an explicit global Darboux coordinate chart for $F^\partial_P$, thus characterising the space of boundary fields as the space of maps from the worldsheet into the cotangent bundle of the target: $F^\partial_P = C^\infty(\partial M,T^*N)$. Observe that the definition of the symplectic manifold $(F^\partial_P,\omega_P^\partial)$ is independent of coordinate choices. The apparent dependency on a choice of adapted coordinates arises when one looks for an explicit chart description for the map $\pi\colon F_P \to F^\partial_P$.

In the following part of the proof we will show how the constraints, defined as those equations of motion that can be solved algebraically in terms of boundary fields, define a coisotropic submanifold in $F^\partial_P$.
In fact, while one way to describe the reduced phase space of a system is to perform a Dirac analysis of the constraints in the bulk, one can express the constraints as the zero locus of functions over the space of boundary fields $F^\partial_P$ (the geometric phase space). To do this, we need the constraint functions induced from the bulk to be basic w.r.t. $\pi^\partial\colon \check{F}_P\to F^\partial_P$. 
\end{remark}

\begin{proof}[Proof. Part 3]
The Euler-Lagrange equations that result from imposing $EL=0$ (see \eqref{A4}), are:
\begin{subequations}
\begin{align}
\partial_\alpha ( {\th}^{\alpha \beta} \partial_\beta X^\nu G_{\mu \nu} (x) ) - \frac{1}{2} {\th}^{\alpha \beta} \partial_\alpha X^\rho \partial_\beta X^\sigma \frac{\partial G_{\rho \sigma}}{\partial X^\mu} =0\label{A17-1}\\
{ f_{\alpha \beta} := [{\th}_{\alpha \beta} \frac{{\th}^{\lambda \tau}}{2} \partial_\lambda X^\mu \partial_\tau X^\nu - \partial_\alpha X^\mu \partial_\beta X^\nu]G_{\mu \nu} =0}\label{A17-2}
\end{align}
\end{subequations}

While \eqref{A17-1} is an evolution equation --- a differential equation in $F_P$ --- \eqref{A17-2} is a constraint. 
The functions defining constraints on the space of bulk fields ${F}_P$ restrict to functions on the space of pre-boundary fields $\check{F}_P$. We claim that the functions on $\check{F}_P$ that define the constraints are basic with respect to ${\pi}^\partial: \check{{F}}_P \rightarrow {F}^\partial_P$

The $f_{\alpha \beta} ({\th}_{\alpha \beta}, X^\mu, \partial_n X^\mu)$'s are not manifestly functions of the reduced variables $\{X^\mu,J_\mu\}$. However, we can look for a combination
\begin{equation}
\tau^{\alpha \beta} ({\th}) f_{\alpha \beta} ({\th}, X^\mu, J_\mu) =: g(X^\mu, J_\mu) \label{21/04A1}
\end{equation}
and require that it be a function only of the reduced variables  and that they span the same vanishing locus (i.e.\ the set of points where the constraint functions vanish). The correct choice for such $\tau^{\alpha\beta}$ will be given in Appendix \ref{A:contraintmanipulation}. We expand:
\begin{equation}\label{A18}
{\tau^{\alpha \beta} ({\th}) f_{\alpha \beta} ({\th}, X^\mu, J_\mu) = l^{n n} J_\mu J^\mu + 2 l^{n t} \partial_t X^\mu J_\mu + l^{t t} \partial_t X^\mu \partial_t X^\nu G_{\mu \nu}}
\end{equation}
where the $l^{\alpha \beta}$'s do not depend\footnote{Notice that we assumed $\tau$ to be a function of ${\th}$ alone. More generally we could allow a dependency also on the fields $X^\mu$ and $J_\mu$, but it turns out that there is no need for that.} on ${\th}_{\alpha \beta}$. 

As defined in \eqref{A17-2}, the $f_{\alpha \beta}$'s are functions of $ ({\th}, X^\mu, \partial_n X^\mu)$, while in \eqref{21/04A1} we consider them as functions of $ ({\th}, X^\mu, J_\mu  [{\th}, X^\mu, \partial_n X^\mu]) $. We do this in order to remove the dependency on ${\th}$ through appropriate choices of $\tau_{\alpha \beta}$, and be left with functions defined on the space of boundary fields ${F}^\partial_P$. We can do it because the map $({\th}_{\alpha \beta}, X^\mu, J_\mu)  \rightarrow ({\th}_{\alpha \beta}, X^\mu, \partial_n X^\mu)$ is a diffeomorphism in $\check{F}_P$ that preserves $\mathrm{ker}(\check{\omega})$.

With the appropriate algebraic manipulations (reported in Appendix \ref{A:contraintmanipulation}), the constraints $f_{\alpha \beta}=0$ are equivalent to the following conditions, which only depend on the reduced variables:
\begin{subequations}\label{e:newPolConstraints}\begin{align}
{{(\partial_t X^\mu \partial_t X^\nu - \varsigma J^\mu J^\nu)G_{\mu \nu}=0}}\\
{\partial_t X^\mu J_\mu=0}
\end{align}\end{subequations}
where $J^\mu := J_\nu G^{\mu \nu}$. Notice that we are left with only two constraint functions even if we started with three. This was expected, since $f_{\alpha \beta}$ has only two degrees of freedom, and the combination ${\th}^{\alpha \beta} f_{\alpha \beta}$ vanishes. The calculations in this section of the prove are independent of the sign of $\th_{tt}$, and the sign of $J_\mu$ does not alter the constraints.

\end{proof}

\begin{remark}
Now that we have found a generating set of constraint functions on $F^\partial_P$ we can compute their Poisson brackets. Recall that, on a symplectic manifold, the Poisson bracket of two generic functions $f$ and $g$ is defined as: $\{ f, g\} = \frac{1}{2} \iota_{\mathbb{F}} \iota_{\mathbb{G}} \omega_P^\partial$, where $\mathbb F$  is the Hamiltonian field of $f$ defined by the equation $\iota_{\mathbb{F}} \omega_P^\partial = \delta f$, and the same holds for $\mathbb G$ and $g$. We consider now the Lorentzian case $\varsigma=-1$. The other case is analogous.
\end{remark}

\begin{proof}[Proof. Part 4.]
The Hamiltonian fields of the constraint functions are defined by the equation: $\iota_{\mathbb{H}_\xi} \omega_P^\partial = \delta H_{\xi}$ and $\iota_{\mathbb{L}_\psi}  \omega_P^\partial = \delta L_{\psi}$, hence:
\begin{gather*}
{\delta H_{\xi} =   \intl_{\partial M}\bigg( -  2 \partial_t( \xi (\partial_t X^\mu)G_{\mu \nu}) +  \xi ( \partial_t X^k \partial_t X^\nu + J^k J^\nu) \frac{\partial G_{ k \nu}}{\partial X^\mu})\bigg) \delta X^\mu +  2 \bigg( \xi J^\mu \bigg) \delta J_\mu}\\
\delta L_{\psi} = 2 \intl_{\partial M}\bigg( -  \partial_t( \psi J_\mu) \bigg) \delta X^\mu + \bigg(  \psi \partial_t X^\mu \bigg) \delta J_\mu
\end{gather*}
leads to
\begin{equation}
\begin{aligned}
\iota_{\mathbb{H}_\xi} \omega_P^\partial = \intl_{\partial M} (\mathbb{H}_{\xi,X})^{\mu} \delta J_{\mu} - (\mathbb{H}_{\xi,J})_{\mu} \delta X^\mu
\left(= 
\intl_{\partial M} \frac{ \partial H_{\xi}}{\partial X^\mu} \delta X^\mu + \frac{\partial H_{\xi}}{\partial J_\mu} \delta J_\mu\right)
\end{aligned}
\end{equation}
\begin{equation}\label{21/04A4}
\begin{aligned}
\iota_{\mathbb{L}_\psi}  \omega_P^\partial = \intl_{\partial M} (\mathbb{L}_{\psi,X})^{\mu} \delta J_{\mu} - (\mathbb{L}_{\psi,J})_{\mu} \delta X^\mu \left(= \intl_{\partial M} \frac{ \partial L_{\psi}}{\partial X^\mu} \delta X^\mu + \frac{\partial L_{\psi}}{\partial J_\mu} \delta J_\mu\right).
\end{aligned}
\end{equation} 
Thus:
\begin{equation*}\label{A27}
\begin{aligned}
&{\mathbb{H}_{\xi} :=  \intl_{\partial M}  2 \bigg[  \xi J^\mu \bigg] \frac{\delta}{\delta X^\mu} +   \bigg[ 2 \partial_t ( \xi (\partial_t X^\nu)G_{\mu \nu} ) - \xi (\partial_t X^k \partial_t X^\nu  + J^k J^\nu) \frac{\partial G_{ k \nu}}{\partial X^\mu} \bigg] \frac{\delta}{\delta J_\mu}}\\
&\mathbb{L}_{\psi} : = \intl_{\partial M} 2 \bigg[ \psi \partial_t X^{\mu} \bigg] \frac{\delta}{\delta X^\mu} + 2 \bigg[  \partial_t (\psi J_\mu ) \bigg] \frac{\delta}{\delta J_\mu}
\end{aligned}
\end{equation*}
We can calculate now the Poisson brackets of the constraints: 
\begin{gather*}\label{A28}
\{ H_{\xi}, L_{\psi} \} = \frac{1}{2}\omega_P^\partial (\mathbb{H}_{\xi}, \mathbb{L}_{\psi})=
\\
\frac{1}{2} \intl_{\partial M}  -(\mathbb{H}_{\xi,X})^{\mu} (\mathbb{L}_{\psi,J})_{\mu} + (\mathbb{H}_{\xi,J})_{\mu} (\mathbb{L}_{\psi,X})^{\mu}=\\
{ \intl_{\partial M} - 2 \xi J^\mu \partial_t ( \psi J_\mu) +  2 \psi \partial_t X^\mu \partial_t (\xi (\partial_t X^\nu)G_{\mu \nu}) -  \psi \partial_t X^\mu  \xi ( \partial_t X^k \partial_t X^\nu + J^k J^\nu) \frac{\partial G_{ k \nu}}{\partial X^\mu}.}
\end{gather*}

Using the Leibniz rule and integrating by parts the third line, we obtain:
\begin{gather*}
{\intl_{\partial M} \bigg(\partial_t (\xi \psi )  - 2 \xi ( \partial_t \psi ) \bigg) J_\mu J^\mu +
\bigg(- \partial_t (\psi \xi ) + 2 \psi (\partial_t \xi)\bigg) \partial_t X^\mu \partial_t X^\nu G_{\mu \nu}} \\ 
{=\intl_{\partial M}  \bigg((\partial_t \xi ) \psi - \xi ( \partial_t \psi ) \bigg) ( J_\mu J^\mu + \partial_t X^\mu \partial_t X^\nu G_{\mu \nu})} 
\end{gather*}

Thus, defining $[\xi, \psi]:= (\partial_t \xi) \psi - \xi (\partial_t \psi)$, we conclude that $\{H_\xi, L_{\psi}\} = H_{[\xi, \psi]}$. The other two Poisson brackets yield:
{\begin{gather*}
 \{ H_{\xi}, H_{\xi '} \} =  \frac{1}{2} \omega_P^\partial(\mathbb{H}_{\xi}, \mathbb{H}_{\xi '})=
\intl_{\partial M} -  2\xi J^\mu \partial_t ( \xi ' (\partial_t X^\nu)G_{\mu \nu}) + 2  \xi '  J^\mu \partial_t (\xi (\partial_t X^\nu)G_{\mu \nu})+\notag\\
\intl_{\partial M}  \xi J^\mu  \xi '  (\partial_t X^k \partial_t X^\nu   + J^k J^\nu) \frac{\partial G_{ k \nu}}{\partial X^\mu} -  \xi ' J^\mu  \xi (\partial_t X^k \partial_t X^\nu  + J^k J^\nu) \frac{\partial G_{ k \nu}}{\partial X^\mu} =\notag\notag\\
\intl_{\partial M} -   \xi J_\mu (\partial_t   \xi ' ) \partial_t X^\mu +   \xi '  J_\mu (\partial_t \xi )\partial_t X^\mu= 
\intl_{\partial M} [ (\partial_t \xi )  \xi '  - \xi (\partial_t   \xi ' ) ] J_\mu \partial_t X^\mu \notag
\end{gather*}}
Thus: $\{H_\xi, H_{\xi '}\} = L_{[\xi, \xi ']}$. And:

\begin{gather*}
 \{ L_{\psi}, L_{\psi '} \} =  \frac{1}{2} \omega^\partial(\mathbb{L}_{\psi}, \mathbb{L}_{\psi '})=
\intl_{\partial M} - 2 \psi \partial_t X^\mu \partial_t ( \psi ' J_\mu) + 2 \psi ' \partial_t X^\mu \partial_t (\psi  J_\mu) =\notag\\
\intl_{\partial M} - 2 \psi \partial_t X^\mu (\partial_t \psi ')  J_\mu + 2 \psi '  \partial_t X^\mu (\partial_t \psi ) J_\mu= \intl_{\partial M}  [(\partial_t \psi ) \psi ' - \psi (\partial_t  \psi ') ] 2 J_\mu \partial_t X^\mu \notag
\end{gather*}
which means that: $\{L_{\psi}, L_{\psi '}\} = L_{[\psi, \psi ']}$.
\end{proof}
\begin{remark}
We have shown that the constraints that we have derived in Part 3
\begin{equation}
\begin{aligned}
&{H_{\xi} :=  \intl_{\partial M} \xi  ( (\partial_t X^\nu)G_{\mu \nu} \partial_t X^\mu - \varsigma J_\mu J^\mu )} \\
&L_{\psi} : = 2 \intl_{\partial M} \psi  ( \partial_t X^\mu J_\mu ),
\end{aligned}
\end{equation}
where $\xi$ and $\psi $ are smooth functions on the boundary, $\xi, \psi \in C^\infty (\partial M)$, are closed under the Poisson brackets. Another way of phrasing this, is that they are first class constraints. 
\end{remark}

\section{Nambu--Goto Theory --- Reduced Phase Space}
In this section we will analyze the boundary structure of Nambu--Goto theory, and describe its reduced phase space. Once again, we consider the field $X\in C^\infty(M,N)=:F_{NG}$. As seen in Section \ref{sec:Strings}, the Nambu--Goto action is the surface pseudo-area of the string:
\begin{equation}\label{A29}
S^{\text{cl}}_{NG} := \int_{M} \sqrt{\mathsf{g}}\, d^2 {x},
\end{equation}
where $\mathsf{g}:= |\mbox{det}(g_{\alpha \beta})|$ and $g_{\alpha \beta}:= \partial_\alpha X^\mu \partial_\beta X^\nu G_{\mu \nu} (X) $. In this case $g_{\alpha \beta}$ is not an elementary field but just a function of $X^\mu$, the map from the worldsheet to the target space that defines the string. 

\begin{theorem}\label{thm:NGRPS}
The geometric phase space of Nambu--Goto theory
$$
F^\partial_{NG} \coloneqq \check{F}_{NG}/\mathrm{ker}(\check{\omega}_{NG})
$$  
coincides with its reduced phase space $\Phi^{\text{red}}_{NG}$. Moreover, denoting by $C_P\subset F^\partial_P$ the submanifold of constraints of Polyakov theory, with $\iota_C\colon C_P \to F^\partial_P$ the inclusion map, there exist maps 
$$
{\pi}^\partial_{\mathrm{partial}}\colon \check{F}_{NG}\to C_P; \qquad \varphi\colon\Phi^{\text{red}}_{NG}\to \Phi^{\text{red}}_{P}
$$ 
such that the following diagram commutes:
\begin{equation}\label{e:PtoNG}
    \xymatrix{
    F_{NG} \ar[d]^{\check{\pi}_{NG}} \ar[rr]^{\phi_{NG}} & & F_P\ar[d]_{\check{\pi}_{P}}\\
    \check{F}_{NG} \ar[dd]^{{\pi}^\partial_{NG}} \ar[dr]^{{\pi}^\partial_{\mathrm{partial}}} & & \check{F}_{P} \ar[d]^{{\pi}^\partial_P} \\
    & C_{P} \ar[d]^{\pi^{\text{red}}_P} \ar[r]^{\iota_C} & F^\partial_P\\
    F^\partial_{NG}\simeq \Phi^{\text{red}}_{NG} \ar[r]^-{\varphi} & \Phi^{\text{red}}_{P} &
    }
\end{equation}
\end{theorem}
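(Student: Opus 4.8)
The plan is to run the Kijowski--Tulczyjew construction of Section \ref{Sec:RPS} for $S^{\text{cl}}_{NG}$ exactly as was done for Polyakov theory in Theorem \ref{THM:PolyakovRPS}, and then to recognise the resulting pre-boundary data as the pullback of the Polyakov boundary data along a canonical map into the constraint locus $C_P$. First I would vary $S^{\text{cl}}_{NG}=\intl_M\sqrt{\mathsf g}\,d^2x$, using $\delta\sqrt{\mathsf g}=\tfrac12\sqrt{\mathsf g}\,g^{\alpha\beta}\delta g_{\alpha\beta}$ together with $\delta g_{\alpha\beta}=2\,\partial_{(\alpha}\delta X^\mu\,\partial_{\beta)}X^\nu G_{\mu\nu}+\partial_\alpha X^\mu\partial_\beta X^\nu\,\partial_\rho G_{\mu\nu}\,\delta X^\rho$. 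Integrating by parts the term carrying $\partial_\alpha\delta X^\mu$ produces the Euler--Lagrange one-form plus a boundary term, which I read as a one-form on the pre-boundary space $\check F_{NG}$ of restrictions of $X$ and of the normal jet $\partial_n X$,
\[
\check\alpha_{NG}=\intl_{\partial M} P_\mu\,\delta X^\mu,\qquad P_\mu:=\sqrt{\mathsf g}\,g^{n\alpha}\partial_\alpha X^\nu G_{\nu\mu}(X),
\]
with $\check\omega_{NG}=\delta\check\alpha_{NG}=\intl_{\partial M}\delta P_\mu\,\delta X^\mu$. This has the same shape as the Polyakov $\check\alpha$ of \eqref{A8}, with $P_\mu$ playing the role of the momentum $J_\mu$; the difference is that here $P_\mu$ is a function of $X$ and its jets alone, since there is no independent metric field.

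The decisive observation is that $P_\mu$ \emph{automatically} satisfies the two Polyakov constraints of \eqref{eq:PolConstraints}. Indeed $P_\mu\partial_t X^\mu=\sqrt{\mathsf g}\,g^{n\beta}g_{\beta t}=\sqrt{\mathsf g}\,\delta^n_t=0$, while $P^\mu P_\mu=\mathsf g\,g^{n\beta}g^{n\gamma}g_{\beta\gamma}=\mathsf g\,g^{nn}=\varsigma\,g_{tt}=\varsigma\,\partial_t X^\mu\partial_t X_\mu$, where I use the $2\times2$ cofactor identity $\mathsf g\,g^{nn}=\varsigma\,g_{tt}$ with $\varsigma=\mathrm{sign}(\det g)$. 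Hence the assignment $\check\rho\colon(X,\partial_n X)\mapsto(X,J:=P)$ lands inside the Polyakov constraint locus $C_P\subset F^\partial_P$, and by construction $\check\omega_{NG}=(\iota_C\circ\check\rho)^*\omega^\partial_P$; that is, the Nambu--Goto pre-symplectic form is exactly the pullback of $\iota_C^*\omega^\partial_P$. I would then show $\check\rho$ is onto $C_P$ --- this is where the non-null hypothesis $g^{nn}\neq0$ enters, allowing one to solve $P_\mu=\sqrt{\mathsf g}(g^{nn}\partial_n X^\nu+g^{nt}\partial_t X^\nu)G_{\nu\mu}$ for the normal jet --- thereby obtaining the surjective submersion $\pi^\partial_{\mathrm{partial}}\colon\check F_{NG}\to C_P$.

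From $\check\omega_{NG}=(\pi^\partial_{\mathrm{partial}})^*\iota_C^*\omega^\partial_P$ the kernel of $\check\omega_{NG}$ splits into the vertical directions of $\pi^\partial_{\mathrm{partial}}$ (the redundancy of $\partial_n X\mapsto P$) and the preimage of the characteristic distribution of the coisotropic $C_P$, the latter spanned by the Hamiltonian vector fields of $H_\phi,L_\psi$. A dimension count ($d+1$ normal-jet components, two automatic constraints cutting the image to a $(d-1)$-dimensional fibre, and a $2$-dimensional characteristic foliation) shows this kernel has constant rank, hence is a regular subbundle and the reduction is well defined. Reducing in a single step then performs simultaneously the partial reduction to $C_P$ and the reduction of $C_P$ by its characteristic foliation, giving
\[
F^\partial_{NG}=\check F_{NG}/\mathrm{ker}(\check\omega_{NG})\;\simeq\;\underline{C_P}=\Phi^{\text{red}}_P .
\]
Because the Nambu--Goto momentum is constrained from the outset, no residual constraint surface survives this reduction, which is precisely the statement $F^\partial_{NG}\simeq\Phi^{\text{red}}_{NG}$; since $\check\omega_{NG}$ is the pullback of $\omega^\partial_P$, the induced map on quotients intertwines the reduced forms and is the symplectomorphism $\varphi\colon\Phi^{\text{red}}_{NG}\to\Phi^{\text{red}}_P$.

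Finally I would assemble the diagram \eqref{e:PtoNG}. The top square commutes by naturality of restriction applied to the classical equivalence $\phi_{NG}$ of Remark \ref{rem:classeq}; the square with $\iota_C$ commutes by the definition of $\pi^\partial_{\mathrm{partial}}$ as the corestriction of $\check\rho$; and the lower triangle commutes because the factorisation above yields $\varphi\circ\pi^\partial_{NG}=\pi^{\text{red}}_P\circ\pi^\partial_{\mathrm{partial}}$. I expect the main obstacle to be the regularity step: establishing that $\mathrm{ker}(\check\omega_{NG})$ has locally constant rank, so that $\check F_{NG}$ is genuinely pre-symplectic and the quotient is smooth. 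Since $P_\mu$ depends nonlinearly on $\partial_n X$ through $g^{n\alpha}$, constancy of rank is not automatic, and it is exactly this type of regularity that will later fail at the BV level and underlie the non-$1$-extendability of Nambu--Goto theory.
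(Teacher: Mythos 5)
Your proposal is correct and rests on the same key fact as the paper's proof --- the Nambu--Goto boundary momentum satisfies Polyakov's constraints \eqref{eq:PolConstraints} identically, so the boundary data factors through $C_P$ --- but it organises the argument in a genuinely different way. The paper never introduces your algebraic map $\check\rho$: it first computes $\mathrm{ker}(\check\omega_{NG})$ by direct calculation (its Part 1, with the algebra in Appendix \ref{AppendixLengthy}), and then builds $\pi^\partial_{\mathrm{partial}}$ by integrating the flow of the $\beta$-part of the kernel, choosing the flow parameters \eqref{A54B1}--\eqref{A54B2} so that the endpoint of the flow is $\partial_n X^\mu\mapsto\chi\sqrt{\mathsf{g}}\,g^{n\alpha}\partial_\alpha X^\mu$; this is also why the paper's chart carries the sign $\chi$, which your flow-free definition does not need. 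You instead define the projection as the boundary Legendre map, observe the pullback identity $\check\omega_{NG}=\check\rho^{\,*}\iota_C^{\,*}\omega^\partial_P$ already at the level of one-forms, and conclude that $\mathrm{ker}(\check\omega_{NG})=(d\check\rho)^{-1}\bigl(\mathrm{ker}\,\iota_C^{\,*}\omega^\partial_P\bigr)$, i.e.\ the vertical directions of $\check\rho$ plus the lift of the characteristic distribution of the coisotropic $C_P$. This buys real economy: the two-step flow and most of the explicit kernel computation are replaced by general facts about pullbacks of presymplectic forms along submersions, and the characteristic part is delegated to Theorem \ref{THM:PolyakovRPS}. What your route does \emph{not} eliminate is the need to prove that $\check\rho$ is a submersion onto $C_P$ with constant fibre rank: the identification of the kernel above, and your dimension count, presuppose exactly this, and it is the nontrivial content of the paper's Part 1 (where the vertical kernel is shown to be the rank-two span of $\partial_n X^\mu$ and $\partial_t X^\mu$) --- though in your formulation one only needs the fibre rank of the Legendre map, not the full $\alpha$-dependent expression \eqref{A48two}. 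Likewise your surjectivity of $\check\rho$ onto $C_P$ requires solving a nonlinear equation for $\partial_n X$, which you only sketch (the paper is equally terse on this point). Finally, your justification of $F^\partial_{NG}\simeq\Phi^{\text{red}}_{NG}$ (the constraints hold identically on $\check F_{NG}$, so they cut out nothing) is a legitimate variant of the paper's argument (all Nambu--Goto field equations are evolution equations, so every boundary datum formally extends to a solution); both express the same degeneracy of the Legendre transform.
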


\subsection{Proof of Theorem \ref{thm:NGRPS}}
We will divide the proof into two parts. First, we will show that the kernel of the boundary pre-symplectic form $\check{\omega} = - \delta\mathrm{EL}$ is regular. This in principle allows pre-symplectic reduction to the geometric phase space, which however might be singular. In the second part of the proof, instead of reducing with respect to the entirety of the kernel, we will perform partial reduction with respect to only a subset of $\mathrm{ker}(\check{\omega})$. The result of this partial reduction will turn out to be (diffeomorphic to) the constraint space for Polyakov theory.

\begin{proof}[Proof. Part 1]
Since $M$ has a boundary, when varying the action we obtain:
\begin{align*}\label{A30}
\delta S_{NG}^{\text{cl}} &= \int_{M} \frac{1}{2} \sqrt{\mathsf{g}}\, g^{\alpha \beta} \delta g_{\alpha \beta} d^2 {x}\\
&= \int_{M}  \sqrt{\mathsf{g}}\, g^{\alpha \beta}  \partial_\alpha X^\nu \delta \partial_\beta X^\mu G_{\mu\nu}d^2{x} + \frac{1}{2} \int_{M}  \sqrt{\mathsf{g}}\, g^{\alpha \beta}  \partial_\alpha X^\mu \partial_\beta X^\nu \frac{\partial G_{\mu \nu} (X)}{\partial X^\lambda} \delta X^\lambda d^2 {x}\\ 
& = - \int_{M}  \partial_\beta ( \sqrt{\mathsf{g}}\, g^{ \beta \alpha}  G_{\mu\nu}\partial_\alpha X^\nu) \delta  X^\mu d^2 {x} +  \frac{1}{2} \int_{M}  \sqrt{\mathsf{g}}\, g^{\alpha \beta}  \partial_\alpha X^\mu \partial_\beta X^\nu \frac{\partial G_{\mu \nu} (X)}{\partial X^\lambda} \delta X^\lambda d^2 {x} \\ 
& \phantom{=}  + \int_{M}  \partial_\beta ( \sqrt{\mathsf{g}}\, g^{ \beta \alpha}  G_{\mu\nu}\partial_\alpha X^\nu \delta X^\mu ) d^2 {x} \\
&=: EL + \check{\pi}_{NG}^*\check{\alpha},
\end{align*}
where, as before, $E L$ is the term that provides the equations of motion:
\begin{equation}\label{A31}
\begin{aligned}
EL &:= - \int_{M}  \partial_\beta ( \sqrt{\mathsf{g}}\, g^{ \beta \alpha} G_{\mu\nu} \partial_\alpha X^\nu) \delta  X^\mu d^2 {x} \\
& \phantom{:=} +  \frac{1}{2} \int_{M}  \sqrt{\mathsf{g}}\, g^{\alpha \beta}  \partial_\alpha X^\mu \partial_\beta X^\nu \frac{\partial G_{\mu \nu} (X)}{\partial X^\lambda} \delta X^\lambda d^2 {x}
\end{aligned}
\end{equation}
while $\check{\alpha}$ is the boundary term:
\begin{equation}\label{e:NGclassoneform}
\check{\alpha} =\int_{ \partial M} \sqrt{\mathsf{g}}\, g^{n \beta} G_{\mu\nu} \partial_\beta X^\nu \delta X^\mu d{x}^t,
\end{equation}
interpreted as a one-form on $\check{F}_{NG}=T(C^\infty(\partial M, N))$, the space of restrictions of fields and normal jets to the boundary, and $\check{\pi}_{NG}\colon F_{NG} \to \check{F}_{NG}$ is the natural surjective submersion onto it. The associated two-form $\check{\omega} := \delta\check{\alpha}$ is:

\begin{equation}\label{A3228/05}
\begin{aligned}
\check{\omega} =
 \int_{\partial M} d {x}^t \bigg[\sqrt{\mathsf{g}}\, g^{\lambda \rho} \partial_\lambda X_\nu &(\delta \partial_\rho X^\nu) g^{n \alpha} \partial_\alpha X_\mu +\\
- \sqrt{\mathsf{g}}\, (g^{\lambda n} g^{\alpha \rho} + g^{\rho n} g^{\alpha \lambda} ) &\partial_\rho X_\nu (\delta \partial_\lambda X^\nu) \partial_\alpha X_\mu + \sqrt{\mathsf{g}}\, g^{n \alpha} (\delta \partial_\alpha X_\mu) \bigg] \delta X^\mu + \\
 \int_{\partial M} d {x}^t \bigg[\frac{\sqrt{\mathsf{g}}\,}{2} g^{\lambda \rho} \partial_\lambda X^{\nu '} & \partial_\rho X^\nu \delta G_{\nu' \nu} g^{n \alpha} \partial_\alpha X_\mu +\\
- \frac{\sqrt{\mathsf{g}}\,}{2} (g^{\lambda n} g^{\alpha \rho} + g^{\rho n} g^{\alpha \lambda} ) &\partial_\rho X^{\nu '}  \partial_\lambda X^\nu\delta G_{\nu' \nu} \partial_\alpha X_\mu + \sqrt{\mathsf{g}}\, g^{n \alpha}  \partial_\alpha X^{\mu '} \delta G_{\mu' \mu} \bigg] \delta X^\mu
\end{aligned}
\end{equation}
where we used: $\delta \sqrt{\mathsf{g}}\,= \frac{1}{2} \sqrt{\mathsf{g}}\, g^{\alpha \beta} \delta g_{\alpha \beta} = \sqrt{\mathsf{g}}\, g^{\alpha \beta} \partial_{\alpha} X_\nu (\delta \partial_\beta X^\nu)$ and $\delta g^{\alpha \beta}= - g^{\lambda \beta} g^{\alpha \rho} \delta g_{\rho \lambda}$. The terms in the second integral contain variations of the metric as a function of $X^\mu$ and are relevant when dealing with a non-constant metric. Thus, rearranging the terms:
{\small\begin{equation}\label{A33}
\begin{aligned}
\check{\omega} = \int_{\partial M} d {x}^t \sqrt{\mathsf{g}}\, \bigg[ (g^{\lambda \rho} g^{n \alpha} - g^{\lambda n} g^{\alpha \rho} - g^{\rho n} &g^{\alpha \lambda} ) \partial_\lambda X_\nu \partial_\alpha X_\mu + g^{n \rho} G_{\mu \nu} \bigg] \delta \partial_\rho X^\nu \delta X^\mu\\
+ \int_{\partial M} d {x}^t \sqrt{\mathsf{g}}\, \bigg[ \frac{1}{2}(g^{\lambda \rho} \partial^n X^{\mu '} - g^{\lambda n} \partial^\rho X&^{\mu '} - g^{\rho n} \partial^\lambda X^{\mu '} ) \partial_\lambda X^{\nu '} \partial_\rho X^\nu \frac{\partial G_{\nu ' \nu}}{\partial X^\sigma} G_{\mu ' \mu } + \partial^n X^{\mu '} \frac{\partial G_{\mu ' \mu}}{\partial X^\sigma} \bigg] \delta  X^\sigma \delta X^\mu
\end{aligned}
\end{equation}}

Let us consider a general vector field $\mathbb{X}\in C^\infty(\check{F}_{NG},T\check{F}_{NG})$ in our space:
\begin{equation}\label{A34}
\mathbb X := (\mathbb{X}_X)^\mu  \frac{\delta}{\delta X^\mu} + (\mathbb{X}_{\partial_n X})^\mu  \frac{\delta}{\delta \partial_n X^\mu}.
\end{equation}
Let us enforce the condition $\iota_{\mathbb{X}}\check{\omega}=0$. First, we have:
\begin{gather}
\bigg[  (\cancel{g^{\lambda n} g^{n \alpha}} - \cancel{g^{\lambda n} g^{\alpha n}} - g^{n n} g^{\alpha \lambda} ) \partial_\lambda X_\nu \partial_\alpha X_\mu + g^{n n} G_{\mu \nu}  \bigg] (\mathbb{X}_X)^\nu  (\delta \partial_n X^\mu) = 0\notag\\
\Leftrightarrow\qquad  g^{n n} (G_{\mu \nu} - \partial_{\alpha} X_\mu \partial^\alpha X_\nu ) (\mathbb{X}_X)^\nu  (\delta \partial_n X^\mu)=0\notag\\
\Leftrightarrow\qquad  g^{n n} P^\perp_{\mu \nu} (\mathbb{X}_X)^\nu  = 0\notag\\
\Leftrightarrow\qquad  P^\perp_{\mu \nu} (\mathbb{X}_X)^\nu  = 0 \notag\\
\Leftrightarrow\qquad  (\mathbb{X}_X)^\mu  = \alpha_n \partial_n X^\mu + \alpha_t \partial_t X^\mu\label{A35}
\end{gather}
where $P^\perp_{\mu \nu} := G_{\mu \nu} - \partial_{\alpha} X_\mu \partial^\alpha X_\nu $ is the projector to the subspace orthogonal to the tangent space (the space spanned by $\partial_t X_\mu$ and $\partial_n X_\mu$), and we used that $g^{n n} \neq 0$. Moreover:
\begin{equation}\label{A36}
\begin{aligned}
\delta X^\mu\colon \qquad  \bigg\{ \sqrt{\mathsf{g}}\, \bigg[ (g^{\lambda t} g^{n \alpha} - g^{\lambda n} g^{\alpha t} - g^{t n} g^{\alpha \lambda} ) \partial_\lambda X_\nu \partial_\alpha X_\mu + g^{n t} G_{\mu \nu} \bigg] \partial_t (\mathbb{X}_X)^\nu  + \\
\sqrt{\mathsf{g}}\, \bigg[  - g^{n n} g^{\alpha \lambda} \partial_\lambda X_\nu \partial_\alpha X_\mu + g^{n n} G_{\mu \nu} \bigg]  (\mathbb{X}_{\partial_n X})^\nu  + \\
\partial_t \bigg( \sqrt{\mathsf{g}}\, \bigg[ (g^{\lambda t} g^{n \alpha} - g^{\lambda n} g^{\alpha t} - g^{t n} g^{\alpha \lambda} ) \partial_\lambda X_\mu \partial_\alpha X_\nu + g^{n t} G_{\nu \mu} \bigg] (\mathbb{X}_X)^\nu  \bigg) \bigg\} =\\
\delta X^\mu \sqrt{\mathsf{g}}\,(C_{\mu \nu} - C_{\nu \mu}) (\mathbb{X}_X)^\nu 
\end{aligned}
\end{equation}
where $C_{\mu \nu}$ groups together the terms inside the square bracket in the second line of \eqref{A33} (recall that $\partial^\rho X^\mu := g^{\rho \lambda} \partial_\lambda X^\mu$):
\begin{equation*}
\begin{aligned}
 C_{\sigma \mu} =   \frac{1}{2}(g^{\lambda \rho} \partial^n X^{\mu '} - g^{\lambda n} \partial^\rho X^{\mu '} - g^{\rho n} \partial^\lambda X^{\mu '} ) \partial_\lambda X^{\nu '} \partial_\rho X^\nu \frac{\partial G_{\nu ' \nu}}{\partial X^\sigma} G_{\mu ' \mu } + \partial^n X^{\mu '} \frac{\partial G_{\mu ' \mu}}{\partial X^\sigma} 
\end{aligned}
\end{equation*}
It is possible to solve equation \eqref{A36}, and, at the end of a good deal of algebraic manipulations, reported in Appendix $\ref{AppendixLengthy}$, we obtain that
\begin{multline}\label{A48two}
 (\mathbb{X}_{\partial_n X})^\mu  = \beta_n \partial_n X^\mu + \beta_t \partial_t X^\mu +(g^{n n})^{-1} P^{\perp, \mu}_{\nu} \bigg[   (g^{t t} \alpha_n - 2 g^{n t} \alpha_t ) \partial_t \partial_t X^\nu - g^{n n}\alpha_t \partial_t \partial_n X^\nu \bigg]  \\
+ (g^{n n})^{-1}P^{\perp, \mu}_{\nu} G^{\nu w} \dot{G}_{w k} \bigg[ (g^{t t} \alpha_n - g^{n t} \alpha_t ) \partial_t X^k +  (g^{n t} \alpha_n - g^{n n}\alpha_t )\partial_n X^k \bigg]+\\
- (g^{n n})^{-1}P^{\perp, \mu}_{\nu} \bigg[ \alpha_n  \frac{1}{2}\partial_\rho X^{\nu '} \partial^\rho X^\nu \frac{\partial G_{\nu ' \nu}}{\partial X^\mu} -   \partial^n X^{\mu '}  \frac{\partial G_{\mu ' \mu}}{\partial X^\sigma} (\alpha_n \partial_n X^\sigma + \alpha_t \partial_t X^\sigma) \bigg].
\end{multline}
\end{proof}

\begin{remark}
We have shown that the kernel of $\check{\omega}$ is regular and has (local) dimension 4, with degrees of freedom $\{\alpha_n, \alpha_t, \beta_n, \beta_t\}$. This, in principle, allows us to perform a pre-symplectic reduction over the space of fields ($\check{\pi}: \check{F}_{NG} \rightarrow F^\partial_{NG}$), which will be discussed in  Part 2 of the Proof.
\end{remark}

\begin{proof}[Proof. Part 2]
First of all, let us observe that among the equations of motion for Nambu--Goto theory there are no constraints: all are evolution equations for the field $X$. This implies that any initial datum in the geometric phase space (i.e.\ the space of boundary fields) $F^\partial_{NG}=\check{F}_{NG}/\mathrm{ker}(\check{\omega})$ can be extended (formally) to a solution of the evolution equations in a neighborhood of the boundary. As a consequence, the reduced phase space coincides with the geometric phase space: $ \Phi^{\text{red}}_{NG} = F^\partial_{NG} $.

However, the pre-boundary two-form $\check{\omega}$ has a nontrivial kernel, and to obtain $F^\partial_{NG}$, one needs to perform pre-symplectic reduction. Hence, we must solve the following system of differential equations for four linearly independent choices of $\{ \alpha_n, \alpha_t, \beta_n, \beta_t \}$, which corresponds to flowing along a basis of the kernel $\mathrm{ker}(\check{\omega})$:
\begin{gather}
\partial_\tau X^\mu = (\mathbb{X}_X)^\mu  \label{A49one} \\
\partial_\tau \partial_n X^\mu = (\mathbb{X}_{\partial_n X})^\mu  \label{A49two}
\end{gather}
where $\tau$ is once again the parameter of the flow. 

We consider a two-step reduction. This corresponds to first reducing with respect to the subspace of $\mathrm{ker}(\check{\omega}_{NG})$ given by $\alpha_n=\alpha_t=0$. We will then see how the residual reduction has been already taken care of in Theorem \ref{THM:PolyakovRPS}.

The first step of the reduction gives us a first order linear differential equation and a trivial one:
\begin{equation}\label{A50}
\begin{aligned}
\partial_\tau &X^\mu=0\\
\partial_\tau \partial_n X^\mu = \beta_n &\partial_n X^\mu + \beta_t \partial_t X^\mu 
\end{aligned}
\end{equation}
whose solutions are:
\begin{equation}\label{A51}
\begin{aligned}
X^\mu (\tau&) = X^\mu (\tau_0)\\
\partial_n X^\mu (\tau) = e^{- \int_{\tau_0}^\tau (- \beta_n (\tau') d \tau'} \bigg\{ \int_{\tau_0}^\tau e&^{ \int_{\tau_0}^{\tau'} (- \beta_n (\tau'') d \tau''} \beta_t (\tau') d \tau' \partial_t X^\mu (\tau_0) + \partial_n X^\mu (\tau_0) \bigg\}
\end{aligned}
\end{equation}
where we used that  $\partial_\tau X^\mu=0 \Rightarrow \partial_\tau \partial_t X^\mu=0$. If we take $\beta_n$ and $\beta_t$ to be constant:

\begin{equation}\label{A52}
\begin{aligned}
\partial_n X^\mu (\tau) = e^{\beta_n (\tau -\tau_0)} \bigg\{ \int_{\tau_0}^\tau e^{-\beta_n (\tau' -\tau_0)} \beta_t d \tau' \partial_t X^\mu (\tau_0) + \partial_n X^\mu (\tau_0) \bigg\}=\\
e^{\beta_n (\tau -\tau_0)} \bigg\{-\frac{1}{\beta_n}[ e^{-\beta_n (\tau -\tau_0)} - 1] \beta_t \partial_t X^\mu (\tau_0) + \partial_n X^\mu (\tau_0) \bigg\}
\end{aligned}
\end{equation}

Setting $\tau=1, \tau_0=0$:

\begin{equation}\label{A53}
\begin{aligned}
\partial_n X^\mu (1) =e^{\beta_n} &\bigg\{-\frac{1}{\beta_n}[ e^{-\beta_n} - 1] \beta_t \partial_t X^\mu (0) + \partial_n X^\mu (0) \bigg\} =\\
&\frac{ \beta_t}{\beta_n}[e^{\beta_n}-1] \partial_t X^\mu (0) + e^{\beta_n} \partial_n X^\mu (0)  
\end{aligned}
\end{equation}
Notice that $\lim_{\beta_n \rightarrow 0} \frac{ 1}{\beta_n}[e^{\beta_n}-1] = 1$ hence the  term $ \frac{ 1}{\beta_n}[ e^{\beta_n}-1]$ is always well defined and positive. And if we choose $\beta_n=0, \partial_n X^\mu (1)$ is the same as the limit for $\beta_n \rightarrow 0$ of the expression in \eqref{A53}. For every choice of $\beta_n$ the term $\frac{ \beta_t}{\beta_n}[e^{\beta_n}-1] $ can take any possible real value through the right choice of $\beta_t$. We can rewrite a generic solution of \eqref{A53} as:
\begin{equation}\label{A54}
\begin{aligned}
\partial_n X^\mu (1) = A \partial_t X^\mu (0) + |B| \partial_n X^\mu (0),  
\end{aligned}
\end{equation}
where $|B|$ is positive due to the fact that $e^{\beta_n}$ can have only positive values. We will choose
\begin{gather}
\beta_n = \bigg(\text{log} ( \chi\sqrt{\mathsf{g}}\,  g^{n n} )\bigg)\bigg|_{\tau=0}\label{A54B1}\\
\beta_t = \bigg(\frac{ \text{log} (\chi\sqrt{\mathsf{g}}\,   g^{n n} )}{\chi\sqrt{\mathsf{g}}\,  g^{n n} - 1} \chi\sqrt{\mathsf{g}}\,  g^{n t}\bigg)\bigg|_{\tau=0}\label{A54B2}
\end{gather}
where $\chi = \text{sign} [ g^{n n}]$, and $\beta_n$ and $\beta_t$, as defined in $\eqref{A54B1}$ and $\eqref{A54B2}$, are well-defined and smooth for $g^{nn}\in\mathbb{R}^+\backslash\{0\}$, $g^{nt}\in\mathbb{R}$. 

The reason for this particular choice of parameters is that it induces a transformation on $\partial_n X^\mu$ analogous to the one used in Theorem \ref{THM:PolyakovRPS} for Polyakov theory. This will lead to a description of the partially reduced space of boundary Nambu--Goto fields that is manifestly related to the reduced phase space  of Polyakov theory, as we will see now.

The solution \eqref{A54} for this choice of $\beta_n,\beta_t$ is thus:
\begin{equation}\label{A55}
\begin{aligned}
\partial_n X^\mu (1) = \chi \sqrt{\mathsf{g}}\, g^{n \alpha} \partial_\alpha X^\mu (0),
\end{aligned}
\end{equation}
Thus, we have constructed a partial reduction map $\pi_{\mathrm{partial}}^\partial\colon \check{F}_{NG} \to F^\partial_{\mathrm{partial}}\subset T^*(C^\infty(\partial M,N))$
\begin{gather}\label{A55B1}
(X^\mu, \partial_n X^\mu) \longrightarrow (X^\mu, J_\mu :=\chi \sqrt{\mathsf{g}}\, g^{n \alpha} \partial_\alpha X^\nu G_{\mu\nu}(X))
\end{gather}
where $F^\partial_{\mathrm{partial}}$ is parametrised by elementary fields $(X^\mu,J_\mu)\in T^*(C^\infty(\partial M,N))$ which, by construction, must satisfy the constraints: 
\begin{subequations}\label{e:NGconstraints}\begin{gather}
J_\mu \partial_t X^\mu = \chi \sqrt{\mathsf{g}}\, g^{n\alpha} g_{\alpha t} = \chi \sqrt{\mathsf{g}}\, \delta^n_t = 0 \label{A55B3}\\
J_\mu J^\mu - \varsigma \partial_t X_\mu \partial_t X^\mu = \chi^2 | \mathrm{det}(g)| g^{nn} - \varsigma g_{tt} = 0 \label{A55B4}
\end{gather}\end{subequations}
since, again, $g_{tt} = \mathrm{det}(g) g^{nn}$.

Using $(X^\mu, J_\mu := J^\nu G_{\nu \mu})$ as elementary fields in $T^*(C^\infty(\partial M,N))$, we see that the constraints defining $F^\partial_{\mathrm{partial}}\subset T^*(C^\infty(\partial M,N))$ coincide with the constraints of Polyakov theory (Equation \eqref{eq:PolConstraints}). The partially-reduced boundary one- and two-forms read:
\begin{gather}\label{A55B2}
\check{\alpha} = \pi_{\mathrm{partial}}^{\partial *}\alpha^\partial_{\mathrm{partial}} = \pi_{\mathrm{partial}}^{\partial *}\int_{\partial M} J_\mu \delta X^\mu; \qquad \omega^\partial_{\mathrm{partial}} = \int_{\partial M} \delta J_\mu \delta X^\mu
\end{gather}
where $X^\mu,J_\mu)$ satisfy \eqref{e:NGconstraints}, so that $\omega^\partial_{\mathrm{partial}}$ coincides with the restriction of the boundary form $\omega^\partial_P$ of Polyakov theory to the zero locus $C_P$ of Polyakov's constraint functions \eqref{eq:PolConstraints}, i.e.
$$
  \omega^\partial_{\mathrm{partial}} = \omega^\partial_P\vert_{C_P}
$$

Naturally, then, the residual kernel of $\omega^\partial_{\mathrm{partial}}$ is generated by 
the characteristic distribution of the constraints \eqref{A55B3} and \eqref{A55B4}, and denoting by $\underline{F^\partial_{\mathrm{partial}}}$ the reduction of $F^\partial_{\mathrm{partial}}$ by $\mathrm{ker}(\omega^\partial_{\text{partial}})$, we have
$$
\Phi_{NG} \equiv \underline{F^\partial_{\mathrm{partial}}} \simeq \underline{C} \equiv \Phi_{P}.
$$ 
Hence we have that 
\begin{equation}
    \xymatrix{
    F_{NG} \ar[d]_{\check{\pi}_{NG}}&  \\
    \check{F}_{NG} \ar[dd]^{\pi_{NG}^\partial} \ar[dr]^{\pi_{\mathrm{partial}}^\partial} & \\
    & F^\partial_{\mathrm{partial}} \simeq C_P  \ar[d]^{\pi^{\text{red}}_P} \\
    F_{NG}^\partial \equiv\Phi^{\text{red}}_{NG} \ar[r]^-\simeq & \Phi_P^{\text{red}}
    }
\end{equation}
and we conclude that the diagram \eqref{e:PtoNG} commutes by adding the inclusion $\iota_C\colon C_P\to F^\partial_P$ and the classical equivalence $\phi_{NG}\colon F_{NG} \to F_{P}$ defined in Remark \ref{rem:classeq}.
\end{proof}

\begin{remark}
In the second part of the proof we have reduced the space of pre-boundary fields with respect to the kernel of $\check{\omega}$ in a two-step fashion. The intermediate partial reduction turned out to coincide with the (presymplectic) manifold given by the constraints for Polyakov theory. As a consequence, the reduced Phase space of Nambu--Goto theory coincides with the reduced phase space of Polyakov theory.
\end{remark}

\section{Polyakov Theory --- BV-BFV Analysis}\label{Sec:BVPOL}
We now perform an analysis of the boundary structure induced by the BV-extension of Polyakov theory formally  similar to the procedure outlined in Section \ref{Sec:RPS}. As in the degree-$0$ scenario, variations of $S_P$ induce a boundary term $\check\alpha$, a one-form on some appropriate graded space of pre-boundary fields, and as in Section \ref{Sec:RPSPOL}, we look for the kernel of $\check\omega=\delta \check\alpha$. For the case of Polyakov theory, we show that it is regular and perform the pre-symplectic reduction, which allows us to construct a chart of the symplectic space of BFV boundary fields, in terms of which we will write the BFV data for Polyakov theory. By doing this, we prove that the Polyakov model of bosonic strings admits a BV-BFV structure on worldsheets with non-null boundary $(M,\partial M)$. In \cite{ADER1987103} and \cite{Craps_2005} the BV treatment of this theory is inspected and the symmetries and BV transformations are outlined (the latter adds an extra term to the action).

\subsection{Symmetries of Polyakov theory}

The knowledge of the symmetries of the theory is fundamental to the construction of the BV structure. In this section we are going to write the infinitesimal symmetry transformations of the fields, and promote the infinitesimal parameters to fields. This requires enlarging the space of fields to ${\mathcal{F}}_M$. Since we are dealing with symmetries that can be described by local Lie algebras, Theorem $(1.1.2)$ provides us with a handy recipe for the BV extension $S_P$. 

In the following, we write the symmetries of the Polyakov action expressed as a function of $(X^\mu, h_{\alpha \beta})$:
worldsheet diffeomorphisms act on the coordinates as $\varphi: {x} \rightarrow {x} (\tilde{{x}})$, and on the fields as: $X \rightarrow \varphi^* X, h \rightarrow \varphi^* h$ . In a local chart, we write:
\begin{equation*} 
{x}^\alpha\rightarrow {{x}}^\alpha (\tilde{{x}}) \qquad
    X^\mu \rightarrow X^\mu ({x} (\tilde{{x}})) \qquad
        h_{\alpha \beta} \rightarrow  \frac{\partial {x}^{\alpha '}}{\partial \tilde{{x}}^\alpha}  \frac{\partial {x}^{\beta '}}{\partial \tilde{{x}}^\beta} h_{\alpha ' \beta '} ({x} (\tilde{{x}}))
\end{equation*}
and infinitesimally, for a vector field $\zeta\in\mathfrak{X}(M)$:
\begin{align} 
\delta_\zeta X &= L_\zeta X: & \qquad X^\mu &\mapsto X^\mu + \zeta^\alpha \partial_\alpha X^\mu \label{BVPolSYMA3}\\
\delta_\zeta h &= L_\zeta h: & h_{\alpha \beta} &\mapsto  h_{\alpha \beta} - \bigg( \partial_\alpha \zeta^{\lambda} h_{\lambda \beta} + \partial_\beta \zeta^{\lambda} h_{\lambda \alpha }  + \zeta^\lambda \partial_\lambda h_{\alpha \beta} \bigg)\label{BVPolSYMA4}
\end{align}
Local rescalings of the metric instead act as:
\begin{gather} h_{\alpha \beta} \rightarrow e^\phi h_{\alpha \beta} \label{BVPolSYMA5} \end{gather}
and, infinitesimally,
\begin{gather} 
\delta_\phi h = \phi h. \label{BVPolSYMA6}
\end{gather}

It is convenient to use the constrained variables ${\th}^{\alpha \beta} := \sqrt {h} h^{ \alpha \beta}$. In this way we get rid of the external rescaling invariance, and the only local symmetry left is given by the action of worldsheet diffeomorphism. Since we are using constrained variables, this reads
\begin{subequations}\label{BVPolSYMA10}\begin{align} 
\left(\delta_\zeta X\right)^\mu &= \left(L_\zeta X\right)^\mu = \zeta^\alpha \partial_\alpha X^\mu \label{BVPolSYMA10A}\\
\left(\delta_\zeta {\th}\right)_{\alpha\beta} &= \left(L_\zeta {\th}\right)_{\alpha\beta}= - \bigg( \partial_\alpha \zeta^{\lambda} {\th}_{\lambda \beta} + \partial_\beta \zeta^{\lambda} {\th}_{\lambda \alpha }  + \zeta^\lambda \partial_\lambda {\th}_{\alpha \beta} - \partial_\lambda \zeta^\lambda {\th}_{\alpha \beta} \bigg) \label{BVPolSYMA10B}
\end{align}\end{subequations}
Where we have picked up a new divergence term $\partial_\lambda \zeta^\lambda$, which effectively recovers local rescalings of the metric.

\begin{definition}\label{def:PolBVth}
We define \emph{(relaxed) non-null BV Polyakov theory} on the two-dimensional manifold with boundary $(M,\partial M)$ to be the data $(\mathcal{F}_P, \Omega_P, S_P, Q_P)$, where
\begin{equation}
    \left(\mathcal{F}_P = T^*[-1]\left(\mathcal{DPR}(M,\partial M) \times C^\infty(M,N) \times \mathfrak{X}[1](M)\right), \Omega_P\right),
\end{equation}
with $T^*\mathcal{DPR}(M,\partial M)$ the cotangent bundle of the space \emph{densitised} Lorentzian metrics (Defintion \ref{def:T^*DPR}), and using Definition \ref{def:cotangent} to understand the remaining cotangent bundles,
so that denoting the degrees of the various fields by 
$$
\left\{\begin{array}{cccccc}0 & 0 & 1 & -1 & -1 & -2 \\
{X^\mu} & \th^{\alpha \beta} & {\zeta^\alpha} & {X^\dag_\mu} & \th^\dag_{\alpha \beta} & {\zeta^\dag_\alpha}\end{array}\right\}
$$ 
the BV action reads 
\begin{equation}
S_P= S^{\text{cl}}_P + \int_M \langle X^\dag,L_\zeta X\rangle + \frac12 \langle\zeta^\dag,[\zeta,\zeta]\rangle + \langle{\th}^\dag, L_\zeta {\th}\rangle,
\end{equation}
$\Omega_P$ is the canonical $(-1)$-symplectic structure on $\mathcal{F}_P$, and $Q_P$ is the Hamiltonian vector field of $S_P$ (up to boundary terms).
\end{definition}

\begin{remark}
The variables ${\th}_{\alpha \beta}, {\th}^{\dag \alpha \beta}$ are constrained. The treatment of their variations is thus more complicated than in an unconstrained case. One of the constraints is ${\th}_{\alpha \beta} = {\th}_{\beta \alpha}$ and ${\th}^{\dag \alpha \beta} = {\th}^{\dag  \beta \alpha}$, and similarly for the variations. The second constraint is\footnote{Notice that because of the condition on the trace of ${\th}^\dag$, the rescaling term in the BV action vanishes: $\partial_\lambda\zeta^\lambda \mathrm{tr}({\th}^\dag)=0$ (cf. with Equation \eqref{BVPolSYMA10B}).}:
\begin{align}
 \mathrm{det}({\th}_{\alpha \beta}) &= -1\label{BVPolKerA6}\\
 \mathrm{tr}_h({\th}^{\dag \alpha \beta}) &= {\th}^{\dag \alpha \beta} {\th}_{\alpha \beta} = 0\label{BVPolKerA7}
\end{align}
We can express ${\th}_{\alpha \beta}$ and ${\th}^{\dag \alpha \beta}$ as functions of unconstrained fields $h_{\alpha \beta}$ and $h^{\dag \alpha \beta}$  in order to have control over their variations. We can write:
\begin{gather}
 {\th}_{\alpha \beta} = \frac{1}{\sqrt{\mathsf{h}}} h_{\alpha \beta}\label{BVPolKerA8}\\
 {\th}^{\dag \alpha \beta} = \mathsf{h} (\delta^\alpha_\rho \delta^\beta_\sigma - \frac{1}{2} {\th}^{\alpha \beta} {\th}_{\rho \sigma}) h^{\dag  \rho \sigma}\label{BVPolKerA9}
\end{gather}
where 

\begin{equation}\label{e:Porthproj}
    P^{\perp \alpha \beta}_{\rho \sigma}:= \delta^\alpha_\rho \delta^\beta_\sigma - \frac{1}{2} {\th}^{\alpha \beta} {\th}_{\rho \sigma} 
\end{equation} 
is the projector orthogonal to  $\tilde{h}_{\alpha \beta}$  that ensures condition \eqref{BVPolKerA6}, and the determinant $\mathsf{h}$ serves to make the terms in $S_P$ densities. The variations are:
\begin{gather}
\delta {\th}_{\alpha \beta} = \frac{1}{\sqrt{\mathsf{h}}} P^{\perp \rho \sigma}_{\alpha \beta}  \delta h_{\rho \sigma}\notag\\
\delta {\th}^{\dag \alpha \beta} = \delta h_{\lambda \tau}  h^{\lambda \tau} \mathsf{h} P^{\perp \alpha \beta}_{\rho \sigma} h^{\dag  \rho \sigma} 
 + \mathsf{h}  \delta(P^{\perp \alpha \beta}_{\rho \sigma}) h^{\dag  \rho \sigma} + \mathsf{h} P^{\perp \alpha \beta}_{\rho \sigma} \delta h^{\dag  \rho \sigma}=\notag\\
 \delta h_{\lambda \tau} ( {\th}^{\dag \alpha \beta} h^{\lambda \tau} )  - \delta {\th}_{\lambda \tau}   \mathsf{h}  \frac{1}{2}( {\th}^{\alpha \beta} h^{\dag  \lambda \tau} -   h^{\dag  \rho \sigma} {\th}_{\rho \sigma}{\th}^{\alpha \lambda} {\th}^{\beta \tau} ) +  \delta h^{\dag  \rho \sigma} \mathsf{h} P^{\perp \alpha \beta}_{\rho \sigma}
 \label{BVPolKerA11}
\end{gather}
Observe that, in particular, the variations of ${\th}$ are traceless.
\end{remark} 

We can now state the main result in this section.  As implicitly assumed in Equation \eqref{BVPolKerA6}, in this section we will consider the case $\text{det}(\th)=-1$. The case with $\text{det}(\th)=1$ is analogous. We will also assume that $\partial M$ is spacelike, i.e.\ $\th_{tt}\in \mathbb R^+$. In the lightlike case the restriction, being 1-dimensional, is equivalent up to an overall sign, and one can adapt the procedure outlined in the proof of Theorem \ref{THM:PolyakovRPS}.
\begin{theorem}\label{thm:BVBFVPolyakov}
Let $(M,\partial M)$ be a two dimensional manifold with boundary. The relaxed, non-null BV Polyakov theory is $1$-extendable to a BV-BFV theory, such that 
\begin{align}
    \mathcal{F}_P^{\partial} & = T^*\left(C^\infty(\partial M, N)\times \mathfrak{X}[1](\partial M)\times C^\infty[1](\partial M)\right) \\\notag
    & =  F^\partial_P\, \times\, T^*\left(\mathfrak{X}[1](\partial M)\times C^\infty[1](\partial M)\right).
\end{align}
In a local chart, where the fields of degree $1$ are 
$$
\sigma^n \in C^\infty[1](\partial M), \qquad \sigma^\partial \in \mathfrak{X}[1](\partial M),
$$ 
the fields of degree $0$ are
$$
(X,J)\in T^*C^\infty(\partial M, N),
$$
defined as the space of smooth bundle morphisms\footnote{Observe that such space of bundle mophism can bee seen as a vector bundle over $C^\infty(\partial M, N)$. In this sense we write, with a slight abuse of notation, $(X,J)\in T^*C^\infty(\partial M, N)$, see also Definition \ref{def:cotangent}. Hence, $J$ can be seen as a one-form on $\partial M$ with values in the pullback bundle $J\in \Gamma (X^*T^*N)$.  Alternatively one can think of $J$ as a ``densitised" section of the pullback bundle.} $(X,J)\colon T\partial M \to T^*N$ over smooth base maps $X\colon \partial X \to N$, and  the fields in degree $-1$ are
$$
\sigma^{\dag}_n\in \mathrm{Dens}(\partial M), \qquad \sigma^{\dag}_\partial\in \Omega^1(\partial M)\otimes \mathrm{Dens}(\partial M),$$
we have
\begin{equation}
    \Omega^\partial = \delta \alpha^\partial =   \delta \int_{\partial M} J_\mu \delta X^\mu + \sigma^{\dag}_n \delta \sigma^n + \iota_{\delta \sigma^\partial}\sigma^{\dag}_\partial,
\end{equation}
together with
\begin{equation}\label{e:PolBoundaryAction}
    S^\partial_P = \int_{\partial M} -  (L_{\sigma^\partial} X)^\mu J_\mu - \frac{1}{2} \sigma^n \bigg[ J_\mu J^\mu +  \partial_t X^\mu \partial_t X^\nu G_{\mu \nu} \bigg] + \sigma^{\dag}_n L_{\sigma^\partial} \sigma^n + \frac12 \iota_{[\sigma^\partial,\sigma^\partial]}\sigma^{\dag}_\partial
\end{equation}
\end{theorem}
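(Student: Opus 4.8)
The plan is to mirror the four-part strategy of the proof of Theorem~\ref{THM:PolyakovRPS}, now carried out on the full graded BV space $\mathcal{F}_P$, and to verify at the end that the induced data satisfies the axioms of Definition~\ref{def:BVBFV}. First I would compute $\delta S_P$ and use the relaxed relation \eqref{04/05A}, $\iota_{Q_P}\Omega_P = \delta S_P + \tilde{\pi}^*\check{\alpha}$, to extract the pre-boundary one-form $\check{\alpha}$ on $\check{\mathcal F}_P$, the space of restrictions of fields and normal jets. The classical contribution was already isolated in \eqref{A8} and gives $\int_{\partial M}\delta X^\mu\,\th^{n\alpha}\partial_\alpha X^\nu G_{\mu\nu}$; the genuinely new pieces come from integrating by parts the antifield terms $\langle X^\dag, L_\zeta X\rangle$ and $\langle\th^\dag, L_\zeta\th\rangle$, whose normal-derivative parts leave boundary contributions proportional to $\zeta^n$, together with the $[\zeta,\zeta]$ term paired with $\zeta^\dag$. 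Throughout, the constrained nature of $\th$ (unit determinant) and of $\th^\dag$ (traceless, cf.\ \eqref{BVPolKerA6}--\eqref{BVPolKerA7}) must be respected by passing to the unconstrained representatives $h,h^\dag$ and inserting the projector $P^\perp$ via \eqref{14/04/2020A1} and \eqref{BVPolKerA11}.

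Second, and this is the crux of $1$-extendability, I would form $\check{\omega}=\delta\check{\alpha}$ and show that $\mathrm{ker}(\check{\omega}^\sharp)$ is a regular subbundle of $T\check{\mathcal F}_P$. Concretely I would take a general vector field on $\check{\mathcal F}_P$ with components along $X,\partial_n X,\th,\zeta,\partial_n\zeta$ and the antifields, impose $\iota_{\mathbb X}\check{\omega}=0$, and collect the coefficients of the independent variations. As in \eqref{A12B0}, the non-null hypothesis $\th^{nn}\neq0$ forces the $X$-component of any kernel vector to vanish and lets one solve algebraically for the normal-jet components in terms of the free tangential data; the new feature is that the ghost sector contributes analogous relations expressing $\partial_n\zeta$ and the conjugate antifields in terms of the free components of $\zeta$ and $\th^\dag$. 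The output should be a kernel of constant corank, so that pre-symplectic reduction is legitimate---precisely the step that will later fail for Nambu--Goto (Theorem~\ref{thm:NGNOGO}).

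Third, I would perform the reduction and exhibit an explicit chart for $\mathcal F_P^\partial$. Flowing along the kernel exactly as in Part~2 of Theorem~\ref{THM:PolyakovRPS} gauge-fixes $\th$ to the reference value $-\chi\eta$ and identifies the degree-$0$ data with $(X,J)\in T^*C^\infty(\partial M,N)$, where $J_\mu=\chi\,\th^{n\alpha}\partial_\alpha X^\nu G_{\nu\mu}$. In the ghost sector the restriction of $\zeta$ to $\partial M$ splits into its tangential part $\sigma^\partial\in\mathfrak{X}[1](\partial M)$ and a suitably densitised normal part $\sigma^n\in C^\infty[1](\partial M)$, while the surviving combinations from the antifield sector ($\th^\dag,\zeta^\dag$ and their normal jets) descend to the conjugate boundary antifields $\sigma^\dag_\partial,\sigma^\dag_n$. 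Checking that $\check{\alpha}$ is basic with respect to $\pi^{\text{red}}$ and reading off $\alpha^\partial$ then yields
\[
\Omega^\partial=\delta\!\int_{\partial M} J_\mu\,\delta X^\mu+\sigma^\dag_n\,\delta\sigma^n+\iota_{\delta\sigma^\partial}\sigma^\dag_\partial,
\]
establishing that $\mathcal F_P^\partial$ is genuinely a degree-$0$ symplectic manifold of the stated cotangent form.

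Finally, I would verify the BV-BFV axioms. Writing $\tilde{\pi}^*\check{\alpha}=\pi^*\alpha^\partial$ with $\pi=\pi^{\text{red}}\circ\tilde{\pi}$ confirms that $\pi$ is a $Q$-submersion satisfying $\iota_{Q_P}\Omega_P=\delta S_P+\pi^*\alpha^\partial$; taking the Hamiltonian of the induced boundary vector field $Q^\partial$ against $\Omega^\partial$ produces $S^\partial_P$ in the form \eqref{e:PolBoundaryAction}. Here I expect the term $-(L_{\sigma^\partial} X)^\mu J_\mu$ to be the momentum constraint $L_\psi$ smeared by the tangential ghost and $-\frac12\sigma^n[J_\mu J^\mu+\partial_t X^\mu\partial_t X^\nu G_{\mu\nu}]$ to be the Hamiltonian constraint $H_\phi$ smeared by the normal ghost, with the remaining $\sigma^\dag$-terms encoding the structure maps $[\cdot,\cdot]$. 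The classical master equation $\{S^\partial_P,S^\partial_P\}=0$ then reduces to the involution relations \eqref{21/04A5}--\eqref{21/04A7} already established in Part~4 of Theorem~\ref{THM:PolyakovRPS}, together with the Jacobi-type identities for the ghost structure maps, simultaneously showing that $(\mathcal F_P^\partial,\Omega^\partial,S^\partial_P,Q^\partial)$ is an exact BFV theory resolving $\Phi^{\text{red}}_P$. The main obstacle throughout is the bookkeeping forced by the constrained fields $\th,\th^\dag$: the regularity of $\mathrm{ker}(\check{\omega}^\sharp)$ in the second step, and the basicness of $\check{\alpha}$ in the third, both hinge on correctly threading the projector $P^\perp$ through the ghost-dependent boundary terms.
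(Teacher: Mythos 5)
Your proposal follows essentially the same route as the paper's proof: extract the pre-boundary one-form from $\delta S_P$ (handling the constrained fields $\th,\th^\dag$ through the projector $P^\perp$ and the unconstrained representatives $h,h^\dag$), show that $\mathrm{ker}(\check{\omega}^\sharp)$ is a regular subbundle using the non-null condition $\th^{nn}\neq 0$, perform the pre-symplectic reduction by explicitly flowing along the kernel to obtain the chart $(X,J,\sigma^n,\sigma^\partial,\sigma^\dag_n,\sigma^\dag_\partial)$, and then read off $\Omega^\partial$ and $S^\partial_P$ (the paper computes the latter from $\tfrac12\iota_{Q}\iota_{Q}\Omega_M=\pi^*S^\partial_P$, a minor variant of your Hamiltonian argument). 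One bookkeeping caveat: in the actual kernel the free directions are the variations of $\th$, $X^\dag$ and $\zeta^\dag$, with $(\mathbb{X}_\zeta)$, $(\mathbb{X}_{\th^\dag})$ and $(\mathbb{X}_{\partial_n X})$ determined by them --- roughly the opposite of what you predicted --- and consequently the boundary momentum picks up the antifield correction $J_\mu=\th^{n\alpha}\partial_\alpha X^\nu G_{\mu\nu}+\tfrac12 X^\dag_\mu\zeta^n$; this changes none of the strategy, only the details that emerge when the computation is carried out.
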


\begin{remark}
The explicit expression of the surjective submersion $\pi\colon \mathcal{F}_P\to \mathcal{F}^\partial_P$ is given, in a local chart adapted to a tubular neighborhood of the inclusion $\iota\colon \partial M \to M$, by 
\begin{equation}\label{e:PolBVProjection}
    \pi\colon \begin{cases}
            J_\mu=  {\th}^{n \alpha} \partial_\alpha X^\nu G_{\mu \nu}+  \frac{1}{2} X^\dag_\mu \zeta^n\\
            X^\mu = X^\mu\\
            \sigma^n =  {\th}_{t t}^{-1}{\zeta^n}\\
            \sigma^t = {\th}_{t t}^{-1} \zeta^\alpha {\th}_{\alpha t}\\
            \sigma^{\dag}_n =- {\th}^{\dag n n} - \frac{1}{2} {\th}^{n \alpha} \zeta^\dag_\alpha \zeta^n\\
            \sigma^{\dag}_t = {\th}^{\dag n \alpha} {\th}_{\alpha t} + \frac{1}{2} \zeta^\dag_t \zeta^n
    \end{cases}
\end{equation}
\end{remark}

\subsection{Proof of Theorem \ref{thm:BVBFVPolyakov}}
We will split the proof in two parts. In the first part we will show that the kernel of the two-form induced on the space of pre-boundary fields is regular. This is sufficient to prove that the theory admits BV-BFV data, following \cite{CMR2012}. In the second part of the proof we will explicitly construct such data.

\begin{proof}[Proof. Part 1.]
The variation of $S_P$ is:
\begin{gather}
\delta S_P= EL + \check{\alpha} + \int_{\partial M} X^\dag_\mu  \zeta^n \delta X^\mu - \zeta^\dag_\alpha \zeta^n \delta \zeta^\alpha  \ {}+\notag\\
\int_{\partial M}  {\th}^{\dag \alpha \beta} \bigg[ - 2 \delta_\alpha^n \delta \zeta^{\lambda} {\th}_{\lambda \beta}  + \zeta^n \delta {\th}_{\alpha \beta}  \bigg] =\notag\notag\\
 EL + \check{\alpha} + \check{\alpha}_{BV}\label{28/05A5}
\end{gather}
Where $EL$ is a bulk term defining the Euler--Lagrange equations, while $\check{\alpha}$ and $\check{\alpha}_{BV}$ are the boundary one-forms associated to the classical Polyakov action and the BV extension respectively. The pre-symplectic form $\check{\omega} := \delta  \check{\alpha} + \delta \check{\alpha}_{BV} $ reads:
\begin{gather}
\check{\omega} = \delta  \check{\alpha}  + \int_{\partial M}\left( \delta X^\dag_\mu  \zeta^n \delta X^\mu - X^\dag_\mu  \delta \zeta^n \delta X^\mu \right)+\notag \\
-  \int_{\partial M}\left( \delta \zeta^\dag_\alpha \zeta^n \delta \zeta^\alpha + \zeta^\dag_\alpha \delta \zeta^n \delta \zeta^\alpha\right) \ {}+\notag\\
\int_{\partial M} \left\{- \delta {\th}^{\dag \alpha \beta} \bigg[  2 \delta_\alpha^n \delta \zeta^{\lambda} {\th}_{\lambda \beta}  - \zeta^n \delta {\th}_{\alpha \beta}  \bigg] + {\th}^{\dag \alpha \beta} \bigg[  2 \delta_\alpha^n \delta \zeta^{\lambda} \delta {\th}_{\lambda \beta}  - \delta \zeta^n \delta {\th}_{\alpha \beta}  \bigg]\right\}.\label{28/05A6}
\end{gather}

We want to describe the kernel of $\check\omega$ to then perform the reduction over the space of pre-boundary fields. We define a general vector over the space of pre-boundary fields as in \eqref{A10}, but considering also the additional fields introduced in the previous section. It turns out that the kernel is smooth (assuming as previously ${\th}^{n n}\neq 0$) and defined by the conditions:
\begin{subequations}\label{e:PBVker}\begin{gather}
{(\mathbb{X}_{X})^{\mu} = 0}\label{BVPolKerA1}\\
{(\mathbb{X}_{\zeta})^{\alpha} = ({\th}^{n n})^{-1} \zeta^n (\mathbb{X}_{{\th}})^{{\alpha n}}}\label{BVPolKerA2}\\
\label{BVPolKerA3}
(\mathbb{X}_{{\th}})^{{\dag n \lambda}} = -{\th}^{\dag n \beta } {\th}^{\lambda \alpha} (\mathbb{X}_{{\th}})_{{\beta \alpha}} + \frac{1}{2} {\th}^{\dag \alpha \beta} (\mathbb{X}_{{\th}})_{{\alpha \beta}} {\th}^{n \lambda}+ \\
- \frac{1}{2}  {\th}^{\lambda \alpha} \bigg(  (\mathbb{X}_{\zeta^\dag})_{\alpha} \zeta^n + \zeta_\alpha^\dag (\mathbb{X}_{\zeta})^{n} + \delta_\alpha^n \zeta_\beta^\dag (\mathbb{X}_{\zeta})^{\beta} \bigg)\notag\\
{(\mathbb{X}_{\partial_n X})^{\mu}  = \frac{1}{{\th}^{n n}}\bigg[ {\th}^{n \rho} {\th}^{\sigma \beta} (\mathbb{X}_{{\th}})_{{\rho \sigma}} \partial_\beta X^\mu - \frac{1}{2}  G^{\mu \nu}(X) \bigg( (\mathbb{X}_{X^\dag})_{\nu} \zeta^n +  X_\nu^\dag (\mathbb{X}_{\zeta})^{n} \bigg). \bigg]}\label{BVPolKerA4}
\end{gather}
\end{subequations}
The kernel of $\check{\omega}$ is then generated by the free parameters 
$\left\{(\mathbb{X}_{{\th}})_{{\alpha \beta}}, (\mathbb{X}_{\zeta^\dag})_{\lambda}, (\mathbb{X}_{X^\dag})_{\mu} \right\}$.

Indeed, we start by writing the condition $\iota_{\mathbb{X}} \check\omega = 0$: 
\begin{subequations}\label{e:PBVksimp}
\begin{gather}\label{e:PBVksimpa}
\delta \partial_n X^\nu: \qquad 
    \bigg[(\mathbb{X}_X)^\mu   G_{\mu \nu} (X) {\th}^{n n} \bigg]  = 0 \\
\delta X^\mu: \qquad 
    \bigg[ - {\th}^{n \rho} {\th}^{\sigma \beta} (\mathbb{X}_{{\th}})_{{\rho \sigma}} \partial_\beta X^\nu G_{\mu \nu} (X) + {\th}^{n n} (\mathbb{X}_{\partial_n X})^\nu   G_{\mu \nu} (X) \\ \notag
        +{\th}^{n t}  \partial_t (\mathbb{X}_X)^\nu  G_{\mu \nu} (X) + {\th}^{n \beta} \partial_\beta X^\nu \frac{\partial G_{\mu \nu} (X)}{\partial X^\rho}  (\mathbb{X}_{X})^{\rho} \\ \notag
        + \partial_t ({\th}^{n t} G_{\mu \nu} (X) (\mathbb{X}_X)^\nu  ) - {\th}^{n \beta} \partial_\beta X^\nu \frac{\partial G_{\mu \nu} (X)}{\partial X^\rho} (\mathbb{X}_{X})^{\rho} \\ \notag
        + \frac{1}{2} \bigg((\mathbb{X}_{X^\dag})_{\mu} \zeta^n +  X_\mu^\dag (\mathbb{X}_{\zeta})^{n} \bigg)\bigg]   =0 \\
\delta X^\dag_\mu: \qquad  
    \bigg[ \zeta^n (\mathbb{X}_{X})^{\mu} \bigg] = 0 \\ 
\delta {\th}_{\alpha \beta}: \qquad 
    \bigg[ (\mathbb{X}_X)^\mu   \partial_\lambda X^\nu G_{\mu \nu} (X) {\th}^{\dag n \alpha} {\th}^{\beta \lambda} +  2 {\th}^{n \alpha} (\mathbb{X}_{\zeta})^{\beta} - (\mathbb{X}_{{\th}})^{{\dag \alpha \beta}} \zeta^n - {\th}^{\dag \alpha \beta} (\mathbb{X}_{\zeta})^{n} \bigg]  \label{e:PBVksimpd}\\ 
\delta {\th}^{\dag \alpha \beta}:\qquad  
    \bigg[ 2  \delta_\alpha^n (\mathbb{X}_{\zeta})^{\lambda}  {\th}_{\lambda \beta}  + \zeta^n (\mathbb{X}_{\th})_{ {\alpha \beta}}\bigg] \label{e:PBVksimpe}\\
\delta \zeta^\lambda : \qquad 
    \bigg[\delta_\lambda^n X^\dag_\mu (\mathbb{X}_X)^\mu  - (\mathbb{X}_{\zeta^\dag})_{\lambda} \zeta^n - \zeta_\lambda^\dag (\mathbb{X}_{\zeta})^{n} - \delta_\lambda^n \zeta^\dag_\alpha (\mathbb{X}_{\zeta})^{\alpha}\\\notag
        - 2 (\mathbb{X}_{{\th}})^{{\dag n \beta}} {\th}_{\beta \lambda} - 2 {\th}^{\dag n \beta} (\mathbb{X}_{{\th}})_{{\beta \lambda}} + {\th}^{\dag \alpha \beta} (\mathbb{X}_{{\th}})_{{\alpha \beta}} \delta_\lambda^n \bigg] = 0\\
\delta \zeta^\dag_\alpha : \qquad \bigg[ \zeta^n (\mathbb{X}_{\zeta})^{\alpha} \bigg] = 0 
\end{gather} 
\end{subequations}

Notice that the variations leading to Equations \eqref{e:PBVksimpe} and \eqref{e:PBVksimpd} are constrained, so more care will need to be taken in analysing them. We can simplify Equations \eqref{e:PBVksimp} by using \eqref{e:PBVksimpa}, and assuming ${\th}^{n n} \neq 0$:
\begin{subequations}\label{e:PBVKSimpB}
\begin{gather} 
\delta \partial_n X^\nu: \qquad 
    (\mathbb{X}_X)^\mu   = 0\label{e:PBVKSimpBa}\\
\delta X^\mu : \qquad 
    \bigg[ - {\th}^{n \rho} {\th}^{\sigma \beta} (\mathbb{X}_{{\th}})_{{\rho \sigma}} \partial_\beta X^\nu G_{\mu \nu} (X) + {\th}^{n n} (\mathbb{X}_{\partial_n X})^\nu   G_{\mu \nu} (X) \label{e:PBVKSimpBb}\\
        + \frac{1}{2} \bigg((\mathbb{X}_{X^\dag})_{\mu} \zeta^n +  X_\mu^\dag (\mathbb{X}_{\zeta})^{n} \bigg) \bigg]   =0\notag\\
\delta X^\dag_\mu: \qquad  \bigg[ \zeta^n (\mathbb{X}_{X})^{\mu} \bigg] \stackrel{\eqref{e:PBVKSimpBa}}{\equiv} 0  \qquad \text{lin dip.} \label{e:PBVKSimpBc}\\
\delta {\th}_{\alpha \beta}:\qquad 
    \bigg[  2 {\th}^{\dag n \alpha} (\mathbb{X}_{\zeta})^{\beta} - (\mathbb{X}_{{\th}})^{{\dag \alpha \beta}} \zeta^n - {\th}^{\dag \alpha \beta} (\mathbb{X}_{\zeta})^{n} \bigg]  \label{e:PBVKSimpBd}\\
\delta {\th}^{\dag \alpha \beta} :\qquad 
    \bigg[ 2 \delta_\alpha^n (\mathbb{X}_{\zeta})^{\lambda}  {\th}_{\lambda \beta}  + \zeta^n (\mathbb{X}_{\th})_{ {\alpha \beta}}\bigg] \label{e:PBVKSimpBe}\\
\delta \zeta^\lambda: \qquad 
    \bigg[ - (\mathbb{X}_{\zeta^\dag})_{\lambda} \zeta^n - \zeta_\lambda^\dag (\mathbb{X}_{\zeta})^{n} - \delta_\lambda^n \zeta^\dag_\alpha (\mathbb{X}_{\zeta})^{\alpha} \label{e:PBVKSimpBf}\\
        - 2 (\mathbb{X}_{{\th}})^{{\dag n \beta}} {\th}_{\beta \lambda} - 2 {\th}^{\dag n \beta} (\mathbb{X}_{{\th}})_{{\beta \lambda}} + {\th}^{\dag \alpha \beta} (\mathbb{X}_{{\th}})_{{\alpha \beta}} \delta_\lambda^n \bigg] = 0 \notag\\
\delta \zeta^\dag_\alpha :\qquad 
    \bigg[ \zeta^n (\mathbb{X}_{\zeta})^{\alpha} \bigg] = 0 \qquad   \label{e:PBVKSimpBg}
\end{gather} \end{subequations}
where Equation \eqref{e:PBVKSimpBc} (and \eqref{e:PBVKSimpBg}, as we will see) just follows from the others. We obtain \eqref{BVPolKerA3} and \eqref{BVPolKerA4} from \eqref{e:PBVKSimpBf} and \eqref{e:PBVKSimpBb} respectively. Let us analyze now Equations \eqref{e:PBVKSimpBd} and \eqref{e:PBVKSimpBe} together. Using formula \eqref{BVPolKerA11} to express constrained variations in terms of unconstrained ones, we find:
\begin{gather} 
\delta {h}_{\alpha \beta}: \qquad
    \frac{1}{\sqrt{\mathsf{h}}}\bigg\{ P^{\perp \alpha \beta}_{\rho \sigma}  \bigg[  2 {\th}^{\dag n \rho} (\mathbb{X}_{\zeta})^{\sigma} - (\mathbb{X}_{{\th}})^{{\dag  \rho \sigma}} \zeta^n - {\th}^{\dag  \rho \sigma} (\mathbb{X}_{\zeta})^{n} \bigg]  +   \label{e:PBVunconst1}\\
        \bigg[ \underbrace{{\th}^{\dag  \lambda \tau} {\th}^{\alpha \beta}}_{A} - P^{\perp \alpha \beta}_{\rho \sigma}  \frac{\sqrt{\mathsf{h}}}{2}( {\th}^{\lambda \tau} h^{\dag  \rho \sigma} -   \underbrace{h^{\dag  \gamma \delta} {\th}_{\gamma \delta}{\th}^{\lambda \rho} {\th}^{\tau \sigma}}_{B} )\bigg]  \bigg[ 2  \delta_\lambda^n (\mathbb{X}_{\zeta})^{{\lambda '}}  {\th}_{\lambda ' \tau}  + \zeta^n (\mathbb{X})_{ {\th}_{\lambda \tau}}\bigg]\bigg\} = 0\notag\\
(\delta h^{\dag  \rho \sigma}):\qquad 
    \mathsf{h} P^{\perp \alpha \beta}_{\rho \sigma} \bigg[ 2  \delta_\alpha^n (\mathbb{X}_{\zeta})^{\lambda}  {\th}_{\lambda \beta}  + \zeta^n (\mathbb{X})_{ {\th}_{\alpha \beta}}\bigg]=0 \label{e:PBVunconst2}
\end{gather}

We claim that \eqref{BVPolKerA2} follows from \eqref{e:PBVunconst2}, and that \eqref{e:PBVunconst1} is identically satisfied, provided that  \eqref{BVPolKerA2} and  \eqref{BVPolKerA3} hold. To see this, let us consider first Equation \eqref{e:PBVunconst2}:
\begin{gather*} 
 P^{\perp \alpha \beta}_{\ {} (\rho \sigma)} \bigg[ 2  \delta_\alpha^n (\mathbb{X}_{\zeta})^{\lambda}  {\th}_{\lambda \beta}  + \zeta^n (\mathbb{X})_{ {\th}_{\alpha \beta}}\bigg]=0
\end{gather*}
{where $P^{\perp \alpha \beta}_{(\rho \sigma)} := \frac{1}{2} ( P^{\perp \alpha \beta}_{\rho \sigma} + P^{\perp \alpha \beta}_{\sigma \rho})$. Using the explicit form of $P^{\perp \alpha \beta}_{\rho \sigma}$ (Equation \eqref{e:Porthproj}) we get (for $\sigma=\rho$)}:
\begin{gather}\label{EqForEta}
2(\mathbb{X}_{\zeta})^{\lambda} {\th}_{\lambda \rho} \delta^n_\rho + \zeta^n (\mathbb{X}_{{\th}})_{{\rho \rho}}- {\th}_{\rho \rho} (\mathbb{X}_{\zeta})^{n}=0
\end{gather}
and if $\rho \neq\sigma$:
\begin{gather*}
 (\mathbb{X}_{\zeta})^{\lambda} {\th}_{\lambda t} +  \zeta^n (\mathbb{X}_{{\th}})_{{n t}} -  {\th}_{n t} (\mathbb{X}_{\zeta})^{n}=0
\end{gather*}
Taking into account the cases $\rho=\sigma=t$ and $\rho=\sigma=n$, we will now derive eq. $\eqref{BVPolKerA2}$. The case $\rho \neq\sigma$ follows then from eq. $\eqref{BVPolKerA2}$ itself. Let us begin with $\rho=\sigma=t$. Eq. $\eqref{EqForEta}$ becomes:
\begin{gather}\label{EqForEtan}
 (\mathbb{X}_{\zeta})^{n}=({\th}_{t t})^{-1} \zeta^n (\mathbb{X}_{{\th}})_{{t t}}
\end{gather}
If $\rho=\sigma=n$, Eq. $\eqref{EqForEta}$ becomes:
\begin{gather*}
  2 (\mathbb{X}_{\zeta})^{t} {\th}_{n t}=-  \zeta^n (\mathbb{X}_{{\th}})_{{n n}} - (\mathbb{X}_{\zeta})^{n} {\th}_{n n}
\end{gather*}
and, using $\eqref{EqForEtan}$ together with the condition $(\mathbb{X})_{\textit{det}(h)}=0$, the RHS becomes:
\begin{gather*}
 -\zeta^n ({\th}_{tt})^{-1} [{\th}_{tt}(\mathbb{X}_{{\th}})_{{nn}} + (\mathbb{X}_{{\th}})_{{tt}}{\th}_{nn}]= -2\zeta^n ({\th}_{tt})^{-1}{\th}_{nt} (\mathbb{X}_{{\th}})_{{nt}}
\end{gather*}
which means:
\begin{gather}\label{EqForEtat}
  (\mathbb{X}_{\zeta})^{t}=-  ({\th}_{t t})^{-1} \zeta^n (\mathbb{X}_{{\th}})_{{n t}}
\end{gather}
Using ${\th}_{tt}={-\th}^{nn} $ and ${\th}^{n\alpha}= -\varepsilon^{\alpha \lambda} {\th}_{\lambda t} $, Equations $\eqref{EqForEtan}$ and $\eqref{EqForEtat}$ can be expressed as $\eqref{BVPolKerA2}$:
\begin{gather*}
  (\mathbb{X}_{\zeta})^{\alpha}= ({\th}^{n n})^{-1} \zeta^n (\mathbb{X}_{{\th}})^{{n \alpha}}
\end{gather*}

Going back to $\eqref{e:PBVunconst1}$, eq $\eqref{e:PBVunconst2}$ implies that the terms $A$ and $B$ vanish. In fact, if a tensor $K_{\alpha \beta}$ is such that $P^{\perp \alpha \beta}_{\ {} (\rho \sigma)}K_{\alpha \beta}=0$, then also ${\th}^{\dag \alpha \beta} K_{\alpha \beta} := h h^{\dag \rho \sigma}P^{\perp \alpha \beta}_{\ {} \rho \sigma}K_{\alpha \beta}=0$ and $P^{\perp (\alpha \beta)}_{\ {} \rho \sigma} {\th}^{\lambda \rho}{\th}^{\tau \sigma} K_{\tau \sigma}=0$. Furthermore:
\begin{gather*} 
\frac{\sqrt{\mathsf{h}}}{2}{\th}^{\lambda \tau} h^{\dag  \rho \sigma}  \bigg[ 2  \delta_\lambda^n (\mathbb{X}_{\zeta})^{{\lambda '}}  {\th}_{\lambda ' \tau}  + \zeta^n (\mathbb{X})_{ {\th}_{\lambda \tau}}\bigg] =   {\th}^{\dag  \rho \sigma} (\mathbb{X}_{\zeta})^{n}
\end{gather*}
and Equation \eqref{e:PBVunconst1} becomes:
\begin{gather*} 
\frac{1}{\sqrt{\mathsf{h}}}\bigg\{ P^{\perp \alpha \beta}_{\rho \sigma}  \bigg[  2 {\th}^{\dag n \rho} (\mathbb{X}_{\zeta})^{\sigma} - (\mathbb{X}_{{\th}})^{{\dag  \rho \sigma}} \zeta^n - 2 {\th}^{\dag  \rho \sigma} (\mathbb{X}_{\zeta})^{n} \bigg]  +   \\
 {\th}^{\dag  \lambda \tau} {\th}^{\alpha \beta}   \bigg[ 2  \delta_\lambda^n (\mathbb{X}_{\zeta})^{{\lambda '}}  {\th}_{\lambda ' \tau}  + \zeta^n (\mathbb{X})_{ {\th}_{\lambda \tau}}\bigg]\bigg\} = 0.
\end{gather*}
Writing the projector explicitly:
\begin{gather*} 
\frac{1}{\sqrt{\mathsf{h}}}\bigg\{   2 {\th}^{\dag n \alpha} (\mathbb{X}_{\zeta})^{\beta} - (\mathbb{X}_{{\th}})^{{\dag \alpha \beta}} \zeta^n - 2 {\th}^{\dag \alpha \beta} (\mathbb{X}_{\zeta})^{n}   +   \\
+ {\th}^{\alpha \beta} \bigg[ \frac{1}{2} {\th}_{\rho \sigma} (\mathbb{X}_{{\th}})^{{\dag  \rho \sigma}} \zeta^n - {\th}^{\dag n \rho} {\th}_{\rho \sigma} (\mathbb{X}_{\zeta})^{{\sigma}} \bigg] + \\
 {\th}^{\alpha \beta}   \bigg[ 2   {\th}^{\dag n \rho} {\th}_{\rho \sigma} (\mathbb{X}_{\zeta})^{{\sigma}}  + {\th}^{\dag  \rho \sigma} \zeta^n (\mathbb{X})_{ {\th}_{\rho \sigma}}\bigg]\bigg\} = 0.
\end{gather*}
The third line is equal to twice the second one, with inverted sign, as a consequence of $\mathbb{X}(\mathrm{Tr}_{{\th}}{\th}^\dag) = 0$. Thus:
\begin{gather*} 
\frac{1}{\sqrt{\mathsf{h}}}\bigg\{   2 {\th}^{\dag n \alpha} (\mathbb{X}_{\zeta})^{\beta} - (\mathbb{X}_{{\th}})^{{\dag \alpha \beta}} \zeta^n - 2 {\th}^{\dag \alpha \beta} (\mathbb{X}_{\zeta})^{n}   +   \\
- {\th}^{\alpha \beta} \bigg[ \frac{1}{2} {\th}_{\rho \sigma} (\mathbb{X}_{{\th}})^{{\dag  \rho \sigma}} \zeta^n - {\th}^{\dag n \rho} {\th}_{\rho \sigma} (\mathbb{X}_{\zeta})^{{\sigma}} \bigg]\bigg\} = 0
\end{gather*}
Now, inserting the expressions of $(\mathbb{X}_{\zeta})^{\lambda}$ and $(\mathbb{X}_{{\th}})^{{\dag \alpha \beta}}$ in \eqref{e:PBVunconst1}, it is possible to check that it vanishes identically. Equations \eqref{e:PBVker} then show that the kernel is regular and allows pre-symplectic reduction.
\end{proof}

\begin{remark}
In the first part of the proof we have shown that the kernel of the boundary two-form $\check{\omega}$ is regular. This means that it is possible to construct the pre-symplectic reduction $\mathcal{F}^\partial_P = \check{\mathcal{F}}_P / \mathrm{ker}(\check{\omega}^\sharp)$, and the rest of the BFV structure will follow as a consequence of \cite{CMR2012}.
In part 2 of the proof we perform said pre-symplectic reduction over the space of fields, and construct an explicit chart for $\mathcal{F}_P^\partial$. In order to do this, we will explicitly flow the fields in $\check{\mathcal{F}}_P$ along the vertical vector fields in the kernel of $\check{\omega}$.
\end{remark}

\begin{proof}[Proof. Part 2.]
To obtain the explicit expression of the projection given in \eqref{e:PolBVProjection}, we have to solve the set of differential equations related to the flow produced by the kernel vectors defined by Equations \eqref{e:PBVker}. The free parameters in the kernel are $(\mathbb{X}_{{\th}})_{{\alpha \beta}}, (\mathbb{X}_{\zeta^\dag})^{\lambda}, (\mathbb{X}_{X^\dag})^{\mu}$. The system of differential equations is:

\begin{gather*}
\partial_\tau{X^\mu}= 0
\end{gather*} 
\begin{gather*}
\partial_\tau{\zeta^\alpha} = ({\th}^{n n})^{-1} \zeta^n (\mathbb{X}_{{\th}})^{{\alpha n}}
\end{gather*} 
\begin{gather*}
\partial_\tau{{\th}^{\dag n \lambda}} = -{\th}^{\dag n \beta } {\th}^{\lambda \alpha} (\mathbb{X}_{{\th}})_{{\beta \alpha}} + \frac{1}{2} {\th}^{\dag \alpha \beta} (\mathbb{X}_{{\th}})_{{\alpha \beta}} {\th}^{n \lambda}+ \notag\\
-\frac{1}{2}  {\th}^{\lambda \alpha} \bigg(  (\mathbb{X}_{\zeta^\dag})_{\alpha} \zeta^n + \zeta_\alpha^\dag \partial_\tau{\zeta^n} + \delta_\alpha^n \zeta_\beta^\dag \partial_\tau{\zeta^\beta} \bigg)
\end{gather*} 
\begin{gather*}
\partial_\tau{\partial_n X^\mu} = \frac{1}{{\th}^{n n}}\bigg[ {\th}^{n \rho} {\th}^{\sigma \beta} (\mathbb{X}_{{\th}})_{{\rho \sigma}} \partial_\beta X^\mu -   \frac{1}{2}G^{\mu \nu}(X) \bigg( (\mathbb{X}_{X^\dag})_{\nu} \zeta^n +  X_\nu^\dag\partial_\tau{\zeta^n} \bigg) \bigg]
\end{gather*}
\begin{gather*}
\partial_\tau {\th}_{\alpha \beta} =(\mathbb{X}_{{\th}})_{{\alpha \beta}}
\end{gather*}
\begin{gather*}
\partial_\tau \zeta_\lambda^\dag = (\mathbb{X}_{\zeta^\dag})_{\lambda}
\end{gather*}
\begin{gather*}
\partial_\tau X_\mu^\dag =(\mathbb{X}_{X^\dag})_{\mu}
\end{gather*}
We will flow along three vector fields that span the kernel, each one related to one of the aforementioned free parameters, and  the order will be chosen in such a way to simplify the differential equations. In the first part the parameter of the flow $\tau$ will range from $0$ to $1$, while in the second part it will range from $1$ to $2$ and in the third part from $2$ to $3$. An explicit chart-expression for the projection map $\mathcal{F}_P \to \mathcal{F}^\partial_P$ will be given by the value of the field at the end of the composite flow.

We start using  $(\mathbb{X}_{X^\dag})^{\mu}$ and setting $(\mathbb{X}_{{\th}})_{{\alpha \beta}}$ and $(\mathbb{X}_{\zeta^\dag})^{\lambda}$ to zero. In this case there are only two non-trivial differential equations:

\begin{gather*}
\partial_\tau\partial_n X^\mu = -\frac{1}{2 {\th}^{n n}} G^{\mu \nu} \partial_\tau{X}^\dag_\nu \zeta^n\\
\partial_\tau X_\mu^\dag = (\mathbb{X}_{X^\dag})_{\mu}
\end{gather*}
where ${\th}^{n n}, G^{\mu \nu}(X)$ and $\zeta^n$ do not depend on $\tau$. We choose a value of  $(\mathbb{X}_{X
^\dag})^{\mu}$ that sets $X^\dag_\mu(\tau=1)$ to zero (e.g. $(\mathbb{X}_{X^\dag})^{\mu}(\tau) = - X^\dag_\mu (0)$). We obtain:
\begin{gather}
\partial_n X^\mu (1) = \bigg( \partial_n X^\mu + \frac{1}{2 {\th}^{n n}} G^{\mu \nu} X^\dag_\nu \zeta^n \bigg)_{|_{\tau=0}}\label{31/05A5}\\
X^\dag_\mu (1) = 0\notag
\end{gather}
while the other fields remain unaffected. In the second part of the flow, we take a non-vanishing choice of $(\mathbb{X}_{\zeta^\dag})^{\lambda}$, and set  $(\mathbb{X}_{X^\dag})^{\mu}$ and  $(\mathbb{X}_{{\th}})_{{\alpha \beta}}$ to zero. The only non-trivial differential equations are then:

\begin{gather*}
\partial_\tau {\th}^{\dag n \lambda} = - \frac{1}{2} {\th}^{\lambda \alpha}\partial_\tau \zeta^\dag_\alpha \zeta^n\\
\partial_\tau \zeta^\dag_\alpha = (\mathbb{X}_{\zeta^\dag})^{\alpha}
\end{gather*}
where ${\th}^{\lambda \alpha}$ and $\zeta^n$ do not depend on $\tau$. Similarly to the previous case, we choose a value of $(\mathbb{X}_{\zeta^\dag})^{\alpha}$ that sets $\zeta^\dag_\alpha(\tau=2)$ to zero (e.g. $(\mathbb{X}_{\zeta^\dag})^{\alpha} = - \zeta^\dag_\alpha (1)$, with $\zeta^\dag_\alpha (1)$ the value of $\zeta^\dag(\tau)$ at the start of this iteration of the flow, $\tau\in[1,2]$). The solution is:
\begin{gather}
{\th}^{\dag n \lambda} (2) = \bigg( {\th}^{\dag n \lambda} +   \frac{1}{2} {\th}^{\lambda \alpha}\zeta^\dag_\alpha \zeta^n \bigg)_{|_{\tau = 0}}\label{31/05A3}\\
\zeta^\dag_\alpha (2) = 0\notag
\end{gather}
and again the other fields remain unmodified. The third part of the flow is characterized by a non-vanishing choice of $(\mathbb{X}_{{\th}})_{{\alpha \beta}}$, while the other parameters are set to be zero. The non-trivial differential equations are:
 
\begin{gather}
\partial_\tau \zeta^\alpha = \frac{1}{{\th}^{n n}} \zeta^n \partial_\tau {\th}^{\alpha n}\\
\partial_\tau  {\th}^{\dag n \lambda} = - {\th}^{\dag n \beta}  {\th}^{\lambda  \alpha} \partial_\tau  {\th}_{\beta \alpha} + \frac{1}{2}  {\th}^{\dag \alpha \beta} \partial_\tau  {\th}_{\alpha \beta}  {\th}^{n \lambda} \label{e:diffeq2}\\
\partial_\tau \partial_n X^\mu= \frac{1}{{\th}^{n n}} {\th}^{n \rho} {\th}^{ \sigma \beta} \partial_\tau {\th}_{\rho \sigma} \partial_\beta X^\mu\\
\partial_\tau  {\th}_{\alpha \beta} = (\mathbb{X})_{ {\th}_{\alpha \beta}}
\end{gather}
where the absence of terms dependent on $X^\dag_\mu$ and $\zeta^\dag_\alpha$ is due to the fact that we set them to zero in the first two parts of the flow, and this makes calculations easier. The first and third lines can be solved directly, while the second line must be treated. We are thus going to inspect the expression in the second line. First, we eliminate ${\th}^{\dag t t}$ using the tracelessness condition ${\th}^{\dag \alpha \beta}{\th}_{\alpha \beta}=0$:

\begin{gather}\label{31/05A2}
{\th}^{\dag t t} = - \frac{1}{{\th}_{t t}} \bigg( 2 {\th}^{\dag n t} {\th}_{n t} + {\th}^{\dag n n}{\th}_{n n}\bigg)
\end{gather}
Then, Equation \eqref{e:diffeq2} becomes: 
\begin{gather*}
{\th}^{\dag n n} \bigg( - {\th}^{\lambda \beta} \partial_\tau {\th}_{\beta n} + \frac{{\th}^{n \lambda}}{2}  \partial_\tau {\th}_{n n}  - \frac{{\th}^{n \lambda}}{2} \frac{ {\th}_{n n}}{{\th}_{t t}} \partial_\tau {\th}_{t t} \bigg)+\\
{\th}^{\dag n t} \bigg( - {\th}^{\lambda \beta} \partial_\tau {\th}_{\beta t} + {\th}^{n \lambda} \partial_\tau{\th}_{n t}  -  {\th}^{n \lambda} \frac{{\th}_{n t} }{{\th}_{t t}} \partial_\tau {\th}_{t t}  \bigg)=\\
-{\th}^{\dag n n} \bigg(  {\th}^{\lambda t} \partial_\tau {\th}_{t n} + \frac{{\th}^{n \lambda}}{2}  \partial_\tau {\th}_{n n}  + \frac{{\th}^{n \lambda}}{2} \frac{ {\th}_{n n}}{{\th}_{t t}}\partial_\tau {\th}_{t t} \bigg)+\\
-{\th}^{\dag n t} \bigg(  {\th}^{\lambda t} \partial_\tau {\th}_{t t} +{\th}^{n \lambda} \frac{{\th}_{n t} }{{\th}_{t t}} \partial_\tau {\th}_{t t}  \bigg)= \\
-{\th}^{\dag n n} \frac{1}{{\th}_{t t}} \bigg(  {\th}^{\lambda t} \partial_\tau {\th}_{t n} {\th}_{t t}+ \frac{{\th}^{n \lambda}}{2}  \partial_\tau {\th}_{n n} {\th}_{t t} + \frac{{\th}^{n \lambda}}{2}  {\th}_{n n}\partial_\tau {\th}_{t t} \bigg)+\\
-{\th}^{\dag n t}  \frac{1}{{\th}_{t t}} \delta^\lambda_t \partial_\tau {\th}_{t t}.
\end{gather*}
We now express, where needed, $ {\th}^{\alpha \beta}$ in function of $ {\th}_{\alpha \beta}$ through the relation $ {\th}^{\alpha \beta} =\, \varepsilon^{\alpha \alpha'} \varepsilon^{\beta' \beta}  {\th}_{\alpha' \beta'}$ (cf. with Equation \eqref{A6}), where $\varepsilon^{\alpha\beta}$ is the two-dimensional Levi-Civita symbol). We then have:

\begin{gather*}
-{\th}^{\dag n n} \frac{1}{{\th}_{t t}} \bigg(  {\th}^{\lambda t} \partial_\tau {\th}_{t n} {\th}_{t t}+ \frac{{\th}^{n \lambda}}{2}  \partial_\tau ({\th}_{n n} {\th}_{t t}) \bigg)
-{\th}^{\dag n t}  \frac{1}{{\th}_{t t}} \delta^\lambda_t \partial_\tau {\th}_{t t}=\\ 
-{\th}^{\dag n n} \frac{1}{{\th}_{t t}} \bigg(\varepsilon^{\lambda \alpha}\varepsilon^{\beta t} {\th}_{\alpha \beta} \partial_\tau {\th}_{t n} {\th}_{t t}+ \varepsilon^{\lambda \alpha}\varepsilon^{\beta n}  \frac{{\th}_{\alpha \beta}}{2}  \partial_\tau ({\th}_{n n} {\th}_{t t}) \bigg)
-{\th}^{\dag n t}  \frac{1}{{\th}_{t t}} \delta^\lambda_t \partial_\tau {\th}_{t t}= \\ 
-{\th}^{\dag n n} \frac{1}{{\th}_{t t}} \varepsilon^{\lambda \alpha}\varepsilon^{n t} \bigg( {\th}_{\alpha n} \partial_\tau {\th}_{t n} {\th}_{t t} -  \frac{{\th}_{\alpha t}}{2}  \partial_\tau ({\th}_{n n} {\th}_{t t}) \bigg)
-{\th}^{\dag n t}  \frac{1}{{\th}_{t t}} \delta^\lambda_t \partial_\tau {\th}_{t t} = \\ 
-{\th}^{\dag n n} \frac{1}{{\th}_{t t}} \frac{1}{2 {\th}_{n t}}\varepsilon^{\lambda \alpha}\varepsilon^{n t} \bigg( 
\partial_\tau ({\th}_{n t}^2)  {\th}_{\alpha n}  {\th}_{t t} - {\th}_{\alpha t} {\th}_{n t}  \partial_\tau ({\th}_{n n} {\th}_{t t}) \bigg)
-{\th}^{\dag n t}  \frac{1}{{\th}_{t t}} \delta^\lambda_t \partial_\tau {\th}_{t t}=\\
-{\th}^{\dag n n} \frac{1}{{\th}_{t t}} \delta^\lambda_t \partial_\tau {\th}_{n t} -{\th}^{\dag n t}  \frac{1}{{\th}_{t t}} \delta^\lambda_t \partial_\tau {\th}_{t t}
\end{gather*}
where the last equality follows from the fact that $
\partial_\tau ({\th}_{n t}^2)  {\th}_{\alpha n}  {\th}_{t t} - {\th}_{\alpha t} {\th}_{n t}  \partial_\tau ({\th}_{n n} {\th}_{t t}) $ vanishes for $\alpha=t$, since 
$$
\partial_\tau ({\th}_{n t}^2)  {\th}_{t n}  {\th}_{t t} - {\th}_{t t} {\th}_{n t}  \partial_\tau ({\th}_{n n} {\th}_{t t}) = -{\th}_{t t} {\th}_{n t} \partial_\tau (\mathrm{det}({\th})) = 0,
$$
while for $\alpha=n$:
\begin{gather*}
\partial_\tau ({\th}_{n t}^2)  {\th}_{n n}  {\th}_{t t} - {\th}_{n t} {\th}_{n t}  \partial_\tau ({\th}_{n n} {\th}_{t t})\\
=( {\th}_{n n}  {\th}_{t t} - {\th}_{n t} {\th}_{n t}) \partial_\tau ({\th}_{n t}^2) = - 2  {\th}_{n t} \partial_\tau ({\th}_{n t})
\end{gather*}
since $ \th_{nn} \th_{tt} - \th_{nt}^2=\mathrm{det}(\th)=-1 $, and it follows that:
\begin{gather*}
\partial_\tau {\th}^{\dag n \lambda}= -{\th}^{\dag n n} \frac{1}{{\th}_{t t}} \delta^\lambda_t \partial_\tau {\th}_{n t} -{\th}^{\dag n t}  \frac{1}{{\th}_{t t}} \delta^\lambda_t \partial_\tau {\th}_{t t}= - \frac{\delta^\lambda_t}{{\th}_{t t}} {\th}^{\dag n \alpha} \partial_\tau {\th}_{\alpha t}
\end{gather*}
and the set of differential equations becomes:

\begin{gather*}
\partial_\tau \zeta^\alpha =  \frac{1}{{\th}^{n n}} \zeta^n \partial_\tau {\th}^{\alpha n}
\end{gather*}
\begin{gather*}
\partial_\tau {\th}^{\dag n \lambda} = - \frac{\delta^\lambda_t}{{\th}_{t t}} {\th}^{\dag n \alpha} \partial_\tau {\th}_{\alpha t} 
\end{gather*}
\begin{gather*}
\partial_\tau \partial_n X^\mu= \frac{1}{{\th}^{n n}} {\th}^{n \rho} {\th}^{ \sigma \beta} \partial_\tau {\th}_{\rho \sigma} \partial_\beta X^\mu
\end{gather*}
\begin{gather*}
\partial_\tau  {\th}_{\alpha \beta} = (\mathbb{X})_{ {\th}_{\alpha \beta}}
\end{gather*}
Since the three differential equations in the system are decoupled, we are able to solve them separately. We choose a path that sends ${\th}_{\alpha \beta}$ to the Minkowski metric\footnote{In the Riemannian case, $\eta$ denotes the 2d Euclidean metric.} (again, we could choose $(\mathbb{X}_{{\th}})_{{n \alpha}} = \eta_{n \alpha} - {\th}_{n \alpha}$). It is worth pointing out that there is no differential equation for ${\th}^{\dag t t}$, but it is not a problem: ${\th}^{\dag \alpha \beta}$ has only two degrees of freedom and it is possible to express ${\th}^{\dag t t}$ in function of the other elementary fields through eq. \eqref{31/05A2}. We start from the differential equation for $\zeta^\alpha$:

\begin{gather*}
\partial_\tau \zeta^n = \frac{1}{{\th}^{n n}} \zeta^n \partial_\tau {\th}^{n n}=\zeta^n  \frac{ \partial_\tau {\th}_{t t}}{{\th}_{t t}}  \\
\partial_\tau \zeta^t = \frac{1}{{\th}^{n n}} \zeta^n \partial_\tau {\th}^{t n} =  - \zeta^n \frac{\partial_\tau {\th}_{n t}}{{\th}_{t t}} 
\end{gather*}
we solve the system before for $\zeta^n$ and then for $\zeta^t$. The solution for $\zeta^n$ is: $\zeta^n (\tau) = \zeta^n (2) \frac{{\th}_{t t} (\tau)}{{\th}_{t t} (2)}$. Then:

\begin{gather*}
\zeta^n (\tau) =\zeta^n (2) \frac{{\th}_{t t} (\tau)}{{\th}_{t t} (2)}  \\
\partial_\tau \zeta^t = - \zeta^n (2) \frac{\partial_\tau {\th}_{n t} (\tau)}{{\th}_{t t}(2)} 
\end{gather*}
whose solution is $\zeta^t (\tau) =\zeta^t (2) -\zeta^n (2) \frac{ {\th}_{n t} (\tau)}{{\th}_{t t}(2)} + \zeta^n (2) \frac{\tau {\th}_{n t} (2)}{{\th}_{t t}(2)}$. Evaluating the solutions at $\tau=3$ and expressing all the quantities in function of fields evaluated at $\tau=0$:

\begin{gather*}
\zeta^n (3) = \bigg(\zeta^n \frac{1}{{\th}_{t t} }\bigg)_{|_{\tau=0}}  \\
 \zeta^t (3) = \bigg(\zeta^t  +\zeta^n \frac{ {\th}_{n t}}{{\th}_{t t}}\bigg)_{|_{\tau=0}} 
\end{gather*}
We solve now the differential equation for ${\th}^{\dag n \lambda}$:
\begin{gather*}
\partial_\tau {\th}^{\dag n n} =0 \\
\partial_\tau {\th}^{\dag n t} = - \frac{1}{{\th}_{t t}} {\th}^{\dag n \alpha} \partial_\tau {\th}_{\alpha t} 
\end{gather*}
${\th}^{\dag n n}$ is unaffected, and we have to solve only for ${\th}^{\dag n t}$. Manipulating the second line we obtain:
\begin{gather*}
\partial_\tau {\th}^{\dag n t} {\th}_{t t} + {\th}^{\dag n t} \partial_\tau {\th}_{t t} = - {\th}^{\dag n n} \partial_\tau {\th}_{n t} \\
\partial_\tau ({\th}^{\dag n t} {\th}_{t t}) = - \partial_\tau ( {\th}^{\dag n n}{\th}_{n t} )
\end{gather*}
the solution is then ${\th}^{\dag n t}{\th}_{t t} (\tau) =  {\th}^{\dag n t}  {\th}_{t t} (2) - {\th}^{\dag n n}{\th}_{n t} (\tau) + {\th}^{\dag n n}{\th}_{n t} (2)$. Isolating ${\th}^{\dag n t}$ and evaluating at $\tau=3$, we obtain:
\begin{gather*}
{\th}^{\dag n n} (3) = {\th}^{\dag n n} (2)\\
{\th}^{\dag n t} (3) = {\th}^{\dag n \beta} {\th}_{\beta t} (2) 
\end{gather*}
and to express ${\th}^{\dag n \lambda}$ in function of fields evaluated at $\tau=0$ we have to concatenate this transformation with the transformation caused by the second part of the flow (check eq. \eqref{31/05A3}). We then have:
\begin{gather*}
{\th}^{\dag n n} (3) = \bigg( {\th}^{\dag n n} +   \frac{1}{2} {\th}^{n \alpha} \zeta^\dag_\alpha \zeta^n \bigg)_{|_{\tau = 0}}\\ 
{\th}^{\dag n t} (3) = \bigg( {\th}^{\dag n \beta} +   \frac{1}{2} {\th}^{\beta \alpha} \zeta^\dag_\alpha \zeta^n \bigg)_{|_{\tau = 0}}{\th}_{\beta t} (0).
\end{gather*}
At last, we have to solve for $\partial_n X^\mu$, with the equation:
\begin{gather*}
\partial_\tau \partial_n X^\mu= \frac{1}{{\th}^{n n}} {\th}^{n \rho} {\th}^{ \sigma \beta} \partial_\tau {\th}_{\rho \sigma} \partial_\beta X^\mu
\end{gather*}
This is the same differential equation as in \eqref{31/05A4}, and thus the solution is the same as in \eqref{A15}:
\begin{gather*}
\partial_n X^\mu (3) =  {\th}^{n \alpha} \partial_\alpha X^\mu (2)
\end{gather*}
and concatenating this transformation with eq \eqref{31/05A5}:
\begin{gather*}
\partial_n X^\mu (3) = \bigg( {\th}^{n \alpha} \partial_\alpha X^\mu + \frac{1}{2} G^{\mu \nu} X^\dag_\nu \zeta^n\bigg)_{|_{\tau=0}}
\end{gather*}

The reduction then sets $({\th}_{\alpha \beta}, \zeta^\dag_\lambda, X^\dag_\mu) \ {}    \rightarrow ( \eta_{\alpha\beta}, 0, 0)$, and denoting coordinates in $\mathcal{F}^\partial$ with $(J_\mu, X^\mu, \sigma^n,\sigma^t, \sigma^\dag_n \sigma^\dag_t)$ we have the projection:
\begin{equation}
\pi\colon \begin{cases}
J_\mu &=  {\th}^{n \alpha} \partial_\alpha X^\nu G_{\mu \nu}+  \frac{1}{2} X^\dag_\mu \zeta^n\\
X^\mu &= X^\mu\\
\sigma^n & =  {\th}_{t t}{}^{-1}{\zeta^n}\\
\sigma^t &= {\th}_{t t}{}^{-1} \zeta^\alpha {\th}_{\alpha t}\\
\sigma^{\dag}_n & = -{\th}^{\dag n n} - \frac{1}{2} {\th}^{n \alpha} \zeta^\dag_\alpha \zeta^n\\
\sigma^\dag_t &= {\th}^{\dag n \alpha} {\th}_{\alpha t} + \frac{1}{2} \zeta^\dag_t \zeta^n
\end{cases}
\end{equation}  \\
where $\sigma_\alpha^\dag=\th^{\dag n \lambda}(3)\eta_{\lambda \alpha}$. Thus, we perform a reduction that fixes the metric ${\th}_{\alpha \beta}$ to $\eta_{\alpha\beta}$ without changing the sign of ${\th}_{t t}$. 

The two-form $\check{\omega}=\delta(\check{\alpha}+\check{\alpha}_{BV})$ is basic with respect to the exact symplectic form ($\beta\in\{n,t\}$):
\begin{gather}
\Omega_P^\partial =   \delta \int_{\partial M} J_\mu \delta X^\mu +  \sigma^{\dag}_\beta \delta \sigma^\beta 
\end{gather}
i.e.\ $\check{\omega} = \pi^*\Omega^\partial_P$ and the BFV action $S^\partial_P$ is computed to be (see Appendix $\ref{AppendixBoundaryAction}$) :
\begin{gather}\label{06/06A2}
S^\partial_P = \int_{\partial M} -  \sigma^t \partial_t X^\mu J_\mu - \frac{1}{2} \sigma^n \bigg[ J_\mu J^\mu +  \partial_t X^\mu \partial_t X^\nu G_{\mu \nu} \bigg] + \sigma^\dag_\alpha \sigma^t \partial_t \sigma^\alpha
\end{gather}
\end{proof}

We have thus defined a BV-BFV structure 
$$
(\mathcal{F}_M, S_P, Q_M, \Omega_M, \mathcal{F}_{\partial M}^\partial, S_{P}^\partial, Q_{\partial M}, \Omega_{\partial M}, \pi),
$$
where  $(\mathcal{F}_M, S_P, Q_M, \Omega_M)$ is the BV part (in the bulk) , $\pi:= \check{\pi}  \circ \tilde{\pi}$ is the composition of the restriction to the boundary map $\tilde\pi$ and of the reduction map $\check\pi$ (defined in  \eqref{e:PolBVProjection}) and $( \mathcal{F}_{\partial M}^\partial, S_{P}^\partial, Q_{\partial M}, \Omega_{\partial M})$ is the BFV part (on the boundary).

\section{Nambu--Goto theory --- BV-BFV Analysis}

The analysis of the BV-Nambu--Goto action is analogous to that of Polyakov theory. In the first section we will identify the symmetries, in the second section we will build the (broken) BV structure, and in the third section we will identify the kernel. Here we will see a different behaviour than in the case of the Polyakov action: the kernel is  not regular.

\subsection{Symmetries of the Nambu Goto action}
Nambu--Goto string theory is invariant under the action of worldsheet diffeomorphisms, which act on the coordinates as $\varphi: {x} \rightarrow {x} (\tilde{{x}})$ and on the fields as: $X \rightarrow \varphi^* X$. In a local chart:
\begin{gather} 
{x}^\alpha\rightarrow {{x}}^\alpha (\tilde{{x}}) \qquad
X^\mu \rightarrow X^\mu ({{x}} (\tilde{{x}})) \label{BVNGSYMA1}
\end{gather}
and infinitesimally, the action on fields reads:
\begin{gather} 
\delta_\zeta X = L_\zeta X: \qquad X^\mu \rightarrow X^\mu + \zeta^\alpha \partial_\alpha X^\mu \label{BVNGSYMA2}
\end{gather}
Observe that the metric is not an independent field, and there is thus no rescaling symmetry.

We build now the BV structure following the same steps as in the previous chapter. We  promote to ghost fields the infinitesimal parameters of the infinitesimal symmetry transformations. In this case, we have only the degree $+1$ ghost field $\zeta^\alpha$. We then add the degree $-1$ anti-fields $X_\mu^\dag$ and the degree $-2$ anti-ghosts $\zeta_\alpha^\dag$.  We have again an irreducible gauge group, and thus we do not have to introduce higher order ghosts or anti-ghosts.

\begin{definition}\label{def:BVNGth}
We define (relaxed) BV Nambu--Goto theory on the two-dimensional manifold with boundary $(M,\partial M)$ to be the data $(\mathcal{F}_{NG}, \Omega_{NG}, S_{NG}, Q_{NG})$, where 
\begin{equation}
    \mathcal{F}_{NG} := T^*[-1]\left( C^\infty(M,N) \times \mathfrak{X}[1](M)\right) 
\end{equation}
so that, in a local chart and denoting the degree of the various fields by 
$$
\left\{\begin{array}{cccc}0 & 1 & -1 & -2 \\
{X^\mu} & {\zeta^\alpha} & {X^\dag_\mu} & {\zeta^\dag_\alpha}\end{array}\right\}
$$ 
we have that the BV-Nambu--Goto action reads:
\begin{gather}
S_{NG}= S^{\text{cl}}_{NG} + \int_M \langle X^\dag,L_\zeta X \rangle + \langle\zeta^\dag,\frac12[\zeta,\zeta] \rangle 
\end{gather}
and $Q_M$ is the Hamiltonian vector field of $S_M$ (up to boundary terms).
\end{definition}

\begin{theorem}\label{thm:NGNOGO}
Let $(M,\partial M)$ be a two-dimensional manifold with boundary. Relaxed BV Nambu--Goto theory on $M$ is not $1$-extendable to a BV-BFV theory on $(M,\partial M)$.
\end{theorem}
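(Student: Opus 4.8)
The plan is to follow the same route as in the proof of Theorem~\ref{thm:BVBFVPolyakov}, computing the pre-boundary two-form $\check\omega$ induced by $S_{NG}$ and analysing its kernel, but now with the aim of showing that $\ker(\check\omega^\sharp)$ \emph{fails} to be regular. By \cite{CMR2012} the regularity of this kernel is exactly the condition under which the pre-symplectic reduction to BFV data is unobstructed, so its failure is precisely what prevents $1$-extendability. First I would vary $S_{NG}$ and split $\delta S_{NG} = EL + \check\pi_{NG}^*\check\alpha$. The classical part reproduces the one-form \eqref{e:NGclassoneform}, while the two BV terms $\langle X^\dag, L_\zeta X\rangle$ and $\langle\zeta^\dag,\tfrac12[\zeta,\zeta]\rangle$, after integrating by parts the normal derivatives, produce boundary contributions $\int_{\partial M}\big(X^\dag_\mu\zeta^n\,\delta X^\mu - \zeta^\dag_\alpha\zeta^n\,\delta\zeta^\alpha\big)$ in complete analogy with \eqref{28/05A5} --- \emph{but with no metric-antifield term}, since ${\th}$ and ${\th}^\dag$ are absent from $\mathcal{F}_{NG}$. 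Setting $\check\omega=\delta\check\alpha$ then gives the two-form to be analysed.

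Next I would impose $\iota_{\mathbb{X}}\check\omega=0$ for a general graded vector field $\mathbb{X}$ with components along $X,\partial_n X,\zeta,X^\dag,\zeta^\dag$. The purely classical block behaves exactly as in Part~1 of the proof of Theorem~\ref{thm:NGRPS}: the $\delta\partial_n X$ equation still yields $P^\perp_{\mu\nu}(\mathbb{X}_X)^\nu=0$, i.e.\ $(\mathbb{X}_X)^\mu=\alpha_n\partial_n X^\mu+\alpha_t\partial_t X^\mu$ as in \eqref{A35}, and the $\delta X$ equation fixes $(\mathbb{X}_{\partial_n X})$ through an expression of the type \eqref{A48two}, corrected by antifield terms. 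The decisive new equations are those obtained by varying the antifields: the $\delta X^\dag_\mu$ and $\delta\zeta^\dag_\alpha$ components read $\zeta^n(\mathbb{X}_X)^\mu=0$ and $\zeta^n(\mathbb{X}_\zeta)^\alpha=0$.

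The key point is the contrast with Polyakov theory. There, the metric sector forced $(\mathbb{X}_X)^\mu=0$ via \eqref{e:PBVKSimpBa} and $(\mathbb{X}_\zeta)^\alpha=({\th}^{nn})^{-1}\zeta^n(\mathbb{X}_{\th})^{\alpha n}$ via \eqref{BVPolKerA2}, so that $\zeta^n(\mathbb{X}_X)^\mu=0$ and $\zeta^n(\mathbb{X}_\zeta)^\alpha=({\th}^{nn})^{-1}(\zeta^n)^2(\mathbb{X}_{\th})^{\alpha n}=0$ held \emph{automatically} (using $(\zeta^n)^2=0$), and the kernel came out regular. In Nambu--Goto neither of these resolving equations exists: $(\mathbb{X}_X)^\mu=\alpha_n\partial_n X^\mu+\alpha_t\partial_t X^\mu$ is genuinely nonzero and $(\mathbb{X}_\zeta)$ is constrained by no metric relation, so $\zeta^n(\mathbb{X}_X)^\mu=0$ and $\zeta^n(\mathbb{X}_\zeta)^\alpha=0$ are genuine conditions involving the \emph{odd, non-invertible} ghost $\zeta^n$. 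Because one cannot divide by $\zeta^n$, these do not cut out a constant-rank subspace: for instance $\zeta^n\,\partial_n X^\mu\,\tfrac{\delta}{\delta X^\mu}$ satisfies $\zeta^n(\zeta^n\partial_n X^\mu)=(\zeta^n)^2\partial_n X^\mu=0$ and lies in $\ker(\check\omega^\sharp)$, whereas $\partial_n X^\mu\,\tfrac{\delta}{\delta X^\mu}$ (the choice $\alpha_n=1$) does not. Hence the module of kernel fields contains $\zeta^n$-multiples of fields that it does not itself contain; it is not locally free, equivalently $\ker(\check\omega^\sharp)$ is not a subbundle of $T\check{\mathcal{F}}_{NG}$.

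Having established non-regularity, the conclusion is immediate: the hypothesis of the reduction procedure of \cite{CMR2012} fails, so $\check\omega$ does not descend to a symplectic BFV form and the relaxed BV Nambu--Goto theory is not $1$-extendable. I expect the main obstacle to be the bookkeeping of the graded kernel analysis: one must check carefully that the remaining equations --- the $\delta X$, $\delta\zeta$ and $\delta\partial_n X$ components --- do not secretly supply a relation that restores regularity (as the metric antighost does in Polyakov), and one must phrase the obstruction correctly. Since every honest point of the graded manifold has vanishing odd coordinates, a naive pointwise rank count is inconclusive, and the cleanest formulation of the failure is the non-local-freeness exhibited above, caused by the $\zeta^n$-torsion in the antifield equations.
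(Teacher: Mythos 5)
Your proposal is correct and follows essentially the same route as the paper: compute $\check\omega_{BV}=\delta\check\alpha_{NG}+\delta\check\alpha_{BV}$, write the kernel equations, and observe that---absent the metric sector that in Polyakov theory forces $(\mathbb{X}_X)^\mu=0$ and $(\mathbb{X}_\zeta)^\alpha\propto\zeta^n$---the antifield equations $\zeta^n(\mathbb{X}_X)^\mu=0$ and $\zeta^n(\mathbb{X}_\zeta)^\alpha=0$ become genuine, non-regular conditions obstructing the reduction. Your module-theoretic phrasing of the failure (the kernel contains $\zeta^n$-multiples of vector fields it does not contain, hence is not locally free) is in fact a slightly sharper formulation than the paper's case analysis on $\zeta^n\neq 0$ versus $\zeta^n=0$, but it is the same obstruction.
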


\begin{proof}
We begin by computing the variation:
\begin{gather}
\delta S_{NG}= EL + \check{\alpha}_{NG}+ \int_{\partial M} X^\dag_\mu  \zeta^n \delta X^\mu - \zeta^\dag_\alpha \zeta^n \delta \zeta^\alpha \notag\\
 EL + \check{\alpha}_{NG} + \check{\alpha}_{BV}\label{28/05A10}
\end{gather}
Where $EL$ is the bulk term, and $\check{\alpha}_{NG}$ and $\check{\alpha}_{BV}$ the boundary terms due to the degree-$0$ part of the Nambu--Goto action and its BV part respectively (cf. Equation \eqref{e:NGclassoneform}). The pre-symplectic form $\check{\omega}_{BV} := \delta  \check{\alpha}_{NG} + \delta \check{\alpha}_{BV} $ is:
\begin{gather}
\check{\omega}_{BV} = \delta  \check{\alpha}_{NG}  + \int_{\partial M} \delta X^\dag_\mu  \zeta^n \delta X^\mu - X^\dag_\mu  \delta \zeta^n \delta X^\mu +\notag \\
-  \int_{\partial M} \delta \zeta^\dag_\alpha \zeta^n \delta \zeta^\alpha + \zeta^\dag_\alpha \delta \zeta^n \delta \zeta^\alpha\label{28/05A11}
\end{gather}
In this case the kernel presents a singular behaviour, differently from the case of Polyakov theory. Some equations are in fact not solvable in a general way, which makes the kernel not regular. To see this, let us write the defining equations $\iota_{\mathbb{X}}\check{\omega}_{BV}=0$, where $\mathbb{X}$ is a generic vector field on $\check{\mathcal{F}}_{NG}$:
\begin{subequations}\begin{gather} 
\delta \partial_n X^\nu\colon \qquad   g^{n n} (G_{\mu \nu} - \partial_{\alpha} X_\mu \partial^\alpha X_\nu ) (\mathbb{X}_X)^\mu    = 0\label{e:NGBVKernel1} \\
\delta X^\mu\colon \qquad  \bigg\{ \sqrt{\mathsf{g}}\, \bigg[ (g^{\lambda t} g^{n \alpha} - g^{\lambda n} g^{\alpha t} - g^{t n} g^{\alpha \lambda} ) \partial_\lambda X_\nu \partial_\alpha X_\mu + g^{n t} G_{\mu \nu} \bigg] \partial_t (\mathbb{X}_X)^\nu  + \notag\\
\sqrt{\mathsf{g}}\, \bigg[ (\cancel{g^{\lambda n} g^{n \alpha}} - \cancel{ g^{\lambda n} g^{\alpha n}} - g^{n n} g^{\alpha \lambda} ) \partial_\lambda X_\nu \partial_\alpha X_\mu + g^{n n} G_{\mu \nu} \bigg]  (\mathbb{X}_{\partial_n X})^\nu + \notag\\
\partial_t \bigg( \sqrt{\mathsf{g}}\, \bigg[ (g^{\lambda t} g^{n \alpha} - g^{\lambda n} g^{\alpha t} - g^{t n} g^{\alpha \lambda} ) \partial_\lambda X_\mu \partial_\alpha X_\nu + g^{n t} G_{\nu \mu} \bigg] (\mathbb{X}_X)^\nu  \bigg) +\notag\\
- (C_{\mu \nu} - C_{\nu \mu}) (\mathbb{X}_X)^\nu  + \frac{1}{\sqrt{\mathsf{g}}\,}\bigg( (\mathbb{X}_{X^\dag})_{\mu} \zeta^n +  X_\mu^\dag (\mathbb{X}_{\zeta})^{n} \bigg)\bigg\}   =0 \label{e:NGBVKernel2}\\
\delta X^\dag_\mu\colon \qquad \zeta^n (\mathbb{X}_{X})^{\mu}= 0 \label{e:NGBVKernel3}\\
\delta \zeta^\lambda\colon \qquad \delta_\lambda^n X^\dag_\mu (\mathbb{X}_X)^\mu  - (\mathbb{X}_{\zeta^\dag})_{\lambda} \zeta^n - \zeta_\lambda^\dag (\mathbb{X}_{\zeta})^{n} - \delta_\lambda^n \zeta^\dag_\alpha (\mathbb{X}_{\zeta})^{\alpha}   = 0 \label{e:NGBVKernel5}\\
\delta \zeta^\dag_\alpha\colon \qquad  \zeta^n (\mathbb{X}_{\zeta})^{\alpha} = 0.\label{e:NGBVKernel6}
\end{gather}\end{subequations}
Comparing with Polyakov's kernel Equations \eqref{e:PBVker}, here we can see that Equation \eqref{e:NGBVKernel1} no longer impose that $(\mathbb{X}_X)^\mu $ vanishes, and there is no kernel equation to impose $(\mathbb{X}_{\zeta})^{\lambda}\propto \zeta^n$. The equations \eqref{e:NGBVKernel3} and \eqref{e:NGBVKernel6} are then not automatically satisfied and are singular. Let us take, for instance, Equation \eqref{e:NGBVKernel3}:
\begin{gather*} 
\zeta^n (\mathbb{X}_{X})^{\mu}  = 0. 
\end{gather*}
It can be satisfied, if:
\[  
\begin{cases} (\mathbb{X}_{X})^{\mu} = 0\qquad \text{or}\ (\mathbb{X}_{X})^{\mu}\propto \zeta^n , \ {} & \zeta^n \neq 0 \\ (\mathbb{X}_{X})^{\mu} = \text{any}, & \zeta^n=0 \end{cases}
\]
These conditions are clearly not regular, meaning that the kernel is not a smooth subbundle of $T\check{\mathcal{F}}_{NG}$, thus obstructing the pre-symplectic reduction, and the definition of the smooth manifold of BFV fields $\check{\mathcal{F}}_{NG}/\mathrm{ker}(\check{\omega}_{BV}^\sharp)$. This concludes the proof.
\end{proof}

\section{Conclusions}
Both Nambu--Goto and Polyakov theories describe the motion of the bosonic string with a given background metric. They are equivalent at classical level, since they yield the same moduli space of Euler-Lagrange equations. On the other hand, the results of \cite{BRZ} hint to the fact that the two theories may no longer be related if we adopt a finer notion of equivalence. 

In this paper we compared the two theories when defined on a spacetime manifold (a worldsheet) with boundary. We first considered the classical models, computing and comparing their reduced phase spaces, following the approach introduced by Kijowski and Tulczijew \cite{KT1979}. In Theorem \ref{THM:PolyakovRPS} we gave an explicit symplectic presentation of the Reduced Phase Space of Polyakov theory for any target Lorentzian manifold, and in Theorem \ref{thm:NGRPS} we showed how the reduced phase space for Nambu--Goto theory coincides with that of Polyakov theory (after reduction).

We then analysed Polyakov and Nambu--Goto theories in the BV-BFV framework. We found an obstruction in the construction of the BV-BFV structure of the Nambu--Goto theory, which is then not $1$-extendable (Theorem \ref{thm:NGNOGO}), while no obstruction was found in the case of the Polyakov theory, for which the BV-BFV structure was derived in Theorem \ref{thm:BVBFVPolyakov}.

This result suggests that the two string theories we analysed, albeit classically equivalent, differ when a more stringent notion of equivalence is employed. Since the extendability of a BV theory to a BV-BFV theory on a manifold with boundary is a necessary requirement for quantisation with boundary, we conclude that this result suggests that Nambu--Goto theory is not a fully satisfactory description of the bosonic string.

This result strengthens the observation of \cite[Remark 7.3]{BRZ}, where the classical and quantum BV cohomologies of Nambu--Goto action have been computed, and shown to differ from the calculations of \cite{BRANDTPolyakov}, which implies that the observable content of the two models might differ.

One can take this result as a two-dimensional extension of the observations presented in \cite{CS2016a}, where the 1d analogues of Nambu--Goto theory (Jacobi theory) and of Polyakov theory (scalars coupled to 1d gravity) have been shown to be classically equivalent, and yet possess different extendability properties. Indeed, precisely like its two-dimensional NG analogue, the BV-BFV induction procedure is obstructed for Jacobi theory. Recently, the 1d case has been analysed further, to the result that despite the cohomologies of the respective Batalin--Vilkovisky--de Rham complexes are isomorphic (as predicted by \cite{BBH,Henn}), the existence of a BV-BFV pair is a stricter requirement that is not preserved by the BV-equivalence. With those considerations, our result directly shows the incompatibility of the BV and BFV structures for Nambu--Goto theory, marking a difference with Polyakov theory, regardless of the behaviour of their BV-cohomologies.

Another noteworthy scenario that presents a similar discrepancy is provided by General Relativity in dimension $d\geq 4$, where the two classically-equivalent metric and coframe formulations (Einstein--Hilbert and Eistein--Palatini--Cartan) have different extendability properties \cite{CS2016b,CS2017}, despite having equivalent reduced phase spaces \cite[Theorem 4.25]{CS2019}. 

This phenomenon appears to be linked to diffeomorphism symmetry, and suggest that certain classically-equivalent formulations of a given physical theory might be more suitable for quantisation with boundary. The next step in this program is then to proceed with the BV-BFV quantisation of Polyakov theory, following \cite{CMR2,CS2016a}. We will address this question elsewhere.

\appendix

\section{Nondegeneracy condition}\label{AppendixLightlike}

Throughout this paper we considered the case $h^{n n}\neq 0$ without justifying the choice. Now we want to show that if $h^{n n}=0$, we are considering a case where the boundary is light-like, in which case the treatment should be different from the beginning. We will use the non-rescaled metric since caluclations are easier, and rescaling does not change the arguments presented here. We look then for a diffeomorphism $\tilde{{x}}^\alpha ( {x}^\beta)$ that transforms the metric in the following way:

\begin{equation}\label{APP1}
\begin{pmatrix}
h_{nn} &  h_{nt}\\
h_{nt} & 0

\end{pmatrix} \ {}
\overset{\tilde{{x}}^\alpha ({x}^\beta )}{ \xrightarrow{\hspace*{1cm}} } \ {}
\begin{pmatrix}
0 & s\\
s & 0

\end{pmatrix}
\end{equation}

where $s$ is a function. The transformation can be written as:

\begin{equation}
h_{\alpha \beta} = \frac{\partial \tilde{{x}}^{\alpha '}}{\partial {x}^{\alpha}} \frac{\partial \tilde{{x}}^{\beta '}}{\partial {x}^\beta}
\begin{pmatrix}
0 & s\\
s & 0

\end{pmatrix}_{\alpha ' \beta '}
\end{equation}

which yields the set of equations:

\begin{equation}
  \begin{aligned}
              2 s \frac{\partial \tilde{{x}}^{t}}{\partial {x}^{t}} \frac{\partial \tilde{{x}}^{n}}{\partial {x}^t} &= 0 \\
               s ( \frac{\partial \tilde{{x}}^{t}}{\partial {x}^{t}} \frac{\partial \tilde{{x}}^{n}}{\partial {x}^n} + \frac{\partial \tilde{{x}}^{t}}{\partial {x}^{n}} \frac{\partial \tilde{{x}}^{n}}{\partial {x}^t}) &= h_{n t}\\
               2s \frac{\partial \tilde{{x}}^{t'}}{\partial {x}^{n}} \frac{\partial \tilde{{x}}^{n}}{\partial {x}^n} &= h_{n n}
  \end{aligned}
\end{equation}

From the first equation, either $\frac{\partial \tilde{{x}}^{t}}{\partial {x}^{t}}$ or $ \frac{\partial \tilde{{x}}^{n}}{\partial {x}^t}$ must vanish. Let us consider the case where $ \frac{\partial \tilde{{x}}^{n}}{\partial {x}^t}=0$, and the other case is analogous. We then have that
\begin{equation}
\tilde{{x}}^n := \tilde{{x}}^n ({x}^n)
\end{equation}
and:
\begin{equation}
d \tilde{{x}}^t = \frac{h_{n t}}{s \dot{\tilde{{x}}}^n} d{x}^t + \frac{h_{n n}}{2 s \dot{\tilde{{x}}}^n} d{x}^n =: \frac{1}{s} \omega.
\end{equation}

For any one-form $\omega'$ in 2 dimensions, there always exists a function $\lambda$ such that $\lambda \omega '$ is locally an exact form, thus locally it always exists a function $s$ such that $d \tilde{{x}}^t$ is an exact form, and there is always a diffeomorphism $ \tilde{{x}}: ({x}^n, {x}^t) \rightarrow (\tilde{{x}}^n ({x}^n), \tilde{{x}}^t ({x}^n, {x}^t) )$ that satisfies \eqref{APP1} (locally). This is due to the Frobenius theorem for foliations applied to the case of a 1-form in 2 dimensions.

\section{Manipulation of constraints}\label{A:contraintmanipulation}
In this section we want to find the explicit expression of the functions $\tau^{\alpha\beta}({\th})$, introduced in Part 3 of the Proof of Theorem \ref{THM:PolyakovRPS}, such that the resulting combination $\tau^{\alpha\beta}f_{\alpha\beta}$ is a basic function in $\check{F}$, i.e.\ it only depends on the reduced variables $(X^\mu, J_\mu)$.

In order to do this, let us rewrite ${\th}^{\alpha \beta} (\partial_\alpha X^\nu)G_{\mu\nu} (\partial_\beta X^\mu) $ as a function of the reduced variables
\begin{equation}\label{A19}
\begin{aligned}
{\th}^{\alpha \beta} (\partial_\alpha X^\nu) G_{\mu\nu} (\partial_\beta X^\mu) 
 & =  {\th}^{t t} (\partial_t X^\nu) G_{\mu\nu} (\partial_t X^\mu) + 2 {\th}^{n t} (\partial_t X^\nu) G_{\mu\nu}  ({\th}^{n n} )^{-1} (  J^\mu - (\partial_t X^\mu) {\th}^{n t} )\\
& + {\th}^{n n} ({\th}^{n n})^{-2} (  J_\mu - (\partial_t X^\nu) G_{\mu\nu} {\th}^{n t} ) (  J^\mu - (\partial_t X^\mu) {\th}^{n t} )\\&\\
& =(\partial_t X^\nu) G_{\mu\nu} (\partial_t X^\mu) \bigg({\th}^{t t} - 2 \frac{({\th}^{n t})^{2}}{{\th}^{n n}}+ \frac{({\th}^{n t})^{2}}{{\th}^{n n}} \bigg) \\
& +\chi (\partial_t X^\nu) G_{\mu\nu} J^\mu \bigg( 2  \frac{{\th}^{n t}}{{\th}^{n n}} - 2  \frac{{\th}^{n t}}{{\th}^{n n}} \bigg) + J_\mu J^\mu ({\th}^{n n})^{-1} \\&\\
& = ({\th}^{n n})^{-1} (\varsigma(\partial_t X^\nu) G_{\mu\nu} (\partial_t X^\mu) + J_\mu J^\mu )
\end{aligned}
\end{equation}\ {}
where we used: $\partial_n X^\mu = ({\th}^{n n})^{-1} ( \chi J^\mu - {\th}^{n t} (\partial_t X^\mu) )$ and that $({\th}^{t t} - 2 \frac{({\th}^{n t})^{2}}{{\th}^{n n}}+ \frac{({\th}^{n t})^{2}}{{\th}^{n n}}) = (\th^{nn})^{-1} \mathrm{det}(\th)=(\th^{nn})^{-1}\varsigma $. Thus:
\begin{equation}\label{A20}
\begin{aligned}
\tau^{\alpha \beta} {\th}_{\alpha \beta} \frac{ {\th}^{\lambda \rho}}{2} (\partial_\lambda X^\nu) G_{\mu\nu} (\partial_\rho X^\mu)&=( \tau^{t t} {\th}_{t t} + 2 \tau^{n t}{\th}_{n t} +\tau^{n n} {\th}_{n n} )\frac{1}{2 {\th}^{n n}} (\varsigma (\partial_t X^\nu) G_{\mu\nu} (\partial_t X^\mu) + J_\mu J^\mu )\\
& =(\tau^{t t} \frac{{\th}_{t t}}{2 {\th}^{n n}} +  \tau^{n t} \frac{{\th}_{n t}}{{\th}^{n n}} +\tau^{n n} \frac{{\th}_{n n}}{2 {\th}^{n n}} ) (\varsigma (\partial_t X^\nu) G_{\mu\nu} (\partial_t X^\mu) + J_\mu J^\mu )\\
&=(  \tau^{t t} \frac{1}{2} -  \tau^{n t} \frac{{\th}^{n t}}{{\th}^{n n}} +\tau^{n n} \frac{{\th}^{t t}}{2 {\th}^{n n}} ) ((\partial_t X^\nu) G_{\mu\nu} (\partial_t X^\mu) + \varsigma J_\mu J^\mu )
\end{aligned}
\end{equation}
Where we used the metric $\th$ to raise the indices two write the last equality, to the effect of moving $\varsigma$ in front of $J^\mu J_\mu$.
Now we compute the second term:

\begin{equation}\label{A21}
\begin{aligned}
\tau^{\alpha \beta} (\partial_\alpha X^\nu) G_{\mu\nu} (\partial_\beta X^\mu) & =\tau^{t t} (\partial_t X^\nu) G_{\mu\nu} (\partial_t X^\mu) + 2 \tau^{n t} (\partial_t X^\nu) G_{\mu\nu} ({\th}^{n n})^{-1} ( \chi J^\mu - {\th}^{n t} (\partial_t X^\mu) )\\
&+\tau^{n n} ({\th}^{n n})^{-2} ( \chi J_\mu - {\th}^{n t} (\partial_t X^\nu) G_{\mu\nu} ) ( \chi J^\mu - {\th}^{n t} (\partial_t X^\mu) ) \\&\\
&=(\partial_t X^\nu) G_{\mu\nu} (\partial_t X^\mu) \bigg(\tau^{t t} - 2 \tau^{n t} \frac{{\th}^{n t}}{{\th}^{n n}} + \tau^{n n} \frac{({\th}^{n t})^2}{({\th}^{n n})^2}\bigg)  \\
&+ \chi (\partial_t X^\nu) G_{\mu\nu} J^\mu \bigg(  2 \tau ^{n t} \frac{1}{{\th}^{n n}} -  2 \tau^{n n} \frac{{\th}^{n t}}{({\th}^{n n})^2} \bigg) \\
&+ J_\mu J^\mu  \tau^{n n} ({\th}^{n n})^{-2}.
\end{aligned}
\end{equation}
Putting the pieces together:
\begin{equation}\label{A22}
\begin{aligned}
&\tau^{\alpha \beta} f_{\alpha \beta} = \tau^{\alpha \beta} {\th}_{\alpha \beta} \frac{1}{2} {\th}^{\lambda \tau} (\partial_\lambda X^\nu) G_{\mu\nu} (\partial_\tau X^\mu) - \tau^{\alpha \beta} (\partial_\alpha X^\nu) G_{\mu\nu} (\partial_\beta X^\mu) =\\&\\
&(\partial_t X^\nu) G_{\mu\nu} (\partial_t X^\mu) \bigg( \tau^{t t} \frac{1}{2} -  \tau^{n t} \frac{{\th}^{n t}}{{\th}^{n n}} +\tau^{n n} \frac{{\th}^{t t}}{2 {\th}^{n n}} -(\tau^{t t} - 2 \tau^{n t} \frac{{\th}^{n t}}{{\th}^{n n}} + \tau^{n n} \frac{({\th}^{n t})^2}{({\th}^{n n})^2})\bigg) + \\
&\chi (\partial_t X^\nu) G_{\mu\nu} J^\mu \bigg( -  2 \tau ^{n t} \frac{1}{{\th}^{n n}} +  2 \tau^{n n} \frac{{\th}^{n t}}{({\th}^{n n})^2} \bigg) +\\
&\varsigma J_\mu J^\mu  \bigg(  \tau^{t t} \frac{1}{2} -  \tau^{n t} \frac{{\th}^{n t}}{{\th}^{n n}} +\tau^{n n} \frac{{\th}^{t t}}{2 {\th}^{n n}} - \varsigma \tau^{n n} ({\th}^{n n})^{-2} \bigg) = \\&\\
&(\partial_t X^\nu) G_{\mu\nu} (\partial_t X^\mu) \bigg[ - \tau^{t t} \frac{1}{2} +  \tau^{n t} \frac{{\th}^{n t}}{{\th}^{n n}} +\tau^{n n} \bigg(\frac{{\th}^{t t}}{2 {\th}^{n n}} - \frac{({\th}^{n t})^2}{({\th}^{n n})^2}\bigg) \bigg] + \\
& \chi (\partial_t X^\nu) G_{\mu\nu} J^\mu \bigg[ -  2 \tau ^{n t} \frac{1}{{\th}^{n n}} +  2 \tau^{n n} \frac{{\th}^{n t}}{({\th}^{n n})^2} \bigg] +\\
&\varsigma J_\mu J^\mu  \bigg[  \tau^{t t} \frac{1}{2} -  \tau^{n t} \frac{{\th}^{n t}}{{\th}^{n n}} +\tau^{n n} \bigg( \frac{{\th}^{t t}}{2 {\th}^{n n}} -\varsigma\frac{1}{({\th}^{n n})^{2}}\bigg) \bigg].
\end{aligned}
\end{equation}
This expresses $\tau^{\alpha \beta} f_{\alpha \beta}$ in function of $({\th}_{\alpha \beta}, X^\mu, J_\mu)$.  We now compare it with Equation \eqref{A18} for $l^{\alpha \beta}$, which we report below for convenience:
$$
\tau^{\alpha \beta} ({\th}) f_{\alpha \beta} ({\th}, X^\mu, J_\mu) = l^{n n} J_\mu J^\mu + 2 l^{n t} (\partial_t X^\mu) J_\mu + l^{t t} (\partial_t X^\nu) G_{\mu\nu} (\partial_t X^\mu)
$$
so that:
\begin{equation}\label{21/04/A3}
\begin{aligned}
& l^{nt}= \chi\bigg( - \tau ^{n t} \frac{1}{{\th}^{n n}} + \tau^{n n} \frac{{\th}^{n t}}{({\th}^{n n})^2} \bigg)\\
&l^{t t} = - \tau^{t t} \frac{1}{2} +  \tau^{n t} \frac{{\th}^{n t}}{{\th}^{n n}} +\tau^{n n} \bigg(\frac{{\th}^{t t}}{2 {\th}^{n n}} - \frac{({\th}^{n t})^2}{({\th}^{n n})^2}\bigg)\\
& \varsigma l^{n n}= \tau^{t t} \frac{1}{2} -  \tau^{n t} \frac{{\th}^{n t}}{{\th}^{n n}} + \tau^{n n} \bigg( \frac{{\th}^{t t}}{2 {\th}^{n n}} - \varsigma \frac{1}{({\th}^{n n})^{2}}\bigg)
\end{aligned}
\end{equation}
The first two equations can be solved for $\tau^{n t}$ and $\tau^{t t}$. At this point, the third one yields a relation between $l_{n n}$ and $l_{t t}$: $l^{n n} = -\varsigma l^{ t t}$. The relations are:
\begin{equation}\label{A24}
\begin{aligned}
&\tau^{n n} = \tau^{n n} \frac{{\th}^{n n}}{{\th}^{n n}}\\
&\tau^{n t} = \tau^{n n} \frac{{\th}^{n t}}{{\th}^{n n}} -  \chi {\th}^{n n} l^{n t} \\
&\tau^{t t} =  \tau^{n n}\frac{{\th}^{t t}}{{\th}^{n n}} - 2 \chi  {\th}^{n t}  l^{n t} -2 l^{t t}
\end{aligned}
\end{equation}
The first term after the equal sign can be written as: $\frac{\tau^{n n}}{{\th}^{n n}} {\th}^{\alpha \beta} \propto  {\th}^{\alpha \beta}$. This term gives no contribution when contracted with $f_{\alpha \beta}$, since ${\th}^{\alpha \beta} f_{\alpha \beta} = 0$. \\
The relation $l^{n n} = -\varsigma l^{ t t}$ tells us that we can only obtain terms of this kind: $ l^{t t} ( (\partial_t X^\nu) G_{\mu\nu} (\partial_t X^\mu) - \varsigma J_\mu J^\mu ) + l^{n t} (\partial_t X^\nu) G_{\mu\nu} J^\mu$.  The resulting new constraints for Polyakov theory are given in Equation \eqref{eq:PolConstraints}.

\section{Nambu--Goto: lengthy algebraic manipulations}\label{AppendixLengthy}

In this appendix, we are going to show how eq. \eqref{A34} follows from eq. \eqref{A33} and \eqref{A35}. We rewrite here the second condition of the kernel \eqref{A36}:

\begin{equation}\label{e:NGKerRearranged}
\begin{aligned}
\delta X^\mu\colon \qquad \bigg\{ \sqrt{\mathsf{g}}\, \bigg[ (g^{\lambda t} g^{n \alpha} - g^{\lambda n} g^{\alpha t} - g&^{t n} g^{\alpha \lambda} ) \partial_\lambda X_\nu \partial_\alpha X_\mu + g^{n t} G_{\mu \nu} \bigg] \partial_t (\mathbb{X}_X)^\nu  + \\
\sqrt{\mathsf{g}}\, \bigg[ (\cancel{g^{\lambda n} g^{n \alpha}} - \cancel{ g^{\lambda n} g^{\alpha n}} - &g^{n n} g^{\alpha \lambda} ) \partial_\lambda X_\nu \partial_\alpha X_\mu + g^{n n} G_{\mu \nu} \bigg]  (\mathbb{X}_{\partial_n X})^\nu  + \\
\partial_t \bigg( \sqrt{\mathsf{g}}\, \bigg[ (g^{\lambda t} g^{n \alpha} - g^{\lambda n} g^{\alpha t} - g^{t n} &g^{\alpha \lambda} ) \partial_\lambda X_\mu \partial_\alpha X_\nu + g^{n t} G_{\nu \mu} \bigg] (\mathbb{X}_X)^\nu  \bigg) \bigg\} =\\
\delta X^\mu \sqrt{\mathsf{g}}\,(C&_{\mu \nu} - C_{\nu \mu}) (\mathbb{X}_X)^\nu 
\end{aligned}
\end{equation}
where $C_{\mu \nu}$ groups together the terms inside the square bracket in Equation \eqref{A33}.

We will see that  \eqref{e:NGKerRearranged} has no component on the plane tangent  to the surface defined by $X^\mu$ and thus all conditions are on the orthogonal part. This is in agreement with eq. \eqref{A34}, where the tangent part is free ($\beta_n \partial_n X^\mu + \beta_t \partial_t X^\mu$), while the orthogonal part is not. This will allow us to easily find a condition on $P^{\perp \mu}_{\nu}  (\mathbb{X}_{\partial_n X})^\nu $. In fact this term is exactly the second line of \eqref{e:NGKerRearranged}:

\begin{gather*}
g^{n n} P^{\perp \mu}_{\nu}  (\mathbb{X}_{\partial_n X})^\nu  = \bigg[ - g^{n n} g^{\alpha \lambda}  \partial_\lambda X_\nu \partial_\alpha X_\mu + g^{n n} G_{\mu \nu} \bigg]  (\mathbb{X}_{\partial_n X})^\nu 
\end{gather*}

The problem is thus reduced to writing the orthogonal projection of the remaining terms. Rearranging the second to last line of \eqref{e:NGKerRearranged} to have the $\mu$ and $\nu$ indexes in the same order as the previous lines, using $ P^\perp_{\mu \nu} : = G_{\mu \nu} - \partial^{\alpha} X_\mu \partial_\alpha X_\nu$ and $g^{\alpha \beta} \partial_\beta X_\mu =: \partial^\alpha X_\mu$, and recalling that $(\mathbb{X}_X)^\mu = \alpha_n \partial_n X^\mu + \alpha_t \partial_t X^\mu $, the left hand side of \eqref{e:NGKerRearranged} becomes:

\begin{equation}\label{A37}
\begin{aligned}
LHS = \bigg[ (\partial^t X_\nu \partial^n X_\mu - \partial^n X_\nu \partial^t X_\mu ) - g^{n t} P^\perp_{\mu \nu} \bigg] \partial_t (\mathbb{X}_X)^\nu  \sqrt{\mathsf{g}}\, - g^{n n} P^\perp_{\mu \nu} (\mathbb{X}_{\partial_n X})^\nu  \sqrt{\mathsf{g}}\, +\\
-\partial_t \bigg[ \sqrt{\mathsf{g}}\, (\partial^t X_\nu \partial^n X_\mu - \partial^n X_\nu \partial^t X_\mu) (\mathbb{X}_X)^\nu  \bigg] - \partial_t \bigg[\cancel{ g^{n t} P^\perp_{\mu \nu} (\mathbb{X}_X)^\nu  \sqrt{\mathsf{g}}\,} \bigg] =  \\
- \sqrt{\mathsf{g}}\, P^\perp_{\mu \nu} \bigg[ g^{n t} \partial_t (\mathbb{X}_X)^\nu  + g^{n n} (\mathbb{X}_{\partial_n X})^\nu  \bigg] - \partial_t \bigg[\sqrt{\mathsf{g}}\, (\partial^t X_\nu \partial^n X_\mu - \partial^n X_\nu \partial^t X_\mu) \bigg] (\mathbb{X}_X)^\nu  
\end{aligned}
\end{equation}
We want to show now that the second term in \eqref{A37} has the component on the tangent plane:

\begin{equation}\label{A38}
\begin{aligned}
P^{\parallel , \rho \mu} \partial_t \bigg[\sqrt{\mathsf{g}}\, (\partial^t X_\nu \partial^n X_\mu - \partial^n X_\nu \partial^t X_\mu) \bigg] (\mathbb{X}_X)^\nu  =\\
\frac{\sqrt{\mathsf{g}}\,}{2}(\alpha_t  \partial^n X^\rho - \alpha_n \partial^t X^\rho ) ( \partial_\lambda X^\mu \partial^\lambda X^\nu) (\partial_t G_{\mu \nu}),
\end{aligned}
\end{equation}
where $P^{\parallel , \rho \mu} := \partial^\alpha X^\rho \partial_\alpha X^\mu$. 
The term on the right hand side of \eqref{A38} vanishes in the case of a constant metric. We will see later that in the general case it gets canceled by the term on the right hand side of \eqref{e:NGKerRearranged}. This means that all tangent terms vanish, as we wanted to prove. 
Rearranging indexes and since from \eqref{A35}, $(\mathbb{X}_X)^\mu $ lies on the tangent plane, we can rewrite the left hand side of \eqref{A38} as:

\begin{gather}\label{A39}
P^{\parallel , \mu k} \partial_t \bigg[\sqrt{\mathsf{g}}\, (\partial^t X_k \partial^n X_w - \partial^n X_k \partial^t X_w)\bigg] P^{\parallel , w}_{\nu} (\mathbb{X}_X)^\nu 
\end{gather}
Let us then look at:

\begin{gather}
P^{\parallel , \mu k} \partial_t \bigg[\sqrt{\mathsf{g}}\, (\partial^t X_k \partial^n X_w - \partial^n X_k \partial^t X_w)\bigg] P^{\parallel , w}_{\nu}
\end{gather}
Let us calculate the internal term:

\begin{gather*}
\partial_t \bigg[\sqrt{\mathsf{g}}\, (\partial^t X_k \partial^n X_w - \partial^n X_k \partial^t X_w)\bigg] =\notag\\
\partial_t \bigg[\sqrt{\mathsf{g}}\, \bigg] (\partial^t X_k \partial^n X_w - \partial^n X_k \partial^t X_w) + (\sqrt{\mathsf{g}}\,) \partial_t \bigg[ (\partial^t X_k \partial^n X_w - \partial^n X_k \partial^t X_w)\bigg]=
\end{gather*}
\begin{gather*}
\sqrt{\mathsf{g}}\, g^{\rho \sigma} ( \partial_\rho X_\mu \partial_t \partial_\sigma X^\mu + \underline{ \frac{1}{2} \partial_\rho X^\mu \partial_\sigma X^\nu ( \partial_t G_{\mu \nu}) }) \ {} (\partial^t X_k \partial^n X_w - \partial^n X_k \partial^t X_w) +\notag\\
 \sqrt{\mathsf{g}}\, \bigg[ \partial_t \partial^t X_k \partial^n X_w + \partial^t X_k \partial_t \partial^n X_w - \partial_t \partial^n X_k \partial^t X_w - \partial^n X_k \partial_t \partial^t X_w \bigg] \propto
\end{gather*}
\begin{gather}
( \partial^\sigma X_\mu \partial_t \partial_\sigma X^\mu +  \underline{ \frac{1}{2} \partial^\sigma X^\mu \partial_\sigma X^\nu ( \partial_t G_{\mu \nu}) }) \ {} (\partial^t X_k \partial^n X_w - \partial^n X_k \partial^t X_w) +\notag\\
 \partial_t \partial^t X_k \partial^n X_w + \partial^t X_k \partial_t \partial^n X_w - \partial_t \partial^n X_k \partial^t X_w - \partial^n X_k \partial_t \partial^t X_w \label{A40}
\end{gather}
where in the last step the expression was multiplied by $(\sqrt{\mathsf{g}}\,)^{-1}$. The underlined term gives the term on the right hand side of \eqref{A38}, while the projection of the rest on the tangent plane vanishes. Removing the underlined part and applying the projectors:
\begin{gather*}
 P^{\parallel , \mu k} \bigg[ \bigg(\partial_t - \underline{ \frac{1}{2} \partial^\sigma X^\mu \partial_\sigma X^\nu ( \partial_t G_{\mu \nu}) }\bigg)  \bigg(\sqrt{\mathsf{g}}\, (\partial^t X_k \partial^n X_w - \partial^n X_k \partial^t X_w)\bigg) \bigg] P^{\parallel , w}_{\nu} =
\end{gather*}
\begin{gather*}
 \partial^\sigma X_k \partial_t \partial_\sigma X^k \ {} (\partial^t X^\mu \partial^n X_\nu - \partial^n X^\mu \partial^t X_\nu) +\\
+\partial^\alpha X^\mu \partial_\alpha X^k \partial_t \partial^t X_k \partial^n X_\nu + \partial^t X^\mu \partial_t \partial^n X_k \partial_\beta X^k \partial^\beta X_\nu +\\
-\partial^\alpha X^\mu \partial_\alpha X^k \partial_t \partial^n X_k \partial^t X_\nu - \partial^n X^\mu \partial_t \partial^t X_k \partial_\beta X^k \partial^\beta X_\nu=
\end{gather*}
\begin{gather}
 \partial^\sigma X_k \partial_t \partial_\sigma X^k \ {} (\partial^t X^\mu \partial^n X_\nu - \partial^n X^\mu \partial^t X_\nu) +\notag\\
\underbrace{\partial^n X^\mu \partial_n X^k \partial_t \partial^t X_k \partial^n X_\nu}_{A} + \partial^t X^\mu \partial_t X^k \partial_t \partial^t X_k \partial^n X_\nu+\notag\\
+\partial^t X^\mu \partial_t \partial^n X_k \partial_n X^k \partial^n X_\nu + \underbrace{\partial^t X^\mu \partial_t \partial^n X_k \partial_t X^k \partial^t X_\nu}_B+\notag\\
- \partial^n X^\mu \partial_n X^k \partial_t \partial^n X_k \partial^t X_\nu - \underbrace{\partial^t X^\mu \partial_t X^k \partial_t \partial^n X_k \partial^t X_\nu}_B + \notag\\
- \underbrace{\partial^n X^\mu \partial_t \partial^t X_k \partial_n X^k \partial^n X_\nu}_{A} - \partial^n X^\mu \partial_t \partial^t X_k \partial_t X^k \partial^t X_\nu \label{A41}
\end{gather}
The terms marked $A$ and $B$ cancel each other. We thus remain with:
\begin{gather}
 \partial^\sigma X_k \partial_t \partial_\sigma X^k \ {} (\partial^t X^\mu \partial^n X_\nu - \partial^n X^\mu \partial^t X_\nu) +\notag\\
\partial^t X^\mu \partial_t X^k \partial_t \partial^t X_k \partial^n X_\nu + \partial^t X^\mu \partial_t \partial^n X_k \partial_n X^k \partial^n X_\nu \label{A42}\\
- \partial^n X^\mu \partial_n X^k \partial_t \partial^n X_k \partial^t X_\nu - \partial^n X^\mu \partial_t \partial^t X_k \partial_t X^k \partial^t X_\nu\notag
\end{gather}
We can rearrange the terms in the second and third line and then use the Leibniz rule:

\begin{gather*}
 \partial^\sigma X_k \partial_t \partial_\sigma X^k \ {} (\partial^t X^\mu \partial^n X_\nu - \partial^n X^\mu \partial^t X_\nu) +\\
\partial^t X^\mu ( \partial_\sigma X^k \partial_t \partial^\sigma X_k ) \partial^n X_\nu - \partial^n X^\mu ( \partial_t \partial^\sigma X_k \partial_\sigma X^k ) \partial^t X_\nu = \\
 \partial^\sigma X_k \partial_t \partial_\sigma X^k \ {} (\partial^t X^\mu \partial^n X_\nu - \partial^n X^\mu \partial^t X_\nu) +\\
+  \partial_t \partial^\sigma X_k \partial_\sigma X^k \ {} (\partial^t X^\mu \partial^n X_\nu - \partial^n X^\mu \partial^t X_\nu) = \\
 \partial_t ( \partial^\sigma X_k  \partial_\sigma X^k ) \ {} (\partial^t X^\mu \partial^n X_\nu - \partial^n X^\mu \partial^t X_\nu) = \\
 \partial_t ( \delta^\sigma_{\sigma}) \ {} (\partial^t X^\mu \partial^n X_\nu - \partial^n X^\mu \partial^t X_\nu) \equiv 0
\end{gather*}
This means that the left hand side of \eqref{A36}, which we rewrote in \eqref{A37} as:

\begin{equation}\label{A43}
LHS = - \sqrt{\mathsf{g}}\, P^\perp_{\mu \nu} \bigg[ g^{n t} \partial_t (\mathbb{X}_X)^\nu  + g^{n n} (\mathbb{X}_{\partial_n X})^\nu  \bigg] - \partial_t \bigg[\sqrt{\mathsf{g}}\, (\partial^t X_\nu \partial^n X_\mu - \partial^n X_\nu \partial^t X_\mu) \bigg] (\mathbb{X}_X)^\nu  
\end{equation}
can be rewritten as:
\begin{gather}
LHS = - \sqrt{\mathsf{g}}\, P^{\perp, \mu}_{\nu} \bigg[ g^{n t} \partial_t (\mathbb{X}_X)^\nu  + g^{n n} (\mathbb{X}_{\partial_n X})^\nu  \bigg] + \notag\\
- P^{\perp, \mu k} \partial_t \bigg[\sqrt{\mathsf{g}}\, (\partial^t X_\nu \partial^n X_k - \partial^n X_\nu \partial^t X_k) \bigg] (\mathbb{X}_X)^\nu + \notag\\
- \frac{\sqrt{\mathsf{g}}\,}{2}(\alpha_t  \partial^n X^\rho - \alpha_n \partial^t X^\rho ) ( \partial_\lambda X^\mu \partial^\lambda X^\nu) (\partial_t G_{\mu \nu}) \label{A44}
\end{gather}
From \eqref{A40} we see that the second term simplifies considerably:

\begin{gather}
-P^{\perp, \mu k} \partial_t \bigg[\sqrt{\mathsf{g}}\, (\partial^t X_\nu \partial^n X_k - \partial^n X_\nu \partial^t X_k) \bigg] (\mathbb{X}_X)^\nu  =\notag
\end{gather}
\begin{gather}
\sqrt{\mathsf{g}}P^{\perp, \mu k}  \bigg[ \partial_t \partial^t X_k \partial^n X_\nu - \partial_t \partial^n X_k \partial^t X_\nu \bigg] (\mathbb{X}_X)^\nu =\notag
\end{gather}
\begin{gather}
\sqrt{\mathsf{g}}P^{\perp, \mu k}  \bigg[ \partial_t \partial^t X_k \partial^n X_\nu - \partial_t \partial^n X_k \partial^t X_\nu \bigg] ( \alpha_n \partial_n X^\nu + \alpha_t \partial_t X^\nu)=\notag
\end{gather}
\begin{gather}
\sqrt{\mathsf{g}}P^{\perp, \mu k}  \bigg[ \alpha_n  \partial_t \partial^t X_k - \alpha_t \partial_t \partial^n X_k \bigg]=\notag
\end{gather}
\begin{gather}
\sqrt{\mathsf{g}}P^{\perp, \mu k}  \bigg[ \alpha_n  \partial_t (g^{t \alpha}\partial_\alpha X_k) - \alpha_t \partial_t ( g^{n \alpha}\partial_\alpha X_k) \bigg]=\notag
\end{gather}
\begin{gather}
\sqrt{\mathsf{g}}P^{\perp, \mu k} \bigg[ \alpha_n  g^{t \alpha} \partial_t \partial_\alpha X_k - \alpha_t g^{n \alpha} \partial_t \partial_\alpha X_k \bigg]=\notag
\end{gather}
\begin{gather}
\sqrt{\mathsf{g}}P^{\perp, \mu k}  \bigg[ (g^{t t} \alpha_n - g^{n t} \alpha_t ) \partial_t \partial_t X_k +  (g^{n t} \alpha_n - g^{n n}\alpha_t )\partial_t \partial_n X_k \bigg]=\notag
\end{gather}
\begin{gather}
\sqrt{\mathsf{g}}P^{\perp, \mu}_{\ {} k}  \bigg[ (g^{t t} \alpha_n - g^{n t} \alpha_t ) \partial_t \partial_t X^k +  (g^{n t} \alpha_n - g^{n n}\alpha_t )\partial_t \partial_n X^k \bigg]+\notag\\
\sqrt{\mathsf{g}}P^{\perp, \mu}_{\ {} k'} G^{k' w} \dot{G}_{w k} \bigg[ (g^{t t} \alpha_n - g^{n t} \alpha_t ) \partial_t X^k +  (g^{n t} \alpha_n - g^{n n}\alpha_t )\partial_n X^k \bigg]\label{A45}
\end{gather}
We can also simplify the first part of the first term in \eqref{A44}:

\begin{equation}\label{A46}
\begin{aligned}
-P^{\perp, \mu}_{\nu} \partial_t (\mathbb{X}_X)^\nu  =-P^{\perp, \mu}_{\nu} g^{n t} &\partial_t ( \alpha_n \partial_n X^\nu + \alpha_t \partial_t X^\nu)=\\
-P^{\perp, \mu}_{\nu} g^{n t}  ( \alpha_n \partial_t \partial_n X&^\nu + \alpha_t \partial_t \partial_t X^\nu)
\end{aligned}
\end{equation}
Putting the pieces together \eqref{A44} becomes:

\begin{gather}
- \sqrt{\mathsf{g}}\, g^{n n} P^{\perp, \mu}_{\nu} (\mathbb{X}_{\partial_n X})^\nu  + \sqrt{\mathsf{g}}\,  P^{\perp, \mu}_{\nu} \bigg[ (g^{t t} \alpha_n - 2 g^{n t} \alpha_t ) \partial_t \partial_t X^\nu - g^{n n}\alpha_t \partial_t \partial_n X^\nu \bigg] +\notag\\
\sqrt{\mathsf{g}}\, P^{\perp, \mu}_{\nu} G^{\nu w} \dot{G}_{w k} \bigg[ (g^{t t} \alpha_n - g^{n t} \alpha_t ) \partial_t X^k +  (g^{n t} \alpha_n - g^{n n}\alpha_t )\partial_n X^k \bigg] +\notag \\
- \frac{\sqrt{\mathsf{g}}\,}{2}(\alpha_t  \partial^n X^\rho - \alpha_n \partial^t X^\rho ) ( \partial_\lambda X^\mu \partial^\lambda X^\nu) (\partial_t G_{\mu \nu}) \label{A47}
\end{gather}
and if the metric is constant the right hand side of \eqref{A36}  vanishes (which means that \eqref{A47}$=0$ and that its third row  identically vanishes) and we can write the kernel fields as:

\begin{gather}
(\mathbb{X}_X)^\mu  = \alpha_n \partial_n X^\mu + \alpha_t \partial_t X^\mu \label{A48oneBis}\notag \\
(\mathbb{X}_{\partial_n X})^\mu  = \beta_n \partial_n X^\mu + \beta_t \partial_t X^\mu +(g^{n n})^{-1} P^{\perp, \mu}_{\nu}  \bigg[   (g^{t t} \alpha_n - 2 g^{n t} \alpha_t ) \partial_t \partial_t X^\nu - g^{n n}\alpha_t \partial_t \partial_n X^\nu \bigg]  \notag\\
+ (g^{n n})^{-1} P^{\perp, \mu}_{\nu} G^{\nu w} \dot{G}_{w k} \bigg[ (g^{t t} \alpha_n - g^{n t} \alpha_t ) \partial_t X^k +  (g^{n t} \alpha_n - g^{n n}\alpha_t )\partial_n X^k \bigg] \label{A48twoBis}
\end{gather}
with $\{ \alpha_n, \alpha_t, \beta_n, \beta_t \}$ free parameters. The presence of $\beta_n$ and $\beta_t$ is due to the fact that there is no condition on the part of $(\mathbb X)_{\partial_n X^\mu}$ annihilated by $P^{\perp, \mu}_{\nu}$.

In the case of a general metric, $C_{\mu \nu}$ doesn't vanish, but we have \eqref{A47} $= \sqrt{\mathsf{g}}\, (C_{\mu \sigma} - C_{\sigma \mu})(\mathbb X)_{X^\sigma} $. If the right hand side of \eqref{e:NGKerRearranged} had a component parallel to the tangent plane that did not exactly cancel the third line of \eqref{A47}, there would be conditions over $\alpha_n$ and $\alpha_t$, which would mean that the kernel of $\omega$ could have different dimensions for different target metrics $G_{\mu \nu}$. We are going now to calculate the component of $(C_{\mu \sigma} - C_{\sigma \mu}) (\mathbb X)_{X^\sigma}$ parallel to the tangent plane. Let us observe that $C_{\sigma \mu} \partial_\beta X^\mu$ can be rewritten in a simplified way:

\begin{gather}
\bigg[ \frac{1}{2}\bigg(g^{\lambda \rho} \partial^n X^{\mu '} - g^{\lambda n} \partial^\rho X^{\mu '} - g^{\rho n} \partial^\lambda X^{\mu '} \bigg) \partial_\lambda X^{\nu '} \partial_\alpha X^\nu \frac{\partial G_{\nu ' \nu}}{\partial X^\sigma} G_{\mu ' \mu } \notag\\
+ \partial^n X^{\mu '} \frac{\partial G_{\mu ' \mu}}{\partial X^\sigma} \bigg] \partial_\beta X^\mu=\notag\\
 \frac{1}{2}\bigg(\partial_\lambda X^{\nu '} \partial^\lambda X^\nu \delta^{n}_\beta - 2 \partial^n X^{\nu '} \partial_\beta X^\nu \bigg)  \frac{\partial G_{\nu ' \nu}}{\partial X^\sigma} + \partial^n X^{\mu '}\partial_\beta X^\mu \frac{\partial G_{\mu ' \mu}}{\partial X^\sigma}=\notag\\
\frac{\delta^{n}_{\ {}\beta}}{2}\bigg( \partial_\lambda X^{\nu '} \partial^\lambda X^\nu \bigg)  \frac{\partial G_{\nu ' \nu}}{\partial X^\sigma} \label{A48B1}
\end{gather}
The component of $ (C_{\mu \sigma} - C_{\sigma \mu}) (\mathbb X)_{X^\sigma} $ parallel to the tangent plane is:

\begin{gather}\label{A48B2}
 P^{\parallel, \mu}_{\nu} \bigg(C_{\mu \sigma} - C_{\sigma \mu}\bigg) (\mathbb X)_{X^\sigma} 
\end{gather}
condition that can be developed using $P^{\parallel, \mu}_{\nu}:= \partial^\lambda X_\nu \partial_\lambda X^\mu$:

\begin{gather}
\partial^\lambda X_\nu \partial_\lambda X^\mu \bigg(C_{\mu \sigma} - C_{\sigma \mu}\bigg) \bigg(\alpha_n \partial_n X^\sigma + \alpha_t \partial_t X^\sigma\bigg) =\notag\\
\bigg[(\partial^\lambda X_\nu \partial_\lambda X^\sigma) \alpha_n - \partial^n X_\nu  (\alpha_n \partial_n X^\sigma + \alpha_t \partial_t X^\sigma) \bigg] \frac{1}{2}\partial_\rho X^{\nu '} \partial^\rho X^\nu \frac{\partial G_{\nu ' \nu}}{\partial X^\sigma}=\notag\\
\bigg[\alpha_n \partial^t X_\nu \partial_t X^\sigma - \alpha_t \partial^n X_\nu \partial_t X^\sigma \bigg]  \frac{1}{2}\partial_\rho X^{\nu '} \partial^\rho X^\nu \frac{\partial G_{\nu ' \nu}}{\partial X^\sigma}=\notag\\
\bigg[\alpha_n \partial^t X_\nu  - \alpha_t \partial^n X_\nu \bigg]   \frac{1}{2}\partial_\rho X^{\nu '} \partial^\rho X^\nu (\partial_t G_{\nu ' \nu}) \label{A48B3}
\end{gather}
This term cancels the third line of \eqref{A47}. We thus see that all the terms in \eqref{A36} have no component parallel to the tangent plane.
Let us now calculate the component of  $ (C_{\mu \sigma} - C_{\sigma \mu}) (\mathbb X)_{X^\sigma} $ orthogonal to the tangent plane:

\begin{gather}
P^{\perp, \mu}_{\nu}\bigg(C_{\mu \sigma} - C_{\sigma \mu}\bigg) \bigg(\alpha_n \partial_n X^\sigma + \alpha_t \partial_t X^\sigma\bigg) =\notag\\
 \alpha_n P^{\perp, \sigma}_{\nu} \frac{1}{2}\partial_\rho X^{\nu '} \partial^\rho X^\nu \frac{\partial G_{\nu ' \nu}}{\partial X^\sigma} -  P^{\perp, \mu}_{\nu} \partial^n X^{\mu '}  \frac{\partial G_{\mu ' \mu}}{\partial X^\sigma} (\alpha_n \partial_n X^\sigma + \alpha_t \partial_t X^\sigma) =\notag\\
 P^{\perp, \mu}_{\nu} \bigg[ \alpha_n  \frac{1}{2}\partial_\rho X^{\nu '} \partial^\rho X^\nu \frac{\partial G_{\nu ' \nu}}{\partial X^\mu} -   \partial^n X^{\mu '}  \frac{\partial G_{\mu ' \mu}}{\partial X^\sigma} (\alpha_n \partial_n X^\sigma + \alpha_t \partial_t X^\sigma) \bigg] \label{A48B4}
\end{gather}
 We can finally write the expressions \eqref{A35} and \eqref{A48two} for $(\mathbb{X}_X)^\mu $ and $(\mathbb{X}_{\partial_n X})^\mu $ respectively as:
\begin{gather}
(\mathbb{X}_X)^\mu  = \alpha_n \partial_n X^\mu + \alpha_t \partial_t X^\mu \label{A48B5}\notag \\
(\mathbb{X}_{\partial_n X})^\mu  = \beta_n \partial_n X^\mu + \beta_t \partial_t X^\mu + (g^{n n})^{-1} P^{\perp, \mu}_{\nu} \bigg[ (g^{t t} \alpha_n - 2 g^{n t} \alpha_t ) \partial_t \partial_t X^\nu -g^{n n} \alpha_t \partial_t \partial_n X^\nu \bigg]  \notag\\
+ (g^{n n})^{-1} P^{\perp, \mu}_{\nu} G^{\nu w} \dot{G}_{w k} \bigg[ (g^{t t} \alpha_n - g^{n t} \alpha_t ) \partial_t X^k +  (g^{n t} \alpha_n - g^{n n}\alpha_t )\partial_n X^k \bigg] +\notag\\
- (g^{n n})^{-1} P^{\perp, \mu}_{\nu} \bigg[ \alpha_n  \frac{1}{2}\partial_\rho X^{\nu '} \partial^\rho X^\nu \frac{\partial G_{\nu ' \nu}}{\partial X^\mu} -   \partial^n X^{\mu '}  \frac{\partial G_{\mu ' \mu}}{\partial X^\sigma} (\alpha_n \partial_n X^\sigma + \alpha_t \partial_t X^\sigma) \bigg] \label{A48B6}
\end{gather}


\section{The boundary action $S^\partial_P$}\label{AppendixBoundaryAction}
We need the  BFV action $S^\partial_P$ to complete the data comprising the BFV structure for Polyakov theory. We can calculate it using the fact that the modified classical master equation can be recast in the form: $\frac{1}{2}\iota_{Q_M} \iota_{Q_M} \Omega_M = \pi^* S^\partial_P$. $Q_M$ acts on the anti-ghosts and anti-fields as:
\begin{gather*}
Q_MX^\dag_k =  \partial_\beta \bigg(-2 {\th}^{\alpha \beta} \partial_\alpha X^\mu G_{\mu \nu}\bigg) + {\th}^{\alpha \beta} \partial_\alpha X^\mu \partial_\beta X^\nu \frac{\partial G_{\mu \nu}}{\partial X^k} - \partial_\beta \bigg( X^\dag_k \zeta^\beta \bigg)\\
Q_M \zeta^\dag_\lambda = - X^\dag_\mu \partial_\lambda X^\mu + \partial_\alpha \bigg( 2 {\th}^{\dag \alpha \beta} {\th}_{\lambda \beta} \bigg) - {\th}^{\dag \alpha \beta} \partial_\lambda {\th}_{\alpha \beta} + \zeta^\dag_\alpha \partial_\lambda \zeta^\alpha + \partial_\alpha \bigg( \zeta^\dag_\lambda \zeta^\alpha \bigg)\\
Q_M{\th}^{\dag  \rho \sigma} = 2 {\th}^{\dag \alpha \rho} \partial_\alpha \zeta^\sigma - \partial_\alpha \bigg( {\th}^{\dag  \rho \sigma} \zeta^\alpha \bigg) - {\th}^{\rho \alpha} {\th}^{\sigma \beta} \partial_\alpha X^\mu \partial_\beta X^\nu G_{\mu \nu} - {\th}^{\dag  \rho \sigma} \partial_\lambda \zeta^\lambda
\end{gather*}
and on the fields and ghosts as:

\begin{gather*}
Q_M X^\mu = \zeta^\lambda \partial_\lambda X^\mu\\
Q_M {\th}_{\alpha \beta} =   \partial_\alpha \zeta^{\lambda} {\th}_{\lambda \beta} + \partial_\beta \zeta^{\lambda} {\th}_{\lambda \alpha }  + \zeta^\lambda \partial_\lambda {\th}_{\alpha \beta} - \partial_\lambda \zeta^\lambda {\th}_{\alpha \beta} \\
Q_M \zeta^\alpha = \zeta^\lambda \partial_\lambda \zeta^\alpha
\end{gather*}
The graded symplectic form $\Omega_M$ is: $\Omega_M = \int_{M} \delta X^\mu \delta X^\dag_\mu + \delta {\th}^{\alpha \beta} \delta {\th}^\dag_{\alpha \beta} + \delta \zeta^\alpha \delta^\dag_\alpha$.  Let us now calculate (we drop the $M$ subscript) $\iota_Q \iota_Q \Omega_M $ and manipulate the result to obtain a total derivative:
\begin{gather*}
\iota_Q \iota_Q \Omega_M = \intl_M\zeta^\lambda \partial_\lambda X^\kappa \bigg[ \partial_\beta \bigg(-2 {\th}^{\alpha \beta} \partial_\alpha X^\mu G_{\mu \kappa}\bigg) + {\th}^{\alpha \beta} \partial_\alpha X^\mu \partial_\beta X^\nu \frac{\partial G_{\mu \nu}}{\partial X^\kappa} - \underbrace{\partial_\beta \bigg( X^\dag_\kappa \zeta^\beta \bigg)}_{A_1} \bigg]+\\
 \zeta^\rho \partial_\rho \zeta^\lambda \bigg[ \underbrace{- X^\dag_\mu \partial_\lambda X^\mu}_{A_2} + \partial_\alpha \bigg( 2 {\th}^{\dag \alpha \beta} {\th}_{\lambda \beta} \bigg) - {\th}^{\dag \alpha \beta} \partial_\lambda {\th}_{\alpha \beta} + \underbrace{ \zeta^\dag_\alpha \partial_\lambda \zeta^\alpha + \partial_\alpha \bigg( \zeta^\dag_\lambda \zeta^\alpha \bigg)}_{B}\bigg] +\\
 \bigg[ \partial_\rho \zeta^{\lambda} {\th}_{\lambda \sigma} + \partial_\sigma \zeta^{\lambda} {\th}_{\lambda \rho }  + \zeta^\lambda \partial_\lambda {\th}_{\rho \sigma} - \partial_\lambda \zeta^\lambda {\th}_{\rho \sigma} \bigg] \times\\
\bigg[ 2 {\th}^{\dag \alpha \rho} \partial_\alpha \zeta^\sigma - \partial_\alpha \bigg( {\th}^{\dag  \rho \sigma} \zeta^\alpha \bigg) - {\th}^{\rho \alpha} {\th}^{\sigma \beta} \partial_\alpha X^\mu \partial_\beta X^\nu G_{\mu \nu} - {\th}^{\dag  \rho \sigma} \partial_\lambda \zeta^\lambda\bigg]
\end{gather*}
The terms marked as $A_1$, $A_2$ and $B$ are equivalent respectively to:
\begin{gather*}
A_1+A_2 = - \partial_\beta \bigg[ \zeta^\lambda \partial_\lambda X^\mu X^\dag_\mu \zeta^\beta \bigg]\\
B = \partial_\beta \bigg[ \zeta^\rho \partial_\rho \zeta^\lambda \zeta^\dag_\lambda \zeta^\beta \bigg]
\end{gather*}
removing these terms and considering that:
\begin{gather*}
 \bigg[ \partial_\rho \zeta^{\lambda} {\th}_{\lambda \sigma} + \partial_\sigma \zeta^{\lambda} {\th}_{\lambda \rho }  + \zeta^\lambda \partial_\lambda {\th}_{\rho \sigma} \bigg] (- {\th}^{\dag  \rho \sigma} \partial_\lambda \zeta^\lambda)  +\\
 - \partial_\lambda \zeta^\lambda {\th}_{\rho \sigma} \bigg[ 2 {\th}^{\dag \alpha \rho} \partial_\alpha \zeta^\sigma - \partial_\alpha \bigg( {\th}^{\dag  \rho \sigma} \zeta^\alpha \bigg) - {\th}^{\rho \alpha} {\th}^{\sigma \beta}- {\th}^{\dag  \rho \sigma} \partial_\lambda \zeta^\lambda\bigg] =0
\end{gather*}
we are left with:
\begin{gather*}
\iota_Q \iota_Q \Omega_M = \intl_M\zeta^\lambda \partial_\lambda X^\kappa \bigg[ \partial_\beta \bigg(-2 {\th}^{\alpha \beta} \partial_\alpha X^\mu G_{\mu \kappa}\bigg) + {\th}^{\alpha \beta} \partial_\alpha X^\mu \partial_\beta X^\nu \frac{\partial G_{\mu \nu}}{\partial X^\kappa}  \bigg]+\\
 \zeta^\rho \partial_\rho \zeta^\lambda \bigg[ + \partial_\alpha \bigg( 2 {\th}^{\dag \alpha \beta} {\th}_{\lambda \beta} \bigg) - {\th}^{\dag \alpha \beta} \partial_\lambda {\th}_{\alpha \beta} \bigg] + \partial_\lambda \zeta^\lambda {\th}_{\rho \sigma}  {\th}^{\rho \alpha} {\th}^{\sigma \beta} \partial_\alpha X^\mu \partial_\beta X^\nu G_{\mu \nu}+ \\
 \bigg[ \partial_\rho \zeta^{\lambda} {\th}_{\lambda \sigma} + \partial_\sigma \zeta^{\lambda} {\th}_{\lambda \rho }  + \zeta^\lambda \partial_\lambda {\th}_{\rho \sigma} \bigg] \times\\
\bigg[ 2 {\th}^{\dag \alpha \rho} \partial_\alpha \zeta^\sigma - \partial_\alpha \bigg( {\th}^{\dag  \rho \sigma} \zeta^\alpha \bigg) - {\th}^{\rho \alpha} {\th}^{\sigma \beta} \partial_\alpha X^\mu \partial_\beta X^\nu G_{\mu \nu}\bigg].
\end{gather*}
We will consider separately the terms that have $X^\mu$ or its derivatives and the terms that do not. The ones that do are:
\begin{gather*}
\zeta^\lambda \partial_\lambda X^\kappa \bigg[ \partial_\beta \bigg(-2 {\th}^{\alpha \beta} \partial_\alpha X^\mu G_{\mu \kappa}\bigg) + {\th}^{\alpha \beta} \partial_\alpha X^\mu \partial_\beta X^\nu \frac{\partial G_{\mu \nu}}{\partial X^\kappa}  \bigg]+\\
 + \partial_\lambda \zeta^\lambda {\th}_{\rho \sigma}  {\th}^{\rho \alpha} {\th}^{\sigma \beta} \partial_\alpha X^\mu \partial_\beta X^\nu G_{\mu \nu}+ \\
- \bigg[ \partial_\rho \zeta^{\lambda} {\th}_{\lambda \sigma} + \partial_\sigma \zeta^{\lambda} {\th}_{\lambda \rho }  +  \zeta^\lambda \partial_\lambda {\th}_{\rho \sigma}\bigg]  {\th}^{\rho \alpha} {\th}^{\sigma \beta} \partial_\alpha X^\mu \partial_\beta X^\nu G_{\mu \nu} =\\
\partial_\beta \bigg[ - 2 \zeta^\lambda \partial_\lambda X^\nu {\th}^{\alpha \beta} \partial_\alpha X^\mu G_{\mu \nu} + \zeta^\beta {\th}^{\alpha \lambda} \partial_\alpha X^\mu \partial_\lambda X^\nu G_{\mu \nu} \bigg]
\end{gather*}
the remaining terms are:
\begin{gather*}
 \zeta^\rho \partial_\rho \zeta^\lambda \bigg[ + \partial_\alpha \bigg( 2 {\th}^{\dag \alpha \beta} {\th}_{\lambda \beta} \bigg) - {\th}^{\dag \alpha \beta} \partial_\lambda {\th}_{\alpha \beta} \bigg] + \\
 \bigg[ \partial_\rho \zeta^{\lambda} {\th}_{\lambda \sigma} + \partial_\sigma \zeta^{\lambda} {\th}_{\lambda \rho }  + \zeta^\lambda \partial_\lambda {\th}_{\rho \sigma} \bigg] \bigg[ 2 {\th}^{\dag \alpha \rho} \partial_\alpha \zeta^\sigma - \partial_\alpha \bigg( {\th}^{\dag  \rho \sigma} \zeta^\alpha \bigg)\bigg] =\\
\partial_\beta \bigg[  - \zeta^\lambda \partial_\lambda {\th}_{\rho \sigma} {\th}^{\dag  \sigma \rho} \zeta^\beta + 2 \zeta^\lambda \partial_\lambda \zeta^\sigma {\th}^{\dag  \beta \rho} {\th}_{\sigma \rho} - 2 \partial_\sigma \zeta^\lambda {\th}_{\lambda \rho} {\th}^{\dag  \rho \sigma} \zeta^\beta \bigg].
\end{gather*}

Putting all the pieces together we explicitly remain with the integral of a total derivative (we assume integration against a volume form on $M$):
\begin{gather*}
\iota_Q \iota_Q \Omega_M=
    \int_{M} \partial_\beta \bigg[ - \zeta^\lambda \partial_\lambda X^\mu X^\dag_\mu \zeta^\beta  - 2 \zeta^\lambda \partial_\lambda X^\nu {\th}^{\alpha \beta} \partial_\alpha X^\mu G_{\mu \nu} + \zeta^\beta {\th}^{\alpha \lambda} \partial_\alpha X^\mu \partial_\lambda X^\nu G_{\mu \nu} \bigg] + \\
\int_{M} \partial_\beta \bigg[ \zeta^\rho \partial_\rho \zeta^\lambda \zeta^\dag_\lambda \zeta^\beta - \zeta^\lambda \partial_\lambda {\th}_{\rho \sigma} {\th}^{\dag  \sigma \rho} \zeta^\beta + 2 \zeta^\lambda \partial_\lambda \zeta^\sigma {\th}^{\dag  \beta \rho} {\th}_{\sigma \rho} - 2 \partial_\sigma \zeta^\lambda {\th}_{\lambda \rho} {\th}^{\dag  \rho \sigma} \zeta^\beta \bigg]
\end{gather*}
and using Stoke's theorem:
\begin{gather*}
\iota_Q \iota_Q \Omega_M= \int_{\partial M}- \zeta^\lambda \partial_\lambda X^\mu X^\dag_\mu \zeta^n  - 2 \zeta^\lambda \partial_\lambda X^\nu {\th}^{\alpha n} \partial_\alpha X^\mu G_{\mu \nu} + \zeta^n {\th}^{\alpha \lambda} \partial_\alpha X^\mu \partial_\lambda X^\nu G_{\mu \nu}  + \\
\int_{\partial M} \zeta^\rho \partial_\rho \zeta^\lambda \zeta^\dag_\lambda \zeta^n - \zeta^\lambda \partial_\lambda {\th}_{\rho \sigma} {\th}^{\dag  \sigma \rho} \zeta^n + 2 \zeta^\lambda \partial_\lambda \zeta^\sigma {\th}^{\dag n \rho} {\th}_{\sigma \rho} - 2 \partial_\sigma \zeta^\lambda {\th}_{\lambda \rho} {\th}^{\dag  \rho \sigma} \zeta^n
\end{gather*}
we can express the first and second terms, and the third term as functions of the reduced variables:
\begin{gather*}
- \zeta^\lambda \partial_\lambda X^\mu X^\dag_\mu \zeta^n  - 2 \zeta^\lambda \partial_\lambda X^\nu {\th}^{\alpha n} \partial_\alpha X^\mu G_{\mu \nu} = -2 \bigg[ \sigma^t \partial_t X^\mu J_\mu + \sigma^n J_\mu J^\mu \bigg]\\
 \zeta^n {\th}^{\alpha \lambda} \partial_\alpha X^\mu \partial_\lambda X^\nu G_{\mu \nu} =  \sigma^n \bigg[ J_\mu J^\mu - \partial_t X^\mu \partial_t X^\nu G_{\mu \nu} \bigg]
\end{gather*}
while the second line can be expressed as:
\begin{gather*}
-2 \rho^{\dag n n} \sigma^t \partial_t \sigma^n  + 2 \rho^{\dag n t} \sigma^t \partial_t \sigma^t
\end{gather*}
and thus:
\begin{gather*}
\iota_Q \iota_Q \Omega_M= \int_{\partial M} -2  \sigma^t \partial_t X^\mu J_\mu - \sigma^n \bigg[ J_\mu J^\mu + \partial_t X^\mu \partial_t X^\nu G_{\mu \nu} \bigg] -2 \rho^{\dag n n} \sigma^t \partial_t \sigma^n  + 2 \rho^{\dag n t} \sigma^t \partial_t \sigma^t
\end{gather*}
which is the BFV boundary action of Equation \eqref{e:PolBoundaryAction}.

\printbibliography
\end{document}